\documentclass[a4paper,11pt]{article}
\usepackage{jheppub} 


\usepackage{jheppub}
\usepackage{amsfonts}
\usepackage{amsmath}
\usepackage{amssymb}
\usepackage{graphicx}
\usepackage{amsthm}
\usepackage[mathscr]{eucal}
\usepackage{bbold}
\usepackage{braket}
\usepackage{color}
\usepackage{colortbl} 
\usepackage{array} 
\usepackage{dsfont}
\usepackage{framed}
\usepackage{mathtools}
\usepackage{physics}
\usepackage[normalem]{ulem}
\usepackage{tensor}
\usepackage{thmtools}
\usepackage{thm-restate}
\usepackage{ulem}
\usepackage{tikz}
\usetikzlibrary{arrows.meta,arrows} 
\usepackage{centernot}
\usepackage{enumitem} 
\usepackage[a]{esvect}  
\usepackage{slashed}

\usepackage{yfonts}
\usepackage{float}
\usetikzlibrary{math} 
\usetikzlibrary{calc}		

\usepackage{subcaption}
\usepackage{float}
\usepackage{afterpage}
\captionsetup{font={sf,footnotesize},labelfont=footnotesize}
\captionsetup[sub]{font={sf,footnotesize},labelfont=footnotesize}

\usetikzlibrary{arrows}
\tikzset{elabelcolor/.style={color=blue} 
    vertex/.style={circle,draw,minimum size=1.5em},
    edge/.style={->,> = latex'}
}  
\usetikzlibrary{decorations.pathreplacing}

\usepackage{cleveref}

\usepackage[ruled,vlined]{algorithm2e}


\definecolor{THcolor}{rgb}{0.1,0.7,0.1}

\definecolor{VHcolor}{rgb}{0.7,0.3,0.9}

\definecolor{MRocolor}{rgb}{0.1,0,1}

\definecolor{gcolor}{RGB}{200,255,180}
\definecolor{ycolor}{RGB}{250,250,180}
\definecolor{rcolor}{RGB}{255,200,180}
\definecolor{bcolor}{RGB}{210,210,255}
\definecolor{llg}{RGB}{240,240,240}
\newcolumntype{g}{>{\columncolor{gcolor}}c}
\newcolumntype{y}{>{\columncolor{ycolor}}c}
\newcolumntype{r}{>{\columncolor{rcolor}}c}
\newcolumntype{b}{>{\columncolor{bcolor}}c}

\definecolor{colormacro}{rgb}{0.6,0.4,0.4}  
\definecolor{colormacromath}{rgb}{0.2,0.4,0.8}


\newtheorem{condition}{Condition}

\newtheorem{defi}{Definition}

\newtheorem{fcn}{Function}

\newtheorem{thm}{Theorem}


\newcommand{\N}{{\sf N}}
\newcommand{\psys}[1]{[\![#1]\!]}
\newcommand{\D}{\sf{D}}

\newcommand{\J}{\mathscr{J}}
\newcommand{\K}{\mathscr{K}}
\newcommand{\uI}{\underline{\mathscr{I}}}
\newcommand{\uJ}{\underline{\mathscr{J}}}
\newcommand{\uK}{\underline{\mathscr{K}}}
\newcommand{\ent}{{\sf S}}  
\newcommand{\mi}{{\sf I}}  
\newcommand{\face}{\mathscr{F}}   
\newcommand{\hyp}{\mathbb{H}}  

\newcommand{\mgs}{\mathcal{E}}  

\newcommand{\van}{\mathcal{V}}
\newcommand{\excluded}{\mathcal{U}^{(0)}}

\newcommand{\p}{\mathcal{P}} 
\newcommand{\lat}[1]{\mathcal{L}_{#1}} 
\newcommand{\ineq}{\mathcal{E}} 
\newcommand{\E}{{\sf{E}}} 
\newcommand{\cone}{\mathcal{C}}  
\newcommand{\regcone}{\widehat{\mathcal{C}}}  
\newcommand{\regface}{\widehat{\face}}  
\newcommand{\x}{\mathcal{X}} 
\newcommand{\y}{\mathcal{Y}} 
\newcommand{\ds}{\mathcal{D}} 
\newcommand{\irr}{\mathcal{U}} 
\newcommand{\orb}{\mathcal{M}}  
\newcommand{\stab}[1]{G^{{#1}}} 

\newcommand{\cl}{\text{cl}}
\newcommand{\lc}{\text{cl}_{\text{L}}}
\newcommand{\dc}{\text{cl}_{\text{D}}}
\newcommand{\fc}{\text{cl}_{\text{F}}}
\newcommand{\ldc}{\text{cl}_{\text{LD}}}
\newcommand{\fdc}{\text{cl}_{\text{FD}}}


\newcommand{\erstar}{\mathscr{R}_\star} 
\newcommand{\erst}{\mathscr{R}_{_\text{STG}}} 
\newcommand{\ert}{\mathscr{R}_{_\text{TG}}} 
\newcommand{\erh}{\mathscr{R}_{_\text{Hol}}} 
\newcommand{\erhg}{\mathscr{R}_{_\text{HyG}}} 
\newcommand{\ersta}{\mathscr{R}_{_\text{Sta}}} 
\newcommand{\erq}{\mathscr{R}_{_\text{QM}}} 
\newcommand{\erssa}{\mathscr{R}_{_\text{SSA}}} 
\newcommand{\erkc}{\mathscr{R}_{_\text{KC}}} 
\newcommand{\ersa}{\mathscr{R}_{_\text{SA}}} 

\SetKwComment{Comment}{** }{ **} 
\SetKwInOut{Input}{input}\SetKwInOut{Output}{output}


\title{Algorithmic construction of SSA-compatible extreme rays of the subadditivity cone and the $\N=6$ solution}

\author[a]{Temple He,}
\emailAdd{templehe@caltech.edu}

\author[b]{Veronika E. Hubeny,}
\emailAdd{veronika@physics.ucdavis.edu}

\author[b,c]{Massimiliano Rota}
\emailAdd{max.rota@bristol.ac.uk}

\affiliation[a]{Walter Burke Institute for Theoretical Physics \\ California Institute of Technology, Pasadena, CA 91125 USA}
\affiliation[b]{Center for Quantum Mathematics and Physics (QMAP)\\ 
Department of Physics \& Astronomy, University of California, Davis, CA 95616 USA}
\affiliation[c]{School of Mathematics, University of Bristol,
Woodland Road, Bristol, BS8 1UG UK}

\abstract{We compute the set of all extreme rays of the 6-party subadditivity cone that are compatible with strong subadditivity. In total, we identify 208 new (genuine 6-party) orbits, 52 of which violate at least one known holographic entropy inequality. For the remaining 156 orbits, which do not violate any such inequalities, we construct holographic graph models for 150 of them. For the final 6 orbits, it remains an open question whether they are holographic. Consistent with the strong form of the conjecture in \cite{Hernandez-Cuenca:2022pst}, 148 of these graph models are trees. However, 2 of the graphs contain a ``bulk cycle'', leaving open the question of whether equivalent models with tree topology exist, or if these extreme rays are counterexamples to the conjecture. The paper includes a detailed description of the algorithm used for the computation, which is presented in a general framework and can be applied to any situation involving a polyhedral cone defined by a set of linear inequalities and a partial order among them to find extreme rays corresponding to down-sets in this poset.}

\begin{document}
\hfill{CALT-TH 2024-049}

\maketitle

\section{Introduction}
\label{sec:intro}

The study of correlations is of central importance in quantum information theory, as it provides a comprehensive understanding of how information is distributed among different parts of a quantum system.  Correlations are often explored through various measures and related constraints, which help in understanding the fundamental limits and capabilities of quantum systems, guiding the development of quantum technologies, and deepening our theoretical understanding of quantum mechanics. This work will focus on constraints which take the form of linear inequalities involving the von Neumann entropy of subsystems of a multipartite quantum system.

While some of these inequalities are ``universal'', in the sense that they are obeyed by arbitrary quantum states, others are specific to particular classes of states, like stabilizers states \cite{Linden:2013kal,gross-walter}, or classical probability distributions \cite{641556,1998:Zhang,2002:Makarychev,2003:Zhang,2007:Matus}. These ``class dependent'' inequalities therefore serve as characterizations of the states in these classes.

A class which is of particular interest for this work is that of ``geometric states'' in quantum gravity, or more specifically, in the context of gauge/gravity duality \cite{Maldacena:1997re,Gubser:1998bc,Witten:1998qj}. These are the states in holographic CFTs which correspond to bulk classical geometries, for which the entropies of boundary spatial subsystems can be computed via the HRRT prescription \cite{Ryu:2006bv,Hubeny:2007xt}. The fact that these states obey entropy inequalities that are not universal was first noticed in \cite{Hayden:2011ag}, and then explored more systematically in \cite{Bao:2015bfa}. Since then, many new inequalities have been proven (for instance, see \cite{Czech:2022fzb,Hernandez-Cuenca:2023iqh,Czech:2024rco}) using the ``contraction map'' method of \cite{Bao:2015bfa}, but a deeper conceptual understanding of the structure of the resulting \textit{holographic entropy cone} (HEC) \cite{Bao:2015bfa} for an arbitrary number of parties is still lacking. It is reasonable to expect that such an understanding would provide new insights about the entanglement structure of geometric states, and ultimately shed new light on how the bulk geometry is ``encoded'' in the boundary theory.

Motivated by this problem, and based on the ideas from \cite{Hubeny:2018trv,Hubeny:2018ijt}, a deep question about fundamental possible ``patterns of dependences'' in quantum mechanics was raised in \cite{Hernandez-Cuenca:2019jpv}, which dubbed it the \textit{quantum marginal independence problem} (QMIP). In its simplest and most intuitive form, the QMIP asks the following question: Given the binary specification of presence versus absence of correlations between all pairs of disjoint subsystems, does there exist any quantum state realizing such a choice?
While this problem was formulated in \cite{Hernandez-Cuenca:2019jpv} for arbitrary quantum states, and it is seemingly unrelated to questions concerning the structure of the HEC, it was argued in \cite{Hernandez-Cuenca:2022pst} that the solution to the QMIP within the restricted class of geometric states (the \textit{holographic marginal independence problem} (HMIP)) would provide the ``essential data'' from which the HEC can be reconstructed. In fact, \cite{Hernandez-Cuenca:2022pst} even argued that it would be sufficient to only know the answer to the ``extremal'' version of the HMIP, where one only asks what are the possible situations where the number of independent subsystems is so large that demanding any additional independence would force the state to completely factorize. Formally, as we will review below, answering this question requires one to determine which extreme rays of the ``subadditivity cone'' can be realized by geometric states, or in practical terms, by the graph models of \cite{Bao:2015bfa}.

Since the HEC is known explicitly for any number of parties $\N\leq5$ \cite{Bao:2015bfa,Cuenca:2019uzx}, the conjectures in \cite{Hernandez-Cuenca:2022pst} can easily be verified to be true in this context. However, while this is already a rather non-trivial check, it was previously observed in \cite{He:2023cco} (using preliminary data from the present work) that the $\N=6$ case might have a much richer structure, and could therefore not only serve to provide additional highly non-trivial checks, but also to obtain more data that could guide intuition towards a characterization of the extremal marginal independence patterns which are holographic. With this as our main motivation, we develop in this work an algorithm to compute all the extreme rays of the $\N=6$ subadditivity cone that can possibly be realized by quantum states (not just geometric states in holography), or more precisely, all extreme rays that satisfy strong subadditivity. Below, we will review the main definitions and terminology, comment more technically on our motivations beyond the holographic framework, and outline the strategy that we will employ for the computation.

\paragraph{Formulation of the problem:}
Let us label the parties $1,2,\ldots\N$, and denote the purifier $0$, and let $\uJ,\uK$ denote an arbitrary pair of disjoint subsystems, i.e., $\uJ,\uK\subseteq \psys{\N}=\{0,1,\ldots,\N\}$ with $\uJ\cap\uK=\varnothing$. Suppose that for each such pair we only specify whether the two subsystems are correlated or not, without any further specification about the amount or other properties of such correlations. Under what conditions does there exist a \textit{pure} quantum state such that all these demands are satisfied?

As explained in \cite{Hernandez-Cuenca:2019jpv}, and then further elaborated in \cite{He:2022bmi}, 
one can conveniently avoid unnecessary consideration of many obviously unrealizable patterns by reformulating the QMIP in terms of faces of the \textit{subadditivity cone} (SAC). 
To define the SAC, one works in the space of subsystem von Neumann entropies. For an arbitrary $\N$-party density matrix $\rho_{\N}$, one first defines its \textit{entropy vector} to be 
\begin{equation}
\label{eq:entropy_vector}
    \vec{\ent}(\rho_{\N})=\{\ent_{\J},\; \forall\,\J\;\;\text{such that}\;\; \varnothing\neq\J\subseteq [\N]\} ,
\end{equation}
where components are ordered according some fixed convention (e.g., lexicographically), and $\ent_{\J}$ is the von Neumann entropy of the reduced density matrix on the $\J$ subsystem (notice the distinction between indices $\uJ\subseteq\psys{\N}$ and $\J\subseteq [\N]$). The vector space $\mathbb{R}^{\D}$ with $\D=2^{\N}-1$, where this vector lives, will be called \textit{$\N$-party entropy space}.\footnote{\,We will call an arbitrary element of this vector space an entropy vector, even if it is not necessarily the vector of entropies of a density matrix as defined in \eqref{eq:entropy_vector}.} The $\N$-party subadditivity cone SAC$_{\N}$ is then defined as the \textit{polyhedral} cone in $\N$-party entropy space specified by all instances of \textit{subadditivity} (SA), namely for any two disjoint nonempty subsystems $\uJ,\uK$,
\begin{equation}
\label{eq:sa}
    \ent_{\uJ}+\ent_{\uK}\geq \ent_{\uJ\uK} ,
\end{equation}
where we used the conventional compact notation $\ent_{\uJ\uK}$ for $\ent_{\uJ\cup\uK}$, and if $\uJ$ contains the purifier $0$, we implicitly replace $\ent_{\uJ}$ with the equivalent $\ent_{\uJ^c}$ (and likewise for $\uK$ and $\uJ\uK$).\footnote{\,We also implicitly assume $\ent_{\varnothing}$=0.}

Since \eqref{eq:sa} vanishes in quantum mechanics if and only if $\rho_{\uJ\uK} = \rho_{\uJ} \otimes \rho_{\uK}$, which means there are no correlations between $\uJ$ and $\uK$, the pattern of (in)dependences among the subsystems of an arbitrary purification of $\rho_{\N}$ (or equivalently, an arbitrary quantum state involving $\N+1$ parties) is completely captured by the unique \textit{face} $\face$ of the SAC$_{\N}$ for which 
\begin{equation}
\label{eq:pmi}
    \vec{\ent}(\rho_{\N})\in \text{int}(\face) ,
\end{equation}
with $\text{int}(\face)$ denoting the interior of $\face$ according to the standard topology of $\mathbb{R}^{\D}$.\footnote{\,Since any face $\face$ is specified by a set of SA instances which are saturated, if $\vec{\ent}(\rho_{\N})\in \partial\face$, then there is a lower-dimensional face $\face'\subset\face$ such that $\vec{\ent}(\rho_{\N})\in \text{int}(\face')$.
} 
The QMIP then asks for which faces of the SAC$_{\N}$ does there exist a density matrix $\rho_{\N}$ such that \eqref{eq:pmi} holds. A face for which such a density matrix exists will be said to be \textit{realizable}.\footnote{\,In some situations, it might be useful to slightly relax this definition of realizability and allow for the possibility that an entropy vector in the interior of a face is only approximated arbitrarily well by a density matrix (see e.g., \cite{He:2023aif}). This distinction however will not play a role here.}

\paragraph{Context and motivation:}
This formulation of the QMIP makes it obvious that the problem can easily be solved if one knows all the $\N$-party entropy inequalities. Indeed, for any given face $\face$ of the SAC$_{\N}$, one simply needs to check if there is at least one entropy vector in the interior of $\face$ that satisfies all inequalities. A more interesting perspective instead is to interpret the QMIP as a more fundamental problem in quantum mechanics that could be solved by other means, and then ask what are the implications of its solution for the existence of new entropy inequalities, and what form they might have. If such solution is known up to some $\N$, it is easy to see that the information it carries about $\N$-party inequalities is very limited. This seems to change dramatically however, at least in some restricted scenarios, if one instead asks how much information about the $\N$-party entropy inequalities is encoded in the solution to the QMIP for \textit{all} $\N'\geq\N$ (or possibly for all $\N'\le f(\N)$ for some function $f$).

In the holographic context, it was argued in \cite{Hernandez-Cuenca:2022pst} that, for any given number of parties $\N$, if one knows the solution to the HMIP \textit{for each} $\N'\geq\N$, then one can in principle reconstruct all the inequalities that specify the $\N$-party holographic entropy cone. In fact, as we mentioned above, \cite{Hernandez-Cuenca:2022pst} suggested that there might be an even more remarkable structure which only requires the solution to the extremal version of the HMIP rather than the full one.
More precisely, let $\ersa^{^{\N}}$ be the set of all $1$-dimensional faces, or \textit{extreme rays} (ERs), of the SAC$_{\N}$, and $\erh^{^{\N}}\subseteq\ersa^{^{\N}}$ be the subset of these that can be realized holographically.\footnote{\,Similarly, we will use $\mathscr{R}^{^{\N}}$ with other subscripts to indicate subsets of the SAC$_{\N}$ extreme rays pertaining to the class indicated by the subscript; refer to \Cref{tab:R_def} for a summary of these.} Then \cite{Hernandez-Cuenca:2022pst} argued that for any $\N$, there exists some $\N'\geq\N$ (which depends on $\N$) such that $\erh^{^{\N'}}$ provides sufficient information to derive the extreme rays of the $\N$-party holographic entropy cone via a simple coarse-graining of subsystems. These then allow for a straightforward reconstruction of all the $\N$-party holographic entropy inequalities (although in practice this might be a hard computation). The result of \cite{Hernandez-Cuenca:2022pst} is quite striking since the SAC itself is a much more primitive construct than the holographic entropy cone. It originates from the self-evident statement that the amount of correlation between any pair of subsystems cannot be negative, and as such it is not cognizant of holography. Indeed, the holographic input enters through the restriction to $\erh^{^{\N}}$.

Interestingly, even though \cite{Hernandez-Cuenca:2022pst} relied on technology only applicable in the holographic setting, a clue hinting at a similar structure in more general contexts was found recently in \cite{He:2023aif}, which considered how to derive inner bounds to the general \textit{quantum entropy cone} \cite{1193790,2005:Linden,2011:Cadney,Christandl_2023} from the solution to the QMIP. Denoting by $\ersta^{^{\N}}\subseteq\ersa^{^{\N}}$ the set of extreme rays of the SAC$_{\N}$ that can be realized by \textit{stabilizer states}, the main result of \cite{He:2023aif} was that a subset of $\ersta^{^9}$ is sufficient to derive\footnote{\,As in the holographic case, by derive we mean via subsystems coarse-grainings.} an inner bound to the $4$-party quantum entropy cone
which coincides with the full 4-party \emph{stabilizer cone} \cite{Linden:2013kal}.

These results suggest that a similar structure might also exist at a more abstract level. For $\N\geq 3$, most faces of the SAC$_{\N}$ cannot be part of the solution to the QMIP for the simple reason that they are ``ruled out'' by \textit{strong subadditivity} (SSA). Following \cite{He:2022bmi}, we will say that a face is SSA-\textit{compatible} if there is at least one entropy vector in its interior that satisfies all instances of SSA. For a fixed $\N$, the set of SSA-compatible faces has the structure of a lattice,\footnote{\,A lattice is a poset $\p$ endowed with the additional property that for any two elements $x,y \in \p$, there exists both a least upper bound (called the join and denoted $x \vee y$)
and a greatest lower bound (called the meet and denoted $x \wedge y$).}  where the partial order is given by reverse inclusion ($\face_1<\face_2$ if $\face_1\supset\face_2$), and the meet of two faces is the lowest dimensional face $\face=\face_1\wedge\face_2$ that includes $\face_1$ and $\face_2$ on its boundary \cite{He:2022bmi}.\footnote{\,The set of faces of an arbitrary polyhedral cone forms a lattice with the same partial order and meet operation.} In general, the subset $\erssa^{^{\N}} \subseteq \ersa^{^{\N}}$ of SSA-compatible extreme rays of the SAC$_{\N}$ is a subset of the set of coatoms of this lattice,\footnote{\,For $\N\leq 4$, each coatom is an extreme ray, but it is not known if the inclusion is strict for larger $\N$; see \cite{He:2022bmi} for more details.} but \cite{He:2022bmi} showed that for $\N\geq 4$ the lattice is not coatomistic. This means that not every element of the lattice can be obtained as the meet of a collection of coatoms. In other words, even geometrically, knowing $\erssa^{^{\N}}$ is in general insufficient information to be able to recover all the SSA-compatible higher-dimensional faces.\footnote{\,Technically, this result was proven for the set of SSA-compatible faces, which are not necessarily all realizable by quantum states. Therefore, it could have been in principle possible that each element of the lattice which is not the meet of a collection of coatoms is non-realizable, implying that the sublattice of realizable elements is coatomistic. However, we will see momentarily that already for $\N=4$ this is not true.
} This result strongly suggests that in general, the solution to the QMIP for fixed $\N$ cannot be deduced from $\erq^{^{\N}}$, which is only the solution to the extremal version of QMIP involving the same $\N$, and that one needs to know all the meet-irreducible elements of the lattice.\footnote{\,An element $x$ of a lattice is said to be meet-irreducible if $x=y\wedge z$ implies $x=y$ or $x=z$.} For $\N=4$ however, it was already shown in \cite{Hernandez-Cuenca:2019jpv} that all SSA-compatible faces of the SAC$_4$ are realizable in quantum mechanics, and that in fact stabilizer states are sufficient. This implies that the meet-irreducible elements of the lattice of SSA-compatible faces of the SAC$_4$ which are not coatoms can be obtained from coarse-grainings of certain coatoms of the analogous lattice for $\N=9$. In other words, similar to the holographic case, the solution to the QMIP can be derived from the solution to its extremal version involving a larger number of parties. All these observations suggest that there might be classes of quantum states for which the solution to the extremal marginal independence problem for all $\N'\geq\N$ carries a lot of (if not complete)  information about the solution to the (seemingly) more general $\N$-party QMIP, and may even contain information about entropy inequalities. This motivates us to study the set $\erssa^{^{\N}}$ more closely.

Another important motivation for us to study $\erssa^{^{\N}}$ comes again from holography. Suppose that the conjecture proposed in \cite{Hernandez-Cuenca:2022pst} is true, and that all $\N$-party holographic entropy inequalities can indeed be recovered from $\erh^{^{\N'}}$ for some $\N'\geq \N$. Then what characterizes the extreme rays that can be realized in quantum mechanics but not by geometric states? A first guess, informed by data for $\N\leq 5$, is that no such rays exist and $\erq^{^{\N}}=\erh^{^{\N}}$. In other words, all extreme rays of the SAC$_{\N}$ that can be realized in quantum mechanics would also be realizable holographically. However, this was shown to be false in \cite{He:2023cco}, which proved the strict inclusion $\erh^{^{\N}}\subset\erq^{^{\N}}$ for $\N\geq 6$ via the construction of an explicit counterexample, namely an extreme ray of the SAC$_6$ that can be realized by a stabilizer state but not by a geometric state.

To explore all these questions, it is useful to compute $\erssa^{^{\N}}$ for larger $\N$, so that we have more data available for further investigations. In principle, the most straightforward way to derive $\erssa^{^{\N}}$ is to first find $\ersa^{^{\N}}$, and then check which extreme rays satisfy SSA. This strategy however quickly runs into problems as $\N$ grows. For any given $\N$, the number of facets of the SAC$_{\N}$ is given by
\begin{equation}
\label{eq:facets_number}
  \text{number of facets} \,= {\N+2 \brace 3} - 2^{\N} +1 ,
\end{equation}
where ${n \brace k}$ is the Stirling number of the second kind \cite{Hubeny:2018ijt}.\footnote{\,The number of SA instances is ${\N+2 \brace 3}$, but the $2^{\N} -1$ instances of the form $\ent_{\J}\geq 0$ are not facets \cite{He:2022bmi}.} For $\N\leq 4$ the problem is easily manageable by freely available software \cite{Normaliz:301}, but the $\N=5$ case is already significantly more involved, since ${\D} =31$ and there are $270$ facets. The computation can be largely simplified using a result of \cite{He:2022bmi}, which identified for arbitrary $\N$ a specific lower-dimensional face $\face_*$ of the SAC$_{\N}$ that includes all elements of $\erssa^{^{\N}}$ (with the exception of the trivial ones corresponding to the Bell pairs).\footnote{\,Since we will utilize this result in the computation presented in this paper, we will review it in detail in the next section.} In general, the dimension of $\face_*$ is given by
\begin{equation}
\label{eq:special_face_dim}
    \text{dim}(\face_*)=2^{\N}-\binom{\N+1}{2}-1 ,
\end{equation}
which for $\N=5$ gives $\text{dim}(\face_*)=16$. Furthermore, upon dimensional reduction of the inequalities that specify the SAC$_5$ to the linear subspace spanned by $\face_*$, many inequalities become redundant, and $\face_*$ only has $80$ facets. In this manner, the computation of $\ersa^{^5}$ becomes straightforward using \cite{Normaliz:301}, but as one can immediately see from \eqref{eq:facets_number} and \eqref{eq:special_face_dim}, this simplification quickly becomes insufficient as $\N$ grows. Indeed, already at $\N=6$, the face $\face_*$ has $\text{dim}(\face_*)=42$ and is specified by $280$ inequalities (the number of facets of SAC$_{6}$ is $903$), rendering the computation of $\ersa^{^6}$ unfeasible on a standard laptop. The set $\ersa^{^7}$ is certainly beyond reach, and the development of an algorithm that allows us to circumvent at least some of these difficulties will be the main goal of this paper.

\paragraph{Strategy:}
A simple observation, which is key to the development of this algorithm, is that already for small values of $\N$ the vast majority of extreme rays of the SAC$_{\N}$ violate SSA, i.e., $|\erssa^{^{\N}}|\ll|\ersa^{^{\N}}|$.\footnote{\,This is evident already 
at $\N=3,4$ (e.g., see \cite{He:2022bmi}).} Since we are only interested in $\erssa^{^{\N}}$, one may then wonder whether there is a way to find this set more directly, without first deriving $\ersa^{^{\N}}$. A possible strategy would be to first find all the extreme rays of the ``SSA-cone'', the polyhedral cone in entropy space carved out by all instances of SA \textit{and} SSA. It would then appear 
straightforward to extract the subset of extreme rays which are also elements of $\ersa^{^{\N}}$. This strategy unfortunately suffers from the same issues described above, since the dimension of the ambient space is the same, the number of facets is comparable, and most extreme rays of this cone are not elements of $\erssa^{^{\N}}$.

To focus on $\erssa^{^{\N}}$ without having to introduce new inequalities, we will instead consider a \textit{variation} of what is perhaps the most basic (and in general 
inefficient) strategy to find the extreme rays of any polyhedral cone. Consider a vector space $\mathbb{R}^n$ for some $n$, and a finite set $\ineq$ of linear inequalities that specify a full-dimensional pointed\footnote{\,A cone is pointed if it does not contain any linear subspace other than the trivial subspace.} cone $\cone$. The set of vectors that saturate an inequality is a hyperplane in $\mathbb{R}^n$. An arbitrary subset $\x\subseteq\ineq$ then determines a linear subspace of $\mathbb{R}^n$ given by the intersection of the hyperplanes in $\x$. If this subspace is $1$-dimensional, one can then check if it contains a ray that satisfies all the inequalities in $\ineq$. If such a ray exists, it is obviously an extreme ray of $\cone$. By considering all possible subsets $\x\in 2^{\ineq}$, we can in principle find all the extreme rays. We call this strategy the ``brute force strategy.''

The inefficiency of this procedure should be obvious, but what if one is only interested in a subset of extreme rays and can drastically reduce the search space from $2^{\ineq}$ to a much smaller subset? The essence of our method will be to apply the brute force strategy to a sufficiently reduced search space. As we will review in detail, such a reduction can be attained using the strategy of \cite{He:2022bmi}, where one replaces SSA with a weaker condition, dubbed \textit{Klein's condition} (KC), that can be conveniently phrased in terms of collections of SA instances. Following \cite{He:2022bmi}, we will formulate KC by introducing a partial order among the SA instances. The set $\erkc^{^{\N}}$ of extreme rays of the SAC$_{\N}$ that obey this condition will be the set of extreme rays for which the corresponding sets of saturated SA instances are \textit{down-sets}. The algorithm will take advantage of this structural property of extreme rays to restrict the search space from $2^{\ineq}$ to the lattice of down-sets of the poset. This is still far from being a sufficient reduction for $\N\geq 6$, but using \textit{closure operators} we will then ``combine'' this constraint with the dependences among the SA instances to reduce the search space even further. 

Our algorithm will compute $\erkc^{^{\N}}$, and since KC is an approximation to SSA, we have
\begin{equation}
    \erssa^{^{\N}}\subseteq\erkc^{^{\N}}\subseteq\ersa^{^{\N}}.
\end{equation}
To find $\erssa^{^{\N}}$, it then suffices to check which elements of $\erkc^{^{\N}}$ satisfy SSA, which is a straightforward computation. The advantage of this strategy ultimately lies in the expectation that as $\N$ grows,
\begin{equation}
    \frac{|\erkc^{^{\N}}|}{|\erssa^{^{\N}}|}  \ll \frac{|\ersa^{^{\N}}|}{|\erkc^{^{\N}}| }  .
\end{equation}
As verified in \cite{He:2022bmi}, for $\N\leq 5$ we have $\erkc^{^{\N}}=\erssa^{^{\N}}$ (in contrast to higher-dimensional faces of the SAC$_{\N}$), which means at least for extreme rays the approximation of SSA by KC is exact. This led \cite{He:2022bmi} to speculate that this equivalence might hold for arbitrary $\N$. However, this is not true, and we will construct the simplest counterexamples for $\N=6$.

\begin{table}[]
    \centering
    \small
    \begin{tabular}{|l|l|}
    \hline
      $\ersa$   & {\footnotesize All the extreme rays (ERs) of the SAC}  \\
      \hline
      $\erkc$   & {\footnotesize ERs of the SAC that satisfy KC} \\
      \hline
      $\erssa$   & {\footnotesize ERs of the SAC that satisfy SSA} \\
      \hline
      $\erq$   & {\footnotesize ERs of the SAC realizable by arbitrary quantum states} \\
      \hline
      $\ersta$   & {\footnotesize ERs of the SAC realizable by stabilizer states} \\
      \hline
      $\erhg$   & {\footnotesize ERs of the SAC realizable by hypergraph models} \\
      \hline
      $\erh$   & {\footnotesize ERs of the SAC realizable by graph models}  \\
      \hline
      $\ert$   & {\footnotesize ERs of the SAC realizable by tree graph models} \\
      \hline
      $\erst$   & {\footnotesize ERs of the SAC realizable by simple tree graph models} \\
      \hline
      $\erstar$  & {\footnotesize ERs of the SAC realizable by star graph models} \\
      \hline
    \end{tabular}
    \caption{Summary of the definitions of various classes of extreme rays of the SAC for a fixed value of $\N$ (to keep the table cleaner, we dropped the explicit dependence on $\N$ of each set).}
    \label{tab:R_def}
\end{table}

\begin{table}[]
    \centering
    \small
    \begin{tabular}{|l|l|l|}
      \hline
      $\erstar\, \subseteq\, \erst$   & {\footnotesize Any star graph   model is a simple tree graph model} & \cite{Bao:2015bfa,Hernandez-Cuenca:2022pst} \\
      \hline
      $\erst\, \subseteq\, \ert$   & {\footnotesize Any simple tree graph model is a tree graph model} & \cite{Bao:2015bfa,Hernandez-Cuenca:2022pst} \\
      \hline
      $\ert\, \subseteq\, \erh$   & {\footnotesize Any tree graph model is a graph model}  & \cite{Bao:2015bfa,Hernandez-Cuenca:2022pst} \\
      \hline
      $\erh\, \subseteq\, \erhg$   & {\footnotesize Any graph model is a hypergraph model} & \cite{Bao:2020zgx} \\
      \hline
      $\erhg\, \subseteq\, \ersta$   & {\footnotesize Any hypergraph model can be realized by a stabilizer state} & \cite{Walter:2020zvt} \\
      \hline
      $\ersta\, \subseteq\, \erq$   & {\footnotesize Any stabilizer state is a quantum state} & \cite{Linden:2013kal} \\
      \hline
      $\erq\, \subseteq\, \erssa$   & {\footnotesize Any quantum state satisfies SSA} & \cite{Lieb:1973cp,1193790} \\
      \hline
      $\erssa\, \subseteq\, \erkc$   & {\footnotesize KC is implied by SSA} & \cite{Nielsen_Chuang_2010,He:2022bmi}\\
      \hline
      $\erkc\, \subseteq\, \ersa$   & {\footnotesize KC is a condition on the set of SA instances saturated by an ER}  & \cite{Nielsen_Chuang_2010,He:2022bmi} \\
      \hline
    \end{tabular}
    \caption{Inclusion relations for the various classes of extreme rays of the SAC. For each inclusion, we added references to the papers where it was proven, or where the structure in entropy space which defines the class was first introduced (when the inclusion follows automatically from the definition of the class).}
    \label{tab:R_inclusions}
\end{table}

\paragraph{Preview of results:}
The set $\erssa^6$ is much richer than its counterparts for lower parties, and we can use our results in this paper to investigate various conjectures and questions that have appeared in previous work. In \Cref{tab:R_def} we review the definition of various useful subsets of $\ersa$, depending on whether they are realizable or, if they are, the class of quantum states that realizes them. These subsets satisfy the following chain of inclusions (see \Cref{tab:R_inclusions} for more details):
\begin{equation}
    \erstar\, \subseteq\, \erst\, \subseteq\, \ert\, \subseteq\, \erh\, \subseteq\, \erhg\, \subseteq\, \ersta\, \subseteq\, \erq\, \subseteq\, \erssa\, \subseteq\, \erkc\, \subseteq\, \ersa .
\end{equation}
For $\N\leq 5$, \Cref{tab:R_strict_inclusions_equalities} summarizes which inclusions are strict and which inclusions are in fact equalities. For $\N\geq 6$, the current knowledge can instead be summarized as follows:
\begin{equation}
\label{eq:N6_R_nesting}
    \erstar\, \subset\, \erst\, \subset\, \ert\, \subseteq\, \erh\, \subset\, \erhg\, \subseteq\, \ersta\, \subseteq\, \erq\, \subseteq\, \erssa\, \subset\, \erkc\, \subset\, \ersa ,
\end{equation}
where $\erh\, \subset\, \erhg$ was shown in \cite{He:2023cco}, and $\erssa\, \subset\, \erkc$ will be shown in this paper.

\begin{table}[]
    \centering
    \small
    \begin{tabular}{|l|l|l|}
    \hline
    $\N=2$ & $\erstar =\, \ersa$ & \cite{Bao:2015bfa,Hernandez-Cuenca:2022pst} \\
    \hline
    $\N=3$ & $\erstar =\, \erkc \subset\, \ersa$ & \cite{Bao:2015bfa,Hernandez-Cuenca:2022pst} \\    
    \hline
    $\N=4$ & $\erstar =\, \erkc \subset\, \ersa$ & \cite{Bao:2015bfa,Hernandez-Cuenca:2022pst} \\
    \hline
    $\N=5$ & $\erstar \subset\, \erst  =\, \erkc \subset\, \ersa$ & \cite{Cuenca:2019uzx,Hernandez-Cuenca:2022pst} \\
    \hline
    $\N=6$   & {\footnotesize See \eqref{eq:N6_R_nesting} and related references} & \\
    \hline
    \end{tabular}
    \caption{Summary of the essential equivalencies and strict inclusions among the various classes for different values of $\N$. Further equivalences follow automatically from the inclusions in \Cref{tab:R_inclusions}. 
    }
    \label{tab:R_strict_inclusions_equalities}
\end{table}

Furthermore, a conjecture first proposed in \cite{Hernandez-Cuenca:2022pst} implies that $\ert = \erh$, which we are able to explore in the $\N=6$ case in this paper. As we will see in more detail, the elements of $\erssa^6$ can be organized into 208 orbits under the symmetries of the problem, and out of these, 52 violate at least one known \textit{holographic entropy inequality} (HEI) and are therefore not in $\erh^6$. Of the remaining 156 orbits, we managed to construct an explicit graph model for 150 orbits, thereby proving that they are in $\erh^6$. While 148 of these graph models indeed have a tree topology, two of them do not, therefore just falling short of verifying the conjecture $\ert = \erh$ for $\N=6$. Of course, this result does not disprove the conjecture either, since we did not prove that there cannot be tree graph models realizing those two entropy vectors. Moreover, to get a definitive answer at $\N=6$, one would also need to determine whether the remaining six ``mystery'' orbits are holographic or not, and if so, whether they can be realized by trees. 

Finally, we stress that our algorithm relies only on the existence of a partial order among the inequalities. This means it is not in any way specific to entropy inequalities such as SA. In fact, it can be used in any context where one is interested in finding not all extreme rays, but rather only the subset of extreme rays which correspond to down-sets in the poset. 

The structure of the paper is as follows. In \S\ref{sec:mip}, we review the necessary definitions about the mutual information poset, how it can be used to formulate KC, and a few additional results from our previous work that we will use to simplify the computation of $\erssa^6$. The algorithm that we use is then presented in \S\ref{sec:algorithm}. We stress that this section is written independently from the set-up of the SAC reviewed in \S\ref{sec:mip}, and thus our algorithm can be immediately utilized for other applications. To exemplify how the algorithm works, we then re-compute the set $\erssa^5$ in \S\ref{sec:five}. The result of the computation for $\N=6$ is presented in \S\ref{sec:results}, along with a preliminary  analysis of this data and all the holographic graph models that we were able to construct.\footnote{\,Further details regarding the various attributes of $\erssa^6$ are also presented in \cite{rays-notes}, and the actual graph constructions in \cite{graph-constructions}.} Readers who are only interested in the results of the $\N=6$ case, and not in the details of the algorithm, can skip earlier sections and proceed directly to \S\ref{sec:results}. 
Finally, in \S\ref{sec:discussion} we comment on open questions and future directions of investigation.

\section{Preliminaries}
\label{sec:mip}

An instance of SA (see \eqref{eq:sa}) can be written as 
\begin{align}
    \mi(\uJ:\uK) \equiv \ent_{\uJ} + \ent_{\uK} - \ent_{\uJ\uK}\geq 0,  
\end{align}
where $\mi(\uJ:\uK)$ is the \textit{mutual information} (MI) between subsystems $\uJ$ and $\uK$. Clearly, an instance of SA is saturated if and only if the corresponding MI instance vanishes. For any given $\N$, we denote by $\mgs$ the set of all MI instances.

We will review in this section the definition of Klein's condition for a subset $\x\subseteq\mgs$, and how this can be formulated in terms of a poset. This will make manifest that in the search for $\erkc^{^{\N}}$ we can restrict the search space from the power-set $2^{\mgs}$ to the set of down-sets $\x\subseteq\mgs$ which include all minimal elements of the poset without containing any maximal, or even next-to-maximal, elements. The presentation will follow \cite{He:2022bmi} closely, and we refer the reader to that paper for more details.

\subsection{The mutual information poset and Klein's condition}
\label{subsec:mi-poset-and-kc}

Given a face $\face$ of the SAC$_{\N}$, consider an entropy vector $\vec{\ent}\in\text{int}(\face)$ and the set of all MI instances that vanish for $\vec{\ent}$, i.e., 
\begin{equation}
\label{eq:vanishing_mi}
    \van(\vec{\ent})= \{\mi(\uJ:\uK)\in\mgs,\; \ \mi(\uJ:\uK)(\vec{\ent})=0\}.
\end{equation}
Notice that this set does not depend on the specific choice of $\vec{\ent}$ in the interior of $\face$, and therefore we can naturally associate a set $\van(\face)$ to a face $\face$. Furthermore, we stress that here $\face$ and $\vec{\ent}$ are not assumed to be realizable, and we can extend the map \eqref{eq:vanishing_mi} to the full entropy space.

Since in quantum mechanics an MI instance vanishes if and only if the density matrix factorizes, we have the chain of implications
\begin{equation}
    \mi(\uI:\uJ\uK)=0\; \implies \; \rho_{\uI\uJ\uK}=\rho_{\uI}\otimes\rho_{\uJ\uK} \; \implies \; \rho_{\uI\uK}=\rho_{\uI}\otimes\rho_{\uK} \; \implies \; \mi(\uI:\uK)=0 .
\end{equation}
Motivated by this simple observation, we then define the following necessary condition for the realizability of a face of the SAC$_{\N}$.
\begin{defi}[Klein's condition (KC)]\label{def:KC}
    A face $\face$ of the \emph{SAC}$_{\N}$ satisfies Klein's condition if for any three disjoint subsets $\uI,\uJ$, and $\uK$, 
    \begin{align}
    \label{eq:kc_def}
        \mi(\uI:\uJ\uK) \in \van(\face)\quad\implies\quad \mi(\uI:\uK) \in \van(\face).
    \end{align} 
    If a face $\face$ satisfies \emph{KC} we call it a \emph{KC}-face, and in the 1-dimensional case, a \emph{KC} extreme ray \emph{(KC-ER)}.
\end{defi}
\noindent Notice that, given SA, this condition immediately follows from SSA, which can be written as
\begin{align}
    \mi(\uI:\uJ\uK) \geq \mi(\uI:\uK).
\end{align}
We can then interpret KC as an approximation of SSA, in the sense that any SSA-compatible face of the SAC$_{\N}$ is necessarily a KC-face. As anticipated in the Introduction, the advantage of this approximation is that it characterizes a subset of faces of the SAC$_{\N}$ that \textit{could} be SSA-compatible purely in terms of SA.

Having defined KC, we now review how it can be conveniently rephrased in terms of a partially ordered structure. Recall that a poset $\p=(X,\preceq)$ is defined as a set $X$ with a binary relation $\preceq$ that is reflexive, antisymmetric, and transitive.\footnote{\,That is, for all $x,y,z\in X$ we have (i) $x\preceq x$; (ii) if $x\preceq y$ and $y\preceq x$, then $x=y$; and finally (iii) if $x\preceq y$ and $y\preceq z$, then $x\preceq z$.} We define the \textit{mutual information poset} (MI-poset) as follows.

\begin{defi}[Mutual information poset (MI-poset)]
    For any given $\N$, the mutual information poset $(\mgs,\preceq)$ is the set of all \emph{MI} instances $\mgs$ with partial order given by the relation
    \begin{align}
    \label{eq:mi_order}
        \mi(\uI:\uK) \preceq \mi(\uI':\uK') \quad\iff\quad \uI \subseteq \uI' \;\;\text{and}\;\; \uK \subseteq \uK' \quad\text{or}\quad \uI \subseteq \uK' \;\;\text{and}\;\; \uK \subseteq \uI'.
    \end{align}
\end{defi}

\noindent As a simple example, we have depicted the MI-poset for $\N=2$ in \Cref{fig:MIposet2} using a Hasse diagram. The lines denote covering relations, where an element $x$ is joined to an element $y$ if $x \prec y$, and there does not exist $z$ such that $x \prec z \prec y$. 

Given a poset $\p$, a \textit{down-set} is defined as a subset of $Y\subseteq X$ such that for any $ y\in Y$ ,
\begin{equation}
   z\preceq y \; \implies \;  z\in Y .
\end{equation}
It then follows from the definition of the partial order in the MI-poset \eqref{eq:mi_order} and the definition of KC given in \eqref{eq:kc_def} that a face $\face$ of the SAC$_{\N}$ is a KC-face only if $\van(\face)$ is a down-set. It is important to notice that given an arbitrary down-set $\ds$ of the MI-poset, it is \textit{not} necessarily the case that there exists a face $\face$ such that $\van(\face)=\ds$. An analogous statement also holds for arbitrary subsets $\x\subseteq\mgs$, and the underlying reason is in fact the same: Such arbitrary $\ds$ and $\x$ do not necessarily respect the dependences among SA instances. For now, however, it is sufficient to notice that by focusing on down-sets of the MI-poset, we can restrict the search space for the brute force computation of $\erssa^{^{\N}}$ from the power-set $2^\mgs$ to the lattice of down-sets\footnote{\,The set of all down-sets of any poset, with partial order given by inclusion, has the structure of a lattice, where meet and join correspond respectively to intersection and union \cite{birkhoff1967lattice}.} of the MI-poset, which is already significantly smaller. For example, for $\N=3$ we have $|2^{\mgs}|=2^{25}=33,\!554,\!432$, while the down-set lattice has cardinality $11,\!422$.

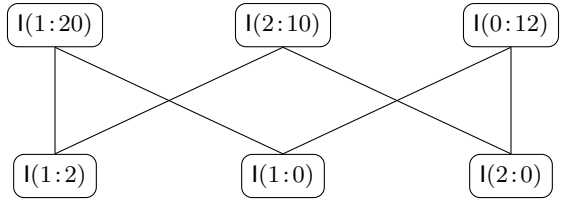
\begin{figure}[tb]   
\begin{center}
    \begin{tikzpicture}
    \node[draw, rounded corners] (ao) at (0,0) {{\footnotesize $\mi(1\!:\!0)$}};
    \node[draw, rounded corners] (abo) at (-3,2) {{\footnotesize $\mi(1\!:\!20)$}};
    \node[draw, rounded corners] (ab) at (-3,0) {{\footnotesize $\mi(1\!:\!2)$}};
     \node[draw, rounded corners] (bao) at (0,2) {{\footnotesize $\mi(2\!:\!10)$}};
    \node[draw, rounded corners] (bo) at (3,0) {{\footnotesize $\mi(2\!:\!0)$}};
    \node[draw, rounded corners] (oab) at (3,2) {{\footnotesize $\mi(0\!:\!12)$}};
    \draw[-] (bao.south) -- (ab.north);
    \draw[-] (abo.south) -- (ao.north);
    \draw[-] (bao.south) -- (bo.north);
    \draw[-] (oab.south) -- (ao.north);
    \draw[-] (oab.south) -- (bo.north);
    \draw[-] (abo.south) -- (ab.north);
    \end{tikzpicture}
\end{center}   
    \caption{The Hasse diagram of the MI-poset for $\N=2$.}
    \label{fig:MIposet2}
\end{figure}

\subsection{The Bell pairs theorem and additional constraints}
\label{subsec:constraints}

Having explained how to reduce the search space for $\erssa^{^{\N}}$ to down-sets of the MI-poset, we now review two constraints that can be used to ignore certain families of down-sets. The first result was already anticipated in the Introduction and allows us to focus on a specific face $\face_*$ of the SAC$_{\N}$ which includes all KC-ERs that are not realized by Bell pairs. Since this face is realizable in quantum mechanics, it is a KC-face, and the set $\van(\face_*)$ must be a down-set. In particular, 
$\van(\face_*)$ is the set of minimal elements of the MI-poset \cite{He:2022bmi}.

\begin{thm}[Bell pairs theorem]\label{thm:bell-pair}
    For any given number of parties $\N$, any \emph{KC}-compatible extreme ray of the \emph{SAC}$_\N$ which is not realized by a single Bell pair obeys the condition
    \begin{align}
    \label{eq:er_face}
        \mi(\ell:\ell') = 0 \quad\text{for all $\ell,\ell' \in \psys{\N}$.}
    \end{align}
\end{thm}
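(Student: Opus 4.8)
The plan is to prove the contrapositive: I will assume that $r$ is a \emph{KC} extreme ray of the $\text{SAC}_{\N}$ on which some minimal element $\mi(\ell:\ell')$ of the MI-poset fails to vanish, and deduce that $r$ must be the ray generated by the entropy vector of a single Bell pair between $\ell$ and $\ell'$, contradicting the hypothesis. (By the order \eqref{eq:mi_order}, and since \emph{MI} instances involve nonempty subsystems, the minimal elements of the MI-poset are precisely the instances $\mi(\ell:\ell')$ with $\ell\neq\ell'$ in $\psys{\N}$, so this is exactly the negation of \eqref{eq:er_face}.) Fix a generator $\vec{\ent}$ of $r$ and recall that, $r$ being a \emph{KC}-face, $\van(r)$ is a down-set of the MI-poset. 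The engine of the argument is the fact that, since $r$ is a face of the polyhedral cone $\text{SAC}_{\N}$ and $\vec{\ent}\in r$, any decomposition $\vec{\ent}=\vec{u}+\vec{v}$ with $\vec{u},\vec{v}\in\text{SAC}_{\N}$ forces $\vec{u},\vec{v}\in r$; it therefore suffices to peel off a strictly positive multiple of the Bell pair vector while remaining inside $\text{SAC}_{\N}$.

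The first step is to pin down which \emph{MI} instances are forced to be strictly positive on $\vec{\ent}$. Since $\mi(\ell:\ell')(\vec{\ent})>0$ we have $\mi(\ell:\ell')\notin\van(r)$, and because $\van(r)$ is a down-set its complement is an up-set, so every $\mi(\uJ:\uK)\succeq\mi(\ell:\ell')$ also lies outside $\van(r)$ and is therefore strictly positive on $\vec{\ent}$. By \eqref{eq:mi_order}, these are exactly the instances that ``separate'' $\ell$ and $\ell'$, namely the $\mi(\uJ:\uK)$ with $\ell\in\uJ$ and $\ell'\in\uK$ (or vice versa); write $\mathcal{S}$ for this finite, nonempty collection.

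Next I would introduce the Bell pair vector $b$: the entropy vector of the $(\N+1)$-party pure state in which parties $\ell$ and $\ell'$ share a Bell pair while every other party is in a pure product state (when one of $\ell,\ell'$ is the purifier $0$, this is simply the state with the other party maximally mixed). A short case analysis --- cleanest if one computes with the entropies of the genuine $(\N+1)$-party purification and uses purity, which automatically implements the replacement $\ent_{\uJ}\to\ent_{\uJ^{c}}$ for $0\in\uJ$ --- shows that $\mi(\uJ:\uK)(b)=2$ exactly when $\mi(\uJ:\uK)\in\mathcal{S}$ and $\mi(\uJ:\uK)(b)=0$ otherwise; in particular $c\,b$ satisfies every \emph{SA} instance for all $c\ge0$, so $c\,b\in\text{SAC}_{\N}$. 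Now set
\begin{equation*}
c \;=\; \tfrac{1}{2}\,\min\bigl\{\,\mi(\uJ:\uK)(\vec{\ent}) \;:\; \mi(\uJ:\uK)\in\mathcal{S}\,\bigr\} \;>\; 0 ,
\end{equation*}
the strict positivity coming from the first step, and put $\vec{w}=\vec{\ent}-c\,b$. For $\mi(\uJ:\uK)\in\mathcal{S}$ one has $\mi(\uJ:\uK)(\vec{w})=\mi(\uJ:\uK)(\vec{\ent})-2c\ge0$ by the choice of $c$, while for $\mi(\uJ:\uK)\notin\mathcal{S}$ one has $\mi(\uJ:\uK)(\vec{w})=\mi(\uJ:\uK)(\vec{\ent})\ge0$ because $\vec{\ent}\in\text{SAC}_{\N}$. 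Hence $\vec{w}\in\text{SAC}_{\N}$, and the decomposition $\vec{\ent}=c\,b+\vec{w}$ with both summands in $\text{SAC}_{\N}$ forces $c\,b$ --- hence $b$, since $c>0$ --- to be a positive multiple of $\vec{\ent}$. Thus $r$ is the Bell pair ray, the desired contradiction.

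The proof is short precisely because of the first step: the down-set structure of $\van(r)$ pushes a single nonvanishing minimal \emph{MI} all the way up the poset, and the set it reaches is exactly the support of the Bell pair vector, which is what licenses the subtraction $\vec{\ent}-c\,b$. The only point needing real care is the case check for $\mi(\uJ:\uK)(b)$ in the third step (both of $\ell,\ell'$ in $\uJ$; one in $\uJ$ with the other outside $\uJ\cup\uK$; the purifier replacement; and so on), and the cleanest way to handle it, as indicated, is to work with the entropies of the true $(\N+1)$-party purification rather than with the truncated entropy vector.
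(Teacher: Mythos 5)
Your proof is correct. The paper itself does not reproduce an argument (it simply cites Corollary~1 of \cite{He:2022bmi}), but the strategy you take --- proving the contrapositive by using the down-set property of $\van(r)$ to force every MI instance separating $\ell$ and $\ell'$ to be strictly positive on $\vec\ent$, then peeling off a small positive multiple of the Bell pair vector $b$ (whose MI-support is exactly that up-set $\mathcal{S}$, as your purifier-replacement case check confirms) and invoking the face/extremality property of $r$ to conclude $c\,b$ must generate $r$ --- is the natural one for this statement and, to the best of my ability to verify, matches the intent of the cited result.
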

\begin{proof}
    See Corollary 1 in \cite{He:2022bmi}.
\end{proof}

This theorem implies that to find extreme rays which do not correspond to Bell pairs, we should only consider down-sets of the MI-poset which include all minimal elements, i.e., the MI instances between single parties. However, notice that since $\face_*$ contains \textit{all} KC-ERs except for those realized by Bell pairs, in particular it also contains trivial lifts of KC-ERs for any $\N'$ such that $3\leq\N'\leq\N$ \cite{He:2022bmi}. 

Suppose now that we want to compute $\erssa^{^{\N}}$ for some $\N$, and that the sets $\erssa^{^{\N'}}$ for all $\N'\leq\N$ are already known. The second result that we review will allow us to only consider down-sets which are guaranteed to correspond to ``genuine $\N$-party'' KC-ERs, therefore avoiding the reconstruction of extreme rays that we already know. Notice that the maximal elements of the MI-poset take the form
\begin{align}
    \mi(\uJ : \uJ^c) = \ent_{\uJ} + \ent_{\uJ^c} -  \ent_{\psys{\N}} = 2 \ent_{\uJ}, 
\end{align}
where $\uJ^c \equiv \psys{\N} \setminus \uJ$, and we utilized the assumption that the system on all $\psys{\N}$ parties is pure.\footnote{\,This implies that $\ent_{\psys{\N}}= \ent_{\varnothing} = 0$ and $\ent_{\uJ^c}=\ent_{\uJ}$.} Consider now a face $\face$ of the SAC$_{\N}$ such that $\van(\face)$ includes one such element. Assuming for simplicity that $\uJ$ does not include the purifier $\ell=0$ (if it does we can simply repeat the same argument for $\uJ^c$), we can then denote $\uJ$ by just $\J$ and have $\ent_{\J}=0$ for any vector $\vec{\ent}\in\face$. If $\vec{\ent}$ is realizable by some density matrix $\rho_{\N}$, then we can write it as
\begin{equation} \label{eq:prod-state}
\rho_{\N}= \ket{\psi}\bra{\psi}_{\J} \otimes\rho_{\J^c},
\end{equation}
where $\ket{\psi}_{\J}$ is a pure state and $\J^c \equiv [\N]\setminus\J$ (by itself purified by the purifier $0$). The density matrix $\rho_{\N}$ is then a state of a system composed of two uncorrelated subsystems on $|\J|$ and $|\J^c|$ parties respectively. 
We can rewrite the entropy vector of this density matrix to be 
\begin{align}
    \vec\ent(\rho_\N) = \vec\ent\big( \ket{\psi}\bra{\psi}_{\J} \otimes\rho_{\J^c} \big) = \vec\ent \big( \ket{\psi}\bra{\psi}_{\J} \otimes |0 \rangle \langle 0|_{\J^c} \big) + \vec{\ent}\big( |0 \rangle \langle 0|_{\J} \otimes \rho_{\J^c} \big) ,
\end{align}
where $\ket{0}\bra{0}_{\J}$ is a ``completely decorrelated'' state for subsystem $\J$.\footnote{\,In other words, $\ket{0}_\J$ is a tensor product of states for the individual parties in $\J$.} If $\face$ is an extreme ray, any entropy vector $\vec\ent(\rho_\N)$ generating it cannot be the sum of two distinct nontrivial vectors within the SAC. Therefore, the only possibility is that the first term is the null vector, which implies
\begin{equation}
    \ket{\psi}_\J = \ket{0}_{\J} \quad\implies\quad \rho_{\N}=|0\rangle\langle 0|_{\J}\otimes\rho_{\J^c} .
\end{equation}
Thus, the extreme ray associated to $\rho_\N$ is a ``canonical'' lift (see \cite{He:2022bmi}) to $\N$ parties of the extreme ray associated to $\rho_{\J^c}$ of the SAC for $|\J^c|$ parties.

Indeed, we can use a similar argument to see that we can even ignore any down-set involving next-to-maximal elements in the MI-poset. Suppose our down-set includes an element of the form
\begin{align}
    \mi(\J:\J^c) = \ent_\J + \ent_{\J^c} - S_{[\N]},
\end{align}
and we have assumed without loss of generality that the party not appearing in the arguments is the purifier. As in our argument above, assuming $\vec\ent(\rho_\N)$ is realizable by some quantum state $\rho_{\N}$, this means we can write
\begin{align}
\label{eq:rho-factorization}
    \rho_{\N} = \rho_\J \otimes \rho_{\J^c}.
\end{align}
The entropy vector associated to this density matrix can be rewritten as
\begin{align}\label{eq:next-to-max}
     \vec\ent(\rho_\N) = \vec\ent\big( \rho_{\J} \otimes\rho_{\J^c} \big) = \vec\ent \big( \rho_{\J} \otimes |0 \rangle \langle 0|_{\J^c} \big) + \vec{\ent}\big( |0 \rangle \langle 0|_{\J} \otimes \rho_{\J^c} \big) .
\end{align}
This implies that $\vec\ent(\rho_\N)$ is the sum of two nontrivial entropy vectors in the $\N$-party entropy space, and hence $\vec\ent(\rho_\N)$ cannot be an extreme ray. Notice that we cannot use this argument to further exclude MI-poset elements $\mi(\J:\K)$ below the next-to-maximal ones, since even if we still have a factorization of the density matrix as in \eqref{eq:rho-factorization}, it would not involve all the $\N$ parties. Finally, notice that for these arguments, we used the assumption that the face $\face$ is realizable. The result however can be extended to any KC-face, independently from realizability, and we refer to reader to Section 3.4 of \cite{He:2022bmi} for the proof and a more detailed discussion. 

As an explicit example of these two results consider again the case of $\N=3$. The number of down-sets that include all minimal elements is $8,\!695$, and if we further demand that the down-sets do not include any maximal elements we are left with $2^{12}=4,\!096$ possibilities. These correspond to all possible subsets of the antichain
\begin{equation}
\label{eq:n3_antichain}
    \mathcal{A}=\{\mi(\ell:\ell'\ell''),\; \text{for all distinct}\; \ell,\ell',\ell''\in\psys{3}\}.
\end{equation}
However, consider for example a down-set $\ds$ composed of all minimal elements and $\mi(1:23)$. Notice that 
\begin{equation}
\label{eq:n3_linear_dep}
    \mi(1:23)+\mi(1:0)=\mi(1:230)=2\ent_{1}.
\end{equation}
Since $\mi(1:0)$ is a minimal element, both terms on the l.h.s.\ of \eqref{eq:n3_linear_dep} are included in $\ds$ while the term on the r.h.s.\ is not, so \eqref{eq:n3_linear_dep} implies that there is no face $\face$ of SAC$_3$ with $\van(\face)=\ds$, and that for any face such that $\van(\face)=\ds'$ with $\ds'\supset\ds$, we must have $\mi(1:230)\in\ds'$. Since $\mi(1:230)$ is a maximal element, we are then forced to consider only down-sets that do not contain $\mi(1:23)$. Permuting the parties in \eqref{eq:n3_linear_dep} to obtain analogous relations for the other elements of $\mathcal{A}$ in \eqref{eq:n3_antichain}, this implies that the only case that we need to consider is the one where $\ds$ is precisely the set of minimal elements. In fact, from \eqref{eq:special_face_dim} one can see that for $\N=3$, $\text{dim}(\face_*)=1$, and for this choice of $\ds$ we get the ray generated by $\vec{\ent}=(1,1,1,2,2,2,1)$, which is an extreme ray and is the \textit{only} KC-ER of SAC$_3$. Therefore, in the simple case of $\N=3$, the interplay of the constraints discussed in this section with the dependences among the MI instances reduces the size of the search space for the brute force strategy from $33,\!554,\!432$ to $1$! The combination of these constraints will indeed be a key aspect of the algorithm described in the next section. 

Finally, let us conclude with another comment regarding the maximal elements of the MI-poset. For arbitrary $\N$, we can generalize \eqref{eq:n3_linear_dep} to be
\begin{equation}
\label{eq:positive_comb}
    \mi(\uJ:\uK)+\mi(\uJ:(\uJ\uK)^c)=\mi(\uJ:\uJ^c) ,
\end{equation}
which shows that an arbitrary maximal element (on the r.h.s.) is a \textit{positive} linear combination of other MI instances. Translating each MI in \eqref{eq:positive_comb} to the corresponding SA, this shows that each SA instance corresponding to a maximal element of the poset is a \textit{redundant inequality}, i.e., it is not a facet of the SAC$_{\N}$. In other words, we could simply ignore all these SA instances and the cone would be exactly the same. However, for the sake of convenience in our algorithm, which we will explain below, we will instead choose to keep such redundant inequalities. Geometrically, we can interpret the set of redundant inequalities as specifying a ``region'' of the cone whose extreme rays we want to ignore. This will be another key element of the algorithm that we will now present.

\section{The algorithm}
\label{sec:algorithm}

In this section, we present the algorithm that we will use to find $\erkc^6$. Since nothing about this algorithm is specific to the set-up of the subadditivity cone described in the previous section, we will present it more generally, such that it can be applied to other contexts without any adaptation. In general terms, our algorithm can be used when one has a polyhedral cone specified by a set of linear inequalities with partial order among them, and one wants to find only a particular subset of extreme rays, characterized by the fact that for each extreme ray the set of inequalities that are saturated is a down-set in the poset.\footnote{\,To make the algorithm more robust, we will make a few additional assumptions, which will be presented in detail below.}

We anticipate that this algorithm is typically most useful in situations where the poset has ``large height and small width.'' In the limiting situation where the poset is a chain, our algorithm is very efficient (although somewhat trivial). In the opposite situation, where the poset is an antichain, it is instead very inefficient since it reduces to the brute force strategy mentioned in the Introduction, and one should instead use one of the standard algorithms for ``polyhedral representation conversion'' (although in general even these algorithms are not efficient) \cite{survey1980}.\footnote{\,Notice however that this inefficiency is related to the fact that the subset of extreme rays found by the algorithm becomes the full set of all extreme rays of the cone.} In intermediate situations, the efficiency will depend on the details of the problem at hand. Nevertheless, even for the specific problem of finding $\erkc$ indicated in the preceding section, where the height (approximately $\N$) is exponentially smaller by a factor of roughly $2^D/ \sqrt{N}$ than the width (approximately $\binom{\N}{\N / 2}$), the algorithm remains remarkably useful. In any case, a feature of our algorithm is the flexibility of how it can be used in practice, and the possibility of easily combining it with another conversion algorithm. Essentially, one first runs our algorithm to extract a collection of lower-dimensional faces of the cone that is guaranteed to contain all the ``down-set extreme rays.'' Then one can find all the extreme rays for each of these faces using standard techniques and finally check which ones correspond to down-sets (which is a straightforward computation). Finally, we stress that even in this situation, the algorithm will provide useful information about ``excluded regions'' on the individual faces that in many cases can be used to simplify the derivation of the down-set extreme rays using standard conversion algorithms.

\subsection{Set-up}
\label{subsec:set-up}

\begin{figure}[tbp]
    \centering
    \begin{subfigure}{0.49\textwidth}
    \centering
    \begin{tikzpicture}[scale=0.5]
    \fill[fill=blue!15!] (0,0) -- (8,2) -- (3,6) ;
    \fill[fill=red!15!] (0,0) -- (3,6) -- (-1,7) ;
    \draw[->, gray, thick] (0,0) -- (3,6) ;
    \fill[fill=yellow!80, opacity=0.5] (0,0) -- (-1,7) -- (4,3) ;
    \fill[fill=green!20, opacity=0.5] (0,0) -- (8,2) -- (4,3) ;
    \draw[->, gray, thick] (0,0) -- (4,3) ;
    \draw[->, gray, thick] (0,0) -- (-1,7) ;
    \draw[->, gray, thick] (0,0) -- (8,2) ;
    \node (f1) at (1,4.5) {{\footnotesize $\face_{_{\!1}}$}}; 
    \node (f2) at (1.6,5.6) {{\footnotesize $\face_{_{\!2}}$}}; 
    \node (f3) at (4.5,4) {{\footnotesize $\face_{_{\!3}}$}}; 
    \node (f4) at (5,2) {{\footnotesize $\face_{_{\!4}}$}};
    \end{tikzpicture}
    \subcaption[]{}
    \end{subfigure}
    \begin{subfigure}{0.49\textwidth}
    \centering
    \begin{tikzpicture}[scale=0.5]
    \node[draw, rounded corners] (ao) at (0,0) {{\footnotesize $1$}};
    \node[draw, rounded corners] (abo) at (-3,3) {{\footnotesize $2$}};
    \node[draw, rounded corners] (ab) at (-3,0) {{\footnotesize $3$}};
    \node[draw, rounded corners] (bao) at (3,0) {{\footnotesize $4$}};
    \draw[-] (abo.south) -- (ab.north);
    \end{tikzpicture}
    \subcaption[]{}
    \end{subfigure}
    \par\bigskip
    \begin{subfigure}{0.49\textwidth}
    \centering
    \begin{tikzpicture}[scale=0.5]
    \node[draw, align=center, rounded corners] (point) at (0,2) {{\footnotesize $\{1,2,3,4\}$}};
    \node[draw, align=center, rounded corners] (line1) at (-6,0) {{\footnotesize $\{1,2\}$}};
    \node[draw, align=center, rounded corners] (line2) at (-2,0) {{\footnotesize $\{2,3\}$}};
    \node[draw, align=center, rounded corners] (line3) at (2,0) {{\footnotesize $\{3,4\}$}};
    \node[draw, align=center, rounded corners] (line4) at (6,0) {{\footnotesize $\{1,4\}$}};
    \node[draw, align=center, rounded corners, fill=red!15!] (face1) at (-6,-2) {{\footnotesize $ \{2\}$}};
    \node[draw, align=center, rounded corners, fill=blue!15!] (face2) at (-2,-2) {{\footnotesize $ \{3\}$}};
    \node[draw, align=center, rounded corners, fill=green!15!] (face3) at (2,-2) {{\footnotesize $ \{4\}$}};
    \node[draw, align=center, rounded corners, fill=yellow!30!] (face4) at (6,-2) {{\footnotesize $ \{1\}$}};
    \node[draw, align=center, rounded corners] (cone) at (0,-4) {{\footnotesize $\varnothing$}};
    \draw[-] (point.south) -- (line1.north);
    \draw[-] (point.south) -- (line2.north);
    \draw[-] (point.south) -- (line3.north);
    \draw[-] (point.south) -- (line4.north);
    \draw[-] (line1.south) -- (face1.north);
    \draw[-] (line1.south) -- (face4.north);
    \draw[-] (line2.south) -- (face1.north);
    \draw[-] (line2.south) -- (face2.north);
    \draw[-] (line3.south) -- (face2.north);
    \draw[-] (line3.south) -- (face3.north);
    \draw[-] (line4.south) -- (face3.north);
    \draw[-] (line4.south) -- (face4.north);
    \draw[-] (face1.south) -- (cone.north);
    \draw[-] (face2.south) -- (cone.north);
    \draw[-] (face3.south) -- (cone.north);
    \draw[-] (face4.south) -- (cone.north);
    \end{tikzpicture}
    \subcaption[]{}
    \end{subfigure}
    \begin{subfigure}{0.49\textwidth}
    \centering
    \begin{tikzpicture}[scale=0.5]
    \node[draw, align=center, rounded corners] (point) at (0,4) {{\footnotesize $\{1,2,3,4\}$}};
    \node[draw, align=center, rounded corners] (ds123) at (-4,2) {{\footnotesize $\{1,2,3\}$}};
    \node[draw, align=center, rounded corners] (ds134) at (4,2) {{\footnotesize $\{1,3,4\}$}};
    \node[draw, align=center, rounded corners] (ds234) at (0,2) {{\footnotesize $\{2,3,4\}$}};
    \node[draw, align=center, rounded corners] (ds23) at (-6,0) {{\footnotesize $\{2,3\}$}};
    \node[draw, align=center, rounded corners] (ds13) at (-2,0) {{\footnotesize $\{1,3\}$}};
    \node[draw, align=center, rounded corners] (ds34) at (2,0) {{\footnotesize $\{3,4\}$}};
    \node[draw, align=center, rounded corners] (ds14) at (6,0) {{\footnotesize $\{1,4\}$}};
    \node[draw, align=center, rounded corners] (ds3) at (-4,-2) {{\footnotesize $ \{3\}$}};
    \node[draw, align=center, rounded corners] (ds1) at (0,-2) {{\footnotesize $ \{1\}$}};
    \node[draw, align=center, rounded corners] (ds4) at (4,-2) {{\footnotesize $ \{4\}$}};
    \node[draw, align=center, rounded corners] (cone) at (0,-4) {{\footnotesize $\varnothing$}};
    \draw[-] (point.south) -- (ds123.north);
    \draw[-] (point.south) -- (ds134.north);
    \draw[-] (point.south) -- (ds234.north);
    \draw[-] (ds123.south) -- (ds23.north);
    \draw[-] (ds123.south) -- (ds13.north);
    \draw[-] (ds134.south) -- (ds13.north);
    \draw[-] (ds134.south) -- (ds14.north);
    \draw[-] (ds134.south) -- (ds34.north);
    \draw[-] (ds234.south) -- (ds23.north);
    \draw[-] (ds234.south) -- (ds34.north);
    \draw[-] (ds23.south) -- (ds3.north);
    \draw[-] (ds13.south) -- (ds1.north);
    \draw[-] (ds13.south) -- (ds3.north);
    \draw[-] (ds14.south) -- (ds1.north);
    \draw[-] (ds14.south) -- (ds4.north);
    \draw[-] (ds34.south) -- (ds3.north);
    \draw[-] (ds34.south) -- (ds4.north);
    \draw[-] (ds1.south) -- (cone.north);
    \draw[-] (ds3.south) -- (cone.north);
    \draw[-] (ds4.south) -- (cone.north);
    \end{tikzpicture}
    \subcaption[]{}
    \end{subfigure}
    \caption{A 3-dimensional pointed cone $\cone$ in $\mathbb{R}^3$ specified by (a) a set of four (non-redundant) inequalities, (b) a choice of a poset $\p$ for the inequalities, (c) the lattice of faces of $\cone$, and (d) the lattice of down-sets of $\p$. The combinatorial automorphism groups are $G_{\cone}\simeq D_4$ and $G_{\p}\simeq \mathbb Z_2$. Notice that $G_{\p}=\{e,(14)\}$ and is not a subgroup of $G_{\cone}$, implying that $G_{\cone}\wedge G_{\p}=\{e\}$. The D-ERs are $\{2,3\}, \{3,4\}$ and $\{1,4\}$, whereas $\{1,2\}$ is an ER but not a D-ER. On the other hand, the down-sets $\{3,4\}$ and $\{1,3\}$ belong to the same $G_{\p}$-orbit, but $\{1,3\}$ is not an ER.}
    \label{fig:3d_cone}
\end{figure}
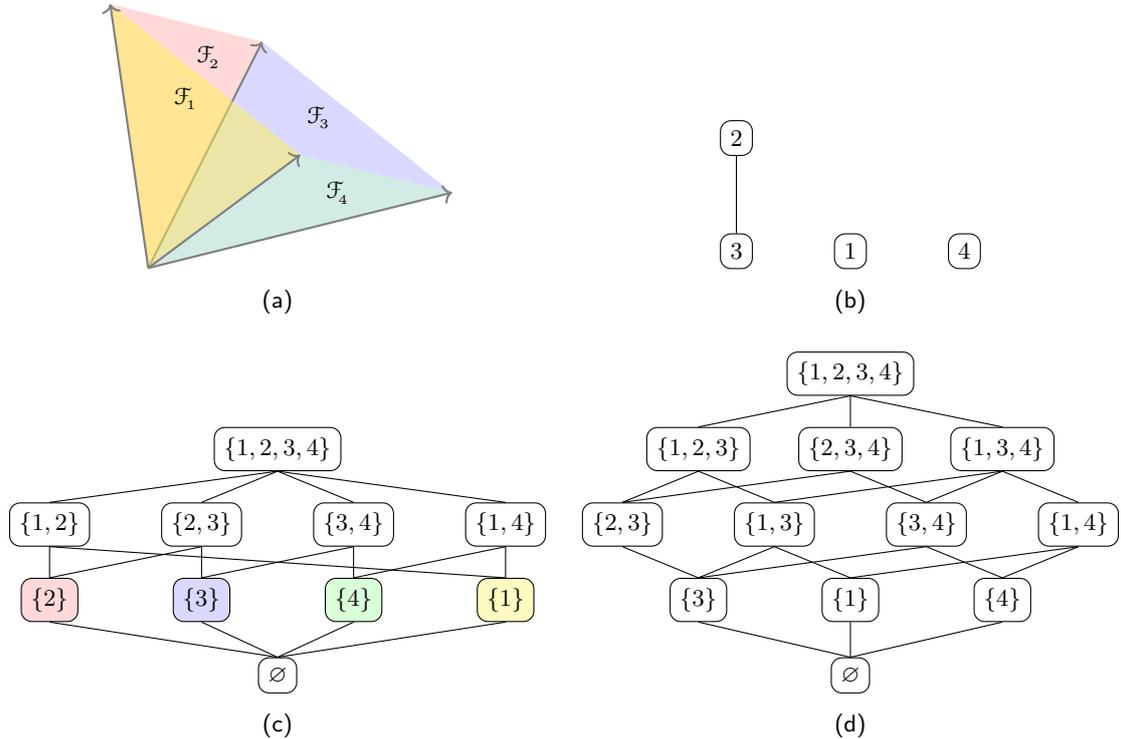

We begin by describing the set-up for our algorithm and by reviewing a few definitions and basic facts about polyhedral cones and systems of linear inequalities. For more details we refer the reader to standard references on the subject (e.g., see \cite{z-lop-95}). As a concrete and simple example of the set-up and the concepts introduced, we refer the reader to \Cref{fig:3d_cone}.

\paragraph{Polyhedral cone:} 

Consider a vector space $\mathbb{R}^n$, along with a finite collection of linear inequalities that specify a full-dimensional pointed cone $\cone$.\footnote{\,Any cone specified by a finite set of inequalities is necessarily convex and polyhedral. If the cone is not full-dimensional we imagine to first ``dimensionally reduce'' the set-up to the subspace spanned by the cone. If it is not pointed, it is the Minkowski sum of a lower-dimensional cone and a linear subspace, and we can focus on this lower-dimensional cone.}  We will write an inequality as $\E_i(\vec{v}) \geq 0$, where $\E_i$ is an element of the dual space of $\mathbb{R}^n$ and $\vec{v}$ denotes an arbitrary vector in $\mathbb{R}^n$. Let $\mgs$ denote the set of dual vectors $\E_i$, with $i \in [k]$ and $k = |\mgs|$. A \textit{face} $\face$ of $\cone$ is defined as the intersection of $\cone$ with a hyperplane $\hyp$ in $\mathbb{R}^n$ such that $\cone$ is entirely contained in one of the two (closed) half-spaces specified by $\hyp$.\footnote{\,The full cone is also a face, obtained in the degenerate case where $\E_i=\vec{0}$. In what follows we will ignore this face and always assume $\E_i\neq\vec{0}$.} A codimension-1 face is called a \textit{facet}, and a $1$-dimensional face an \textit{extreme ray}.

In general, given an arbitrary collection of inequalities that specifies a full-dimensional pointed cone, it is not always the case that each inequality determines a facet. An inequality $\E_i(\vec{v})\geq 0$ such that $\face_i$ is not a facet is called a \textit{redundant inequality}. Geometrically, a redundant inequality $\E_i(\vec{v})\geq 0$ is characterized by the fact that for any $\vec{v}_*\in\cone$ such that $\E_i(\vec{v}_*)=0$, we necessarily have $\E_j(\vec{v}_*)=0$ for some other $\E_j$ in $\mgs$. Algebraically, it is characterized by the fact that it is a \textit{positive} linear combination of other inequalities in the set. Thus, if $\E_i(\vec{v})\geq 0$ is redundant, then the system of inequalities obtained by ``ignoring'' $\E_i$ (i.e., the set of inequalities specified by $\mgs'=\mgs\setminus \{\E_i\}$) carves out the same  cone $\cone$. Given an arbitrary set of inequalities, one can use a linear program to check if a specific inequality is redundant. Indeed, there are efficient algorithms which reduce the set of inequalities to a minimum subset that specifies the same cone, where each inequality corresponds to a facet \cite{Fukuda15}. We therefore assume that each $\E_i\in\mgs$ corresponds to a non-redundant inequality.\footnote{\,This assumption is mainly made here for the purpose of defining the cone and its symmetries. However, we will introduce certain redundant inequalities in \S\ref{subsec:structure} to specify a region of the cone that we want to ignore. This is why at the end of the previous section we chose to keep the maximal elements of the MI-poset.}

\paragraph{Poset:}

Having described the geometric structure of our set-up, we now assume that we further have a partial order among the elements of $\mgs$, and denote by $\p=(\mgs,\preceq)$ the resulting poset. For each face $\face$ of $\cone$, consider the subset $\van(\face)\subseteq\mgs$ associated to $\face$ given by
\begin{equation}
\label{eq:van_set_of_face}
    \van(\face)=\{\E_i\in\mgs,\; \E_i(\vec{v})=0 \;\; \text{for all} \;\; \vec{v}\in\text{int}(\face)\} .
\end{equation}
This is the subset of $\mgs$ whose elements specify the inequalities which are saturated by the vectors in the interior of $\face$ (notice that this set is independent from the choice of vector $\vec{v}\in\text{int}(\face)$). If $\van(\face)$ is a \textit{down-set} in $\p$, we will say that $\face$ is a \textit{down-set face} (D-face). Our goal will be to find the 1-dimensional D-faces, or \textit{down-set extreme rays} (D-ERs), of $\cone$.\footnote{\,For the case where $\p$ is the MI-poset, D-faces and D-ERs are precisely the KC-faces and KC-ERs, respectively.}

\paragraph{Symmetries:}

Our next assumption regards the symmetries that might be present in this set-up. Given a polyhedral cone, the set of its faces ordered by reverse inclusion\footnote{\,We use reverse inclusion rather than inclusion because it will be more convenient for comparison with another lattice that will be introduced momentarily. This is also the convention adopted in \cite{He:2022bmi}.} forms a lattice, which we shall call the face lattice $\lat{\cone}$. Given \eqref{eq:van_set_of_face}, we can interpret each element of this lattice as the subset $\van(\face)$. Consider now the group $\text{Sym}(k)$, whose action on $\mgs$ consists of all possible permutations of $\mgs$, and also its natural inherited (element-wise) action on the power-set $2^{\mgs}$. To simplify the following discussion, we will henceforth not distinguish between these two actions, and it should always be clear from context which one we mean. In general, an arbitrary element of $\text{Sym}(k)$ does not preserve $\lat{\cone}$, even as a set.\footnote{\,For example, in \Cref{fig:3d_cone} the set $\lat{\cone}$ is not closed with respect to the action of $g=(23)\in\text{Sym(4)}$.} Therefore, we will focus on the largest subgroup $G_\cone\leq \text{Sym}(k)$ of permutations that do preserve $\lat{\cone}$, including its lattice structure, and call $G_\cone$ the \textit{combinatorial automorphism group of the cone} $\cone$ \cite{bremner2007}.\footnote{\,\label{ft:autom}More precisely, we demand that each $g\in G_\cone$ acts as a bijective \textit{homomorphism} (with respect to meet and join) of $\lat{\cone}$ into itself. As explained in detail in Appendix~\ref{appendix}, since $\mgs$ is finite, and $\lat{\cone}$ is the lattice of closed subsets of $\mgs$ under a closure operator (introduced in the next section), this requirement is effectively already implied by the assumption that the set $\lat{\cone}$ is closed with respect to the $g$ operation.} 

Since the geometric and order structures of our set-up are independent from each other, an element of $G_\cone$ in general might map a D-face to another face that is not a D-face. We assume now that the ``structure'' of the down-set corresponding to a D-face is important for the problem we want to solve with our algorithm,\footnote{\,Otherwise, there is no need to introduce the poset structure of $\mgs$ in the first place.} so we want to consider the group of transformations that preserve this structure. To make this precise, we specify in what sense two
down-sets can be considered structurally equivalent. The set of down-sets of $\p$, ordered by inclusion, forms a lattice $\lat{\p}$, which we shall call the \textit{down-set lattice} \cite{birkhoff1967lattice}. As for the case of $\lat{\cone}$, we then define the \textit{combinatorial automorphism group of the poset} $G_{\p}$ as the largest subgroup of $\text{Sym}(k)$ that preserves this lattice. Two down-sets of $\p$ will be considered equivalent if they are related by such a permutation.

In general, $\lat{\cone}$ and $\lat{\p}$, and therefore $G_\cone$ and $G_\p$, are unrelated. Since we want to preserve both lattices, we are then interested in the largest common subgroup\footnote{\,The \textit{set} $\lat{\cone\p}=\lat{\cone}\cap\lat{\p}$ also admits a lattice structure, since it is the set of closed subsets of $\mgs$ under the closure operator $\cl_{\text{FD}}$ (defined in the next section). However, in general $\lat{\cone\p}$ is not a sublattice of either $\lat{\cone}$ or $\lat{\p}$, though any $g\in G_{\cone\p}$ is still an automorphism of $\lat{\cone\p}$ for the same reason explained in \Cref{ft:autom}.}
\begin{equation}
\label{eq:group}
    G_{\cone\p}:=G_{\cone}\wedge G_{\p}.
\end{equation}
For generic polyhedral cones and posets, it may however not be so easy to determine this group. Thus, in our algorithm below, we will simply assume we know a subgroup $G\leq G_{\cone\p}$. The goal of the algorithm will then be to find all the D-ERs of $\cone$ only up to the action of $G$. Of course, there might exist a larger group $G'$, with $G \leq G' \leq  G_{\cone\p}$, and this would also preserve both the face and down-set lattices. Knowing $G'$ would naturally lead to an increased efficiency of our algorithm, so we will take $G$ as the largest such subgroup we can conveniently determine.

\paragraph{Excluded region:} 

Finally, we also consider the possibility that in the problem we want to solve with our algorithm, we are only interested in finding the extreme rays that do not belong to a specified collection $\regcone$ of faces of $\cone$. We will call $\regcone$ the \textit{excluded region} of $\cone$ and assume that it is closed under the action of $G$. Specifically, each face $\face\in\regcone$ can be described by its corresponding set of saturated inequalities $\van(\face)$, and using $\van(\face)$ it is straightforward to close its orbit under the action of $G$.\footnote{\,Notice that we are not making any assumption about the specific faces that appear in $\regcone$, and in general some of them can be facets. In this case it might be possible to simplify the problem by ignoring some of the inequalities to obtain a larger cone that is guaranteed to include all extreme rays of $\cone$ which are not in $\regcone$. We will further comment on this possibility in the next section, since in one variation of our algorithm we will apply this strategy to lower-dimensional faces generated during the computation.} While for a given specific problem the excluded region can be empty (no face is excluded), this notion, appropriately generalized to lower-dimensional faces, will play a key role during the computation. Since the algorithm proceeds by iteration, exploring at each step orbits of faces of various dimensionalities, we will use the excluded region to keep track of faces that have already been explored, or that are guaranteed to not include any new D-ER (more on this below), thereby allowing us to ignore them at subsequent steps of the computation.

\subsection{Closure operators}
\label{subsec:closures}

As mentioned before, our strategy to find the D-ERs of $\cone$ will consist of scanning over the entire power-set $2^{\mgs}$ while ignoring most subsets $\x\subseteq\mgs$ that are either not down-sets or do not correspond to faces.\footnote{\,In this subsection, we will ignore the symmetries and the excluded region $\regcone$; we will then combine all these ingredients in the following subsection.} Since the conditions for $\x$ to be a down-set or a face correspond to two completely different and unrelated structures, namely the cone and the poset, we need a sufficiently general tool that ``combines'' both of these conditions and make the scan over $2^{\mgs}$ more efficient. One such tool that will be key in our algorithm is the notion of a \textit{closure operator} \cite{Davey_Priestley_2002}.

\begin{defi}
\label{def:closure}
    Given any set $\mgs$, a closure operator on $\mgs$ is a map $\text{\emph{cl}}:2^{\mgs}\rightarrow2^{\mgs}$ such that, for any $\x,\y \subseteq \mgs$, we have
    \begin{enumerate}[label={\footnotesize \roman*)}]
        \item $\x \subseteq \text{\emph{cl}}(\x)$
        \item $\x \subseteq \y \, \implies \, \text{\emph{cl}}(\x) \subseteq \text{\emph{cl}}(\y)$
        \item $\text{\emph{cl}}(\text{\emph{cl}}(\x)) = \text{\emph{cl}}(\x)$.
    \end{enumerate}
\end{defi}
A subset $\x$ is said to be \textit{closed} if $\cl(\x)=\x$. Notice that from \Cref{def:closure}, we can immediately derive the implication
\begin{equation}
\label{eq:cl_implication}
    \x\subseteq\y\subseteq\cl(\x) \quad \implies \quad \cl(\y)=\cl(\x) .
\end{equation}
Indeed, taking the closure of all the terms on the l.h.s.\ of \eqref{eq:cl_implication} and using the second property in \Cref{def:closure}, we get $\cl(\x)\subseteq\cl(\y)\subseteq\cl(\cl(\x))$. The third property then implies $\cl(\x)\subseteq\cl(\y)\subseteq\cl(\x)$, from which \eqref{eq:cl_implication} follows. This observation suggests that if we are only interested in closed subsets, and we can efficiently compute $\cl(\x)$ for any $\x$, then we might be able to scan over $2^{\mgs}$ more efficiently if for most $\x$ 
there are ``many'' subsets $\y$ between $\x$ and $\cl(\x)$. 

In our set-up, we will consider three different closure operators.

\begin{enumerate}
    
    \item \emph{Down-set closure} ($\dc$): For any given $\x\subseteq\mgs$, we define
    \begin{equation}
    \label{eq:dc_defi}
        \dc(\x)=\{\E_i\in\mgs,\; \E_i\preceq \E_j\; \text{for some}\; \E_j\in\x \} .
    \end{equation}
    Clearly this is a closure operator according to \Cref{def:closure}.
    
    \item \emph{Linear-dependence closure} ($\lc$): For any given $\x\subseteq\mgs$, we define
    \begin{equation}
    \label{eq:lc_defi}
        \lc(\x)=\{\E_i\in\mgs,\; \text{rank}(\x\cup \{\E_i\}) = \text{rank}(\x) \} .
    \end{equation}
    Again, this is a closure operator according to \Cref{def:closure}.

    \item \emph{Face closure} ($\fc$): The last closure operator that we introduce is more subtle. We will give a geometric definition, and refer the reader to \cite{bjorner} for more details and the proof that this definition indeed gives a closure operator (although intuitively this should already be quite clear from the geometric description). Let $\hyp_i$ be the hyperplane consisting of the vectors $\vec v$ satisfying the equality $\E_i(\vec v) = 0$.
    For any given $\x\subseteq\mgs$, let $\mathbb{V}_{\x}$ be the linear subspace of $\mathbb{R}^n$ given by
    \begin{equation}
        \mathbb{V}_{_{\!\x}}=\bigcap_{\E_i\in\x} \hyp_i ,
    \end{equation}
    and consider the face $\face_{_{\!\x}}=\cone\cap\mathbb{V}_{_{\!\x}}$. Using \eqref{eq:van_set_of_face} we define
    \begin{equation}
    \label{eq:fc_defi}
        \fc(\x)=\van(\face_{_{\!\x}}) .
    \end{equation}
    In other words, if the highest dimensional face $\face_{_{\!\x}}$ contained in $\mathbb{V}_{_{\!\x}}$ only spans a proper subspace of $\mathbb{V}_{_{\!\x}}$, then the face closure $\fc(\x)$ adds further elements from $\mgs$ to reduce $\mathbb{V}_{_{\!\x}}$ to this subspace.
\end{enumerate}

Since we are neither interested in faces that are not D-faces nor in down-sets that do not correspond to faces, we want to compose these closure operators. Consider an arbitrary set $\mgs$ and two distinct closure operators $\cl_1$ and $\cl_2$. Notice that the composition $(\cl_1\circ\cl_2)$ satisfies the first two conditions in \Cref{def:closure}. Indeed, we have
\begin{equation}
    \x\subseteq\cl_2(\x)\subseteq\cl_1(\cl_2(\x))=(\cl_1\circ\cl_2)(\x),
\end{equation}
since both $\cl_1$ and $\cl_2$ are closure operators, thereby proving the first condition in \Cref{def:closure}. Similarly, 
\begin{align}
    \x \subseteq \y \, & \implies \, \cl_2(\x) \subseteq \cl_2(\y) \, \implies \, \cl_1(\cl_2(\x)) \subseteq \cl_1(\cl_2(\y))\nonumber\\
    & \implies \, (\cl_1\circ\cl_2)(\x) \subseteq (\cl_1\circ\cl_2)(\y),
\end{align}
which proves the second condition. However, $(\cl_1\circ\cl_2)$ is in general not a closure operator because it does not satisfy the third condition in \Cref{def:closure}, as $\cl_2(\x)$ is not necessarily a closed set with respect to $\cl_1$.\footnote{\,As a simple example, consider a poset $\p$ such that there is no element which is incomparable with every other element. Let $\cl_1=\dc$ as defined above, and $\cl_2=\cl_{\text{U}}$, where $\cl_{\text{U}}$ is the \textit{up-set} closure operator defined analogously to $\dc$ where we simply replace $\preceq$ with $\succeq$ in \eqref{eq:dc_defi}. Then the only subset $\x\subseteq\p$ which is closed under both $\cl_1$ and $\cl_2$ is $\x=\p$. } 

Consider now the operator $(\cl_1\circ\cl_2)^{\iota}$, defined by applying $\iota$ iterations of $(\cl_1\circ\cl_2)$. Each application of $(\cl_1\circ\cl_2)$ either augments the previous set by some element, or it preserves the set. In the latter case, further applications of $(\cl_1\circ\cl_2)$ do not do anything since the set is already closed, whereas in the former case, the maximum number of augmentations is obviously bounded by the cardinality of the set (although typically this bound is far from being saturated).  Therefore, if we define
\begin{align}
    \cl_{12}:\quad & 2^{\mgs} \rightarrow 2^{\mgs}\nonumber\\
        & \x \mapsto (\cl_1\circ\cl_2)^{|\mgs|}(\x) ,
\end{align}
it then follows that $\cl_{12}$ is a closure operator. In the next section, we will consider the following closure operators, obtained in this way by combining $\fc$ and $\lc$ with $\dc$, so that
\begin{align}
    \fdc & := (\fc\circ\dc)^{|\mgs|} \nonumber\\
     \ldc & := (\lc\circ\dc)^{|\mgs|}.
\end{align}
We will use the first operator $\fdc$ to formulate the algorithm in terms of faces of the cone. The second operator $\ldc$ will instead appear in the variation of the algorithm, discussed in \S\ref{subsec:variations}, which we actually used to derive the results presented in \S\ref{sec:five} and \S\ref{sec:results}.

Finally, let us make a brief comment about the relation between a closure operator and a group action that might encode the symmetries of a specific problem (as described in the previous section). Given an arbitrary closure operator $\cl$ on a set $\mgs$, the set of all subsets of $\mgs$ closed under $\cl$, which we denote as $\lat{\cl}$, is a lattice \cite{Davey_Priestley_2002}. Let $G$ be a subgroup of $\text{Sym}(|\mgs|)$ such that the set $\lat{\cl}$ is closed with respect to each $g\in G$, where $g$ acts element-wise. Then we prove in Appendix~\ref{appendix} that every $g\in G$ is a lattice homomorphism of $\lat{\cl}$ (i.e., it preserves meet and join), and for every $\x\subseteq\mgs$ (not necessarily closed) and $g\in G$, we have
\begin{equation}
\label{eq:commutation}
    \cl(g(\x)) = g(\cl(\x)).
\end{equation}
We are now ready to describe the algorithm in detail.

\subsection{The algorithm in detail}
\label{subsec:structure}

We first specify the main ingredients of the algorithm and list the key steps to orient the reader. Subsequently, when we prove the completeness of the algorithm (\Cref{thm:completeness}), we explain its logic and the rationale behind its construction.

The computation will be organized in terms of certain D-faces of $\cone$ which will be found by intermediate steps of the algorithm. However, since our goal is to compute just the D-ERs, each belonging to multiple faces,
only some D-faces will be determined by the algorithm (which D-faces in particular will depend on different choices made during the computation, devised such that the chosen collection contains all the D-ERs). Each D-face $\face$ will be associated to a triplet of the form $\mathfrak{T}=(\ds,\irr,\stab{}_\ds)$, where $\ds=\van(\face)$, $\stab{}_\ds$ is the subgroup of $G$ that \textit{stabilizes} $\ds$,\footnote{\,This is the \textit{stabilizer group} of $\ds$, i.e., the subgroup of $G$ of elements that leave $\ds$ unchanged.} and $\irr\subseteq\mgs$ is an \textit{up-set} specifying an excluded region $\regface$ on the boundary of $\face$ as follows. To each $\E_i\in\irr$, we associate the hyperplane 
\begin{equation}
\label{eq:ineq_hyper}
    \hyp_i=\{\vec{v}\in\mathbb{R}^n, \E_i(\vec{v})=0\} .
\end{equation}
This hyperplane specifies a face $\face_i$ on the boundary of $\face$ via 
\begin{equation}
\label{eq:face_from_ineq}
    \face_i=\hyp_i\cap\face.
\end{equation}
Given $\irr$, we then define the excluded region of $\face$ as the collection of faces
\begin{equation}
\label{eq:excluded_region_def}
   \regface=\{\face_i,\; \E_i\in\irr\}.
\end{equation}
Notice that while in \S\ref{subsec:set-up} we characterized a face in the excluded region of the cone $\regcone$ by the set $\van (\face)$, here we are instead specifying each element of $\regface$ (which is also a face of $\cone$) via the saturation of a single inequality. The equivalence of these descriptions will become evident momentarily, when we describe the initialization of the algorithm and explain how by introducing certain redundant inequalities we can translate the description of $\regcone$ into the language that we are using here.

The algorithm is initialized with one of these triplets. The \textit{main subroutine} then ``processes'' this triplet and outputs a set of new  triplets which are ``simpler'' in a sense that we will clarify below, and possibly some D-ERs. We then repeat this procedure until a certain criterion is met. This criterion is specified during the initialization and allows for some flexibility in how the algorithm can be used in practice. A schematic description of the algorithm is shown in Algorithm~\ref{alg:structure}, and we now describe each component in detail.

\begin{algorithm}[tb]
\caption{Global structure of the algorithm}
\label{alg:structure}
\BlankLine
\Input{the initial triplet $\mathfrak T^{(0)} = (\ds^{(0)},\irr^{(0)},\stab{(0)})$ \\
$f(\mathfrak{T},\lambda)$ in \Cref{cri:stop} used to stop the computation\\
$g(\mathfrak{T})$ in \Cref{cri:order} used by the main subroutine to generate new triplets}
\Output{a set $\textgoth{T}$ of triplets such that Condition~\ref{cond:triplet_conditions} holds and $f(\mathfrak{T},\lambda)=\text{TRUE}$\\
a set $\textgoth{R}$ of D-ERs}
\BlankLine
$\textgoth{T} \leftarrow \{\mathfrak{T}^{(0)}\}$\;
$\textgoth{R} \leftarrow \varnothing$\;
\While{$\exists\,\mathfrak{T}\in\textgoth{T}$ such that $f(\mathfrak{T},\lambda)=\text{\emph{FALSE}}$}{
    run the main subroutine (Algorithm~\ref{alg:main_subroutine}) on the first $\mathfrak{T}\in\textgoth{T}$ such that $f(\mathfrak{T},\lambda)=\text{FALSE}$\;
    update $\textgoth{T}$ by deleting $\mathfrak{T}$\;
    add to $\textgoth{T}$ the new triplets in the output of the main subroutine\;
    add to $\textgoth{R}$ any new D-ER in the output of the main subroutine\;
}
\end{algorithm}

\paragraph{Initialization:}  

The algorithm begins with a triplet $\mathfrak T^{(0)} = (\ds^{(0)},\irr^{(0)},\stab{(0)})$ corresponding to the set-up described in \S\ref{subsec:set-up}. The face $\face_0$ is the full cone $\cone$, which corresponds to $\ds^{(0)}=\varnothing$; $\stab{(0)}=G$ is the subgroup of $G_{\cone\p}$ that we assume to be known explicitly; and $\excluded$ corresponds to a new set of \textit{redundant} inequalities  which conveniently specifies the excluded region $\regcone$ of the cone, and is constructed as follows. As explained before, each face $\face_i\in\regcone$ is described by the set $\van(\face_i)$, which we can use to construct the dual vector 
\begin{equation}\label{eq:redundant}
    \E_i= \! \sum_{\E_j\,\in\,\van(\face_i)}\, \!\!\! \sigma_j \E_j  ,
\end{equation}
for some choice of $\sigma_j \in \mathbb R^+$. Since this is a positive linear combination, the inequality $\E_i(\vec{v})\geq 0$ is redundant, and it is saturated only by vectors that saturate all inequalities $\E_j(\vec{v})\geq 0$ for $\E_j\in \van(\face_i)$. Equivalently, this means that if we let $\mathbb{H}_i$ be the hyperplane specified by $\E_i$ as in \eqref{eq:ineq_hyper}, we have $\face_i=\mathbb{H}_i\cap\cone$. This is convenient because it allows us to treat the excluded region $\regcone$ of $\cone$ in the same way as we treat the excluded region $\regface$ of a D-face $\face$ obtained at an intermediate stage of the computation. In terms of the poset structure, there are many ways to extend the poset $\mgs$ to include these new dual vectors $\E_i$ (one for each $\face_i \in \regcone$). The simplest is to just have the redundant $\E_i$ form an antichain that is ``wholly disjoint'' from the rest of the poset,\footnote{\,By this we mean that each $\E_i$ is incomparable to any other element in the poset.} although it is oftentimes natural to add the redundant $\E_i$ ``at the top'' of the original poset.\footnote{\,By this we mean that each $\E_i$ covers some other element of the poset but is not covered by any element.} In this case, a simple solution that is guaranteed to preserve the symmetry of the problem is to specify cover relations such that each new $\E_i$ covers every maximal element of the original poset. However, this is not strictly necessary, and in specific situations other extensions might be more natural.\footnote{\,For example, in the case of the MI-poset, the only redundant inequalities are precisely the ones given by the entropies ``at the top'' of the poset, and they specify part of the excluded region.} Importantly, regardless of how we choose to extend the poset with the new dual vectors, while the full set of D-ERs of $\cone$ might change because of the extension, the set of D-ERs which are not in $\regcone$ remains unchanged.

In addition to the initial triplet $\mathfrak{T}^{(0)}$, the initialization of the algorithm also requires the specification of a function that will be used to stop the computation. Denoting by $\mathfrak{T}$ an arbitrary triplet generated during the computation, we define this function as follows.

\begin{fcn}
\label{cri:stop}
    A Boolean function $f(\mathfrak{T},\lambda)$, fixed for all triplets, which returns \emph{TRUE} if a chosen set of attributes of $\mathfrak{T}$ (which enter into the definition of $f$) satisfies a set of constraints parametrized by $\lambda$. 
\end{fcn}

Typically, $f$ will be a function that can be computed efficiently from the data given by a triplet, and should be chosen depending on the specific details of the problem to which the algorithm is applied. One useful example of this function is an upper bound to the dimension of the face corresponding to the triplet $\mathfrak{T}$, where $f$ returns TRUE if the bound is satisfied and FALSE otherwise. If one wants to use the algorithm to \textit{directly} find all $G$-orbits of D-ERs of $\cone$ which are not in $\regcone$, then one can simply set this bound to 1. Another option instead is to set a larger upper bound for the dimension of a face, and use the algorithm to restrict the search space to a point where one can then use other methods to find the extreme rays of the resulting lower-dimensional faces. An alternative option that might be useful is to stop the computation only when the excluded region on a face is sufficiently vast.\footnote{\,We will use the first option to obtain the results presented in \S\ref{sec:five} and a combination of the latter two to obtain the results presented in \S\ref{sec:results}.}

Finally, at this stage one should also choose a function $g$ to specify another criterion (see \Cref{cri:order} below), which will be used at each iteration by the main subroutine to generate new triplets given one triplet as input. As we will explain, the choice of this function affects which D-faces are found by the algorithm at intermediate steps of the computation, and different choices might make a significant difference in terms of efficiency.

\begin{algorithm}[tb]
\caption{The main subroutine}
\label{alg:main_subroutine}
\BlankLine
\Input{a triplet $\mathfrak{T}=(\ds,\irr,\stab{}_\ds)$ that satisfies Condition~\ref{cond:triplet_conditions}}
\Output{a set $\ddot{\textgoth{T}}$ of triplets that satisfy Condition~\ref{cond:triplet_conditions}\\
a set $\textgoth{R}$ of D-ERs}
\BlankLine
$\orb \leftarrow \text{the maximal elements of}\; \mgs\setminus (\ds\cup\irr)$\;
$\mathfrak{M} \leftarrow \text{the partition of}\;\orb$ into $\stab{}_\ds$-orbits\;
$\mathfrak{D} \leftarrow \varnothing$\;
\For{each orbit $\mathcal{M}_m\in\mathfrak{M}$}
{
    choose an arbitrary element $\E_m$ of the orbit $\mathcal{M}_m$\; 
    $\ds_{m} \leftarrow \fdc(\ds\cup \{\E_m\})$\;
    append $\ds_{m}$ to $\mathfrak{D}$\;
    }
\nl $\overline{\irr} \leftarrow \text{the union of}\;\; \irr\;\; \text{and all orbits}\;\, \mathcal{M}_m\; \text{of}\;\;  \mathfrak{M}\;\; \text{such that}\;\;  \ds_m\cap\,\irr\,\neq\,\varnothing$\;
 $\dot{\mathfrak{D}} \leftarrow \{\ds_{m}\in\mathfrak{D},\, \ds_{m}\cap\irr = \varnothing\}$\;
$\dot{\mathfrak{M}} \leftarrow \{\mathcal{M}_{m}\in\mathfrak{M},\, \ds_{m}\cap\irr = \varnothing\}$\;
 \For{each $\mathcal{D}_m\in\dot{\mathfrak{D}}$}
{
    \If{the face $\face_m$ is 1-dimensional}
    {add $\face_m$ to $\textgoth{R}$\;
    delete $\ds_m$ from $\dot{\mathfrak{D}}$\;
    add all elements of $\mathcal{M}_m$ to $\overline{\irr}$\;
    delete $\mathcal{M}_m$ from $\dot{\mathfrak{M}}$\;
    }
    }
\nl $(\ddot{\mathfrak{D}},\ddot{\mathfrak{M}}) \leftarrow$ the output of the subroutine in Algorithm~\ref{alg:permutations} run on $(\dot{\mathfrak{D}},\dot{\mathfrak{M}})$\;
\nl $\textgoth{T} \leftarrow$ the output of the subroutine in Algorithm~\ref{alg:triplets} run on $(\ddot{\mathfrak{D}},\ddot{\mathfrak{M}})$\;
\nl $\dot{\textgoth{T}} \leftarrow$ the output of the subroutine in Algorithm~\ref{alg:update} run on $\textgoth{T}$\;
\nl append to $\dot{\textgoth{T}}$ the triplet $(\ds,\irr\cup\mathcal{M},G_{\ds})$\;
$\ddot{\textgoth{T}} \leftarrow \varnothing$\;
\nl \For{each $\mathfrak{T}\in\dot{\textgoth{T}}$}
{
    \eIf{$\text{\emph{dim}}(\face) - \text{\emph{rank}}
    \pqty{\eval{\mgs\setminus\!(\ds\cup\irr)}_{\face}}
    =0$}
    {add $\mathfrak{T}$ to $\ddot{\textgoth{T}}$\;}
    {\If{$\text{\emph{dim}}(\face) - \text{\emph{rank}}
    \pqty{\eval{\mgs\setminus\!(\ds\cup\irr)}_{\face}}
    =1$}
   { \nl if $\mgs\setminus\irr=\van(\face)$ for some $\face\notin\regcone$, add $\face$ to $\textgoth{R}$\;}
    }
    }
\end{algorithm}

\paragraph{Main subroutine:} 

The main subroutine is the core of our algorithm. The input is a triplet $\mathfrak{T}=(\ds,\irr,\stab{}_\ds)$, corresponding to a face $\face$, that satisfies \Cref{cond:triplet_conditions} outlined below. During the computation, new triplets will be generated which do not necessarily satisfy these conditions. Some of these triplets will correspond to D-ERs and will be stored, while the others will be either discarded or modified such that the conditions are satisfied. The triplets in the output correspond to a simplification of the problem because for each one of them, either the down-set strictly contains $\ds$, and therefore the dimension of the corresponding face is smaller than that of $\face$, or the excluded region is strictly larger, therefore reducing the set of lower-dimensional faces that we still need to consider in our search of D-ERs. Furthermore, most new triplets have both of these properties. 

We now describe the conditions that we require for each triplet and the corresponding face. By our assumptions about the set-up described in \S\ref{subsec:set-up}, it is trivial to check that the triplet $\mathfrak{T}^{(0)}$ used to initialize the algorithm satisfies these conditions.\footnote{\,This is true except for trivial situations, such as the case where $\cone$ is 1-dimensional, which we ignore.}
\begin{condition} 
\label{cond:triplet_conditions}
Given a triplet $\mathfrak{T}=(\ds,\irr,\stab{}_\ds)$ corresponding to a face $\face$ and excluded region $\regface$, we require
\begin{align*}
         \text{{\footnotesize i)}}\quad & \ds\cap\irr = \varnothing \qquad
        && \text{{\footnotesize iii)}}\quad \irr\; \text{is stabilized by}\;\, \stab{}_\ds \\
        \text{{\footnotesize ii)}}\quad & \text{\emph{dim}}(\face)>1
        && \text{{\footnotesize iv)}}\quad \text{\emph{dim}}(\face) = 
        \text{\emph{rank}}\,\big(\mgs\setminus\!(\ds\cup\irr)\big|_{\face}\big) 
\end{align*}
\end{condition}

The first condition ensures that the excluded region $\regface$ on $\face$ is not the entire face. The second condition demands that $\face$ is not a D-ER. The third condition ensures that if a D-ER is in the excluded region $\regface$, then any other D-ER in the same $\stab{}_\ds$-orbit is also excluded. Finally, the last condition deserves a more careful explanation. Suppose we have a D-face $\face$, with $\van(\face)=\ds$, and an excluded region $\regface$ specified by $\irr$. Since we are only interested in finding D-ERs in $\face$ outside of $\regface$, we only care about ERs that do not saturate any of the inequalities specified by $\irr$. We denote by $\eval{\mgs\setminus\!(\ds\cup\irr)}_{\face}$
the reduction of the inequalities specified by the set  $\mgs\setminus\!(\ds\cup\irr)$ to the linear subspace spanned by $\face$. The last condition then demands that the cone specified by these inequalities in the subspace spanned by $\face$ is a pointed cone.

\begin{algorithm}[tb]
\caption{Checking the equivalence of faces under $G_{\ds}$}
\label{alg:permutations}
\BlankLine
\Input{a collection $\dot{\mathfrak{D}}$ of down-sets, each one corresponding to a face\\
a collection $\dot{\mathfrak{M}}$ of subsets of $\mathcal{M}$, one for each down-set in $\dot{\mathfrak{D}}$\\
the group $G_{\ds}$ from the computation in the main subroutine}
\Output{a subset $\ddot{\mathfrak{D}}$ of $\dot{\mathfrak{D}}$ whose elements belong to distinct $G_{\ds}$-orbits\\
a new collection $\ddot{\mathfrak{M}}$ of subsets of $\mathcal{M}$, one for each down-set in $\ddot{\mathfrak{D}}$\\
}
\BlankLine
$\ddot{\mathfrak{D}} \leftarrow \varnothing$\;
$\ddot{\mathfrak{M}} \leftarrow \varnothing$\;
$\textgoth{O}\leftarrow$ the partition of $\dot{\mathfrak{D}}$ into $G_{\ds}$-orbits\;
\For{each orbit in $\textgoth{O}$}
{append to $\ddot{\mathfrak{D}}$ an arbitrary representative of the orbit\;
append to $\ddot{\mathfrak{M}}$ the union of all $\mathcal{M}_m$ in $\dot{\mathfrak{M}}$ corresponding to the down-sets $\mathcal{D}_m$ in the orbit\;
}
\end{algorithm}

The computation of the main subroutine is presented in full in Algorithm~\ref{alg:main_subroutine}. Here we clarify some of the steps, describe the internal subroutines presented in Algorithm~\ref{alg:permutations}, Algorithm~\ref{alg:triplets}, and Algorithm~\ref{alg:update}, and present a few possible choices for the function $g$ which affects the output at each iteration. The numbers in the numbered list below refer to corresponding lines in Algorithm~\ref{alg:main_subroutine}.

\begin{enumerate}[label={\scriptsize \textbf{\arabic*})}]
    \item After constructing $\overline{\irr}$, one may wonder if the efficiency of the algorithm can be improved by further checking whether there is some $\ds_m\in\dot{\mathfrak{D}}$ such that $\ds_m\cap\overline{\irr}\neq\varnothing$ while $\ds_m\cap\irr=\varnothing$, in which case one could add more elements to $\overline{\irr}$ and extend the excluded region for the new triplets that will be constructed in the following steps. However, this is not the case due to the following implication:
    \footnote{\,We can prove this as follows. Let $\ds_m = \fdc(\ds \cup \{\E_m\})$ for some $\E_m\in\mathcal{M}_{m}$, and suppose $\ds_m$ contains an element $\sf{U}\in\mathcal{M}_{m'}$ for some $\mathcal{M}_{m'}\subseteq \overline{\irr} \setminus \irr$. From the algorithm, this means that we have constructed some $\ds_{m'}=\fdc(\ds\cup \{\E_{m'}\})$ with $\E_{m'}\in\mathcal{M}_{m'}$ such that there is some $\sf{U}'\in\ds_{m'}\cap\irr$. By the defining properties of a closure operator, $\ds_m=\fdc(\ds\cup \{\E_m,\sf{U}\}) \supseteq \fdc(\ds\cup \{\sf{U}\})$, and since $\mathcal{M}_{m'}$ is an orbit of $\mathcal{M}$ generated by $G_{\ds}$, there is some $g\in G_{\ds}$ such that $g(\sf{U})=\E_{m'}$. To show that $\ds_m\cap\irr\neq\varnothing$,  it is sufficient to show that $g(\ds_m\cap\irr)\neq\varnothing$. Since $\irr$ is invariant under $G_{\ds}$, we have $g(\ds_m\cap\irr)=g(\ds_m)\cap\irr$, and by the above inclusion, $g(\ds_m)\cap\irr\supseteq g(\fdc(\ds\cup \{\sf{U}\}))\cap\irr$. Furthermore, since $g$ and $\fdc$ commute (cf.\ \eqref{eq:commutation}), and $\ds$ is invariant under $g$, we obtain $g(\ds_m\cap\irr)\supseteq \fdc(\ds\cup g(\{\sf{U}\}))\cap\irr = \fdc(\ds\cup \{\E_{m'}\}\cap\irr = \ds_{m'}\cap\irr \ni \sf{U}'$.
    } 
    \begin{equation}
        \ds_m\cap\,\irr=\varnothing\quad \implies\quad \ds_m\cap\,\overline{\irr}=\varnothing\qquad \text{for all} \;\;\ds_m\in\dot{\mathfrak{D}} .
    \end{equation}

    \item The goal of Algorithm~\ref{alg:permutations} is to check if some of the faces corresponding to the triplets in $\dot{\mathfrak{D}}$ are related by an element of $G_\ds$. 
    (Although this would not be the case for the generating elements $\mathcal{M}_m$ themselves by construction, the enlargement from using $\fdc$ can subsequently enhance the symmetry.)
    In that case, there is no reason to further process each equivalent triplet, and we simply select one. Furthermore, for each value of the index $m$, we combine all elements of each $\mathcal{M}_{m'}$ with $\ds_{m'}\sim\ds_m$ into a single new set. The resulting subsets of $\mathcal{M}$ will be used to generate the excluded regions of the new faces, as explained below.

\begin{algorithm}[tb]
\caption{Generating the triplets}
\label{alg:triplets}
\BlankLine
\Input{the collection $\ddot{\mathfrak{D}}$ of down-sets from the output of Algorithm~\ref{alg:permutations}\\
the collection $\ddot{\mathfrak{M}}$ of subsets of $\mgs$ from the output of Algorithm~\ref{alg:permutations}\\
the set $\overline{\irr}$ from the computation in the main subroutine\\
the \Cref{cri:order} chosen at initialization 
}
\Output{a collection $\textgoth{T}$ of triplets}
\BlankLine
$\textgoth{T} \leftarrow \varnothing $\;
reorder the elements of $\ddot{\mathfrak{D}}$ and $\ddot{\mathfrak{M}}$ using \Cref{cri:order}  (see main text)\; 
index the elements of these sequences by $q\in \{1,2,\ldots\,Q\}$
(this fixes $\ds_q$ and $\mathcal{M}_q$)\;
$\mathcal{M}_0 \leftarrow \varnothing$\;
$\irr_0 \leftarrow \overline{\irr}$\;
\For{each $q$}{
    $\irr_q \leftarrow \irr_{q-1}\cup\mathcal{M}_{q-1}$\;
    compute the stabilizer group $\stab{}_{\ds_q}$ of $\ds_q$\;
    append the triplet $\mathfrak{T}_q=(\ds_q,\irr_q,\stab{}_{\ds_q})$ to $\textgoth{T}$\;
    }
\end{algorithm}

    \item Algorithm~\ref{alg:triplets} generates a collection of new triplets starting from the output of Algorithm~\ref{alg:permutations}. It relies on the function $g(\mathcal{D}_m,\mathcal{M}_m)$, which has to be chosen at the initialization, to reorder the elements of $\ddot{\mathfrak{D}}$ and $\ddot{\mathfrak{M}}$. 
    \begin{fcn}
    \label{cri:order}
        A vector-valued function $g(\mathcal{D}_m,\mathcal{M}_m)$, which is fixed for all pairs and can be easily computed.
    \end{fcn}
    The algorithm orders the elements $\mathcal{D}_m\in\ddot{\mathfrak{D}}$ and $\mathcal{M}_m\in\ddot{\mathfrak{M}}$ such that the value of the first component of the output of \Cref{cri:order} is non-decreasing along the sequence. It then orders the elements $\mathcal{D}_m$ and $\mathcal{M}_m$ with the same value of the first component of \Cref{cri:order} according to the second component (again such that this is non-increasing), and so on. Depending on the details of the problem, different choices might have a decisive effect on the efficiency of the algorithm. Examples of the attributes computed by $g(\mathcal{D}_m,\mathcal{M}_m)$ can be the dimension of the corresponding face, or the cardinality of the corresponding down-set.
    
    \item Notice that at this stage of the main subroutine each triplet in the output of Algorithm~\ref{alg:triplets} already satisfies (i) and (ii) in \Cref{cond:triplet_conditions}, since the former is implemented by the definition of $\dot{\mathfrak{D}}$, and the latter by saving 1-dimensional faces into $\textgoth{R}$ and deleting the corresponding down-sets from $\dot{\mathfrak{D}}$ even before running Algorithm~\ref{alg:permutations}. Algorithm~\ref{alg:update} first implements (iii) in \Cref{cond:triplet_conditions} for each triplet by completing the orbit of $\irr_q$ under the action of $\stab{}_{\ds_q}$. It then attempts to ``extend'' the excluded regions by checking which elements in $\mgs\setminus(\ds_q\cup\irr_q)$ would, after adding them to $\ds_q$ and taking the closure, generate an element in $\irr_q$. By doing so using orbits of $\stab{}_{\ds_q}$, it is guaranteed that the new updated triplets in the output also satisfy (iii) in \Cref{cond:triplet_conditions}.

    \item The main goal of the main subroutine is to generate new triplets by adding at least one new element from $\mathcal{M}$ to $\ds$. However, we should also take into account the case where no such element is added to $\ds$. We can then add all the elements of $\mathcal{M}$ to the excluded region.

\begin{algorithm}[tb]
\caption{Updating the triplets}
\label{alg:update}
\BlankLine
\Input{the collection $\textgoth{T}$ of triplets from the output of Algorithm~\ref{alg:triplets}}
\Output{a collection $\dot{\textgoth{T}}$ of triplets that satisfies (iii) in Condition~\ref{cond:triplet_conditions}}
\BlankLine
$\dot{\textgoth{T}} \leftarrow \varnothing$\;
\For{each triplet $\mathfrak{T}_q=(\ds_q,\irr_q,\stab{}_{\ds_q})$ in $\textgoth{T}$}{
    update $\irr_q$ by completing its orbit under the action of $\stab{}_{\ds_q}$\;
   partition $\mgs\setminus(\ds_q\cup\irr_q)$ into $\stab{}_{\ds_q}$ orbits\;
   \For{each orbit}{
   choose a representative $\E_k$\;
   \If{$\text{\emph{cl}}_{\text{\emph{FD}}}(\ds_q\cup \{\E_k\})\cap\irr\neq\varnothing$}{
      add all the elements of the orbit to $\irr_q$}
    }
    }
\end{algorithm}

    \item Each triplet in $\dot{\textgoth{T}}$ satisfies the first three requirements of \Cref{cond:triplet_conditions}. At this step, we check which triplets in $\dot{\textgoth{T}}$ satisfy the last requirement, and include them as part of the final output $\ddot{\textgoth{T}}$.
    
    \item If a triplet in $\dot{\textgoth{T}}$ does not satisfy (iv) in \Cref{cond:triplet_conditions}, we have to distinguish two cases. If
    \begin{equation}
    \label{eq:1d-flat-condition}
        \text{dim}(\face) - \text{rank}\,(\left.\mgs\setminus\!(\ds\cup\irr)\right|_{\face}) = 1 ,
    \end{equation}
    it is possible that by further saturating all the inequalities specified by the elements of $\mgs\setminus\irr$ we obtain a subspace spanned by a D-ER which is not in $\regcone$, and in that case we add it to the output. On the other hand, if
    \begin{equation}
    \label{eq:non-pointed}
        \text{dim}(\face) - \text{rank}\,(\left.\mgs\setminus\!(\ds\cup\irr)\right|_{\face}) > 1 ,
    \end{equation}
    the inequalities specified by the elements of $\mgs\setminus\!(\ds\cup\irr)$ are not enough to carve out a pointed cone in the subspace spanned by $\face$. This means that any ER on $\face$ is necessarily in $\regcone$, and we can discard the triplet entirely.
\end{enumerate}

\paragraph{Proof of completeness:}

We now prove that for an arbitrary set-up as described in \S\ref{subsec:set-up}, our algorithm gives all the D-ERs which are not in the excluded region. 

\begin{thm}
\label{thm:completeness}
    For each down-set extreme ray $\vec{\face}$ of the cone $\cone$ which is not in the excluded region $\regcone$, at least one of the following is true:
    \begin{itemize}
        \item The set $\textgoth{R}$ contains at least one element of the $G$-orbit of $\vec{\face}$.
        \item The set $\ddot{\textgoth{T}}$ contains at least one higher-dimensional face $\face$ such that at least one representative of the $G$-orbit of $\vec{\face}$ is in $\face$ and not in $\regface$.
    \end{itemize}
\end{thm}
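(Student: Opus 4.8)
The plan is to fix the triplet $\mathfrak{T}=(\ds,\irr,G_\ds)$ currently processed by the main subroutine, with associated face $\face$ and excluded region $\regface$, together with a down-set extreme ray $\vec{\face}\subseteq\face$ satisfying $\vec{\face}\notin\regface$ (this is the precise reading of ``$\vec{\face}\notin\regcone$'' for the triplet at hand). Put $\ds^{*}:=\van(\vec{\face})$. Since $\vec{\face}$ is a D-ER, $\ds^{*}$ is a down-set and equals $\van$ of a face, hence is $\fdc$-closed; since $\vec{\face}\subseteq\face$ while $\dim\face>1$ by \Cref{cond:triplet_conditions}(ii), we have $\ds\subsetneq\ds^{*}$; and $\ds^{*}\cap\irr=\varnothing$ because $\vec{\face}\notin\regface$. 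Everything below is organized around producing \emph{one} triplet that reaches the final loop of the main subroutine and whose face contains a $G$-representative of $\vec{\face}$ lying outside that triplet's excluded region, or else directly producing such a representative in $\textgoth{R}$.

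The first ingredient is a purely geometric ``final-loop trichotomy'': for any triplet $(\ds',\irr',G_{\ds'})$ reaching that loop, with face $\face'$ (so $\dim\face'>1$, $\ds'\cap\irr'=\varnothing$, $\ds'$ is $\fc$-closed, $\irr'$ is $G_{\ds'}$-stable), and any D-ER $\rho\subseteq\face'$ with $\rho\notin\regface'$, exactly one option occurs according to $\delta:=\dim\face'-\text{rank}\,(\left.\mgs\setminus(\ds'\cup\irr')\right|_{\face'})$. If $\delta=0$, condition (iv) of \Cref{cond:triplet_conditions} holds, the triplet is placed in $\ddot{\textgoth{T}}$, and $\rho$ already witnesses the second bullet for $\face'$. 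If $\delta\ge 2$, the cone cut out by $\mgs\setminus(\ds'\cup\irr')$ inside $\mathrm{span}(\face')$ has a lineality subspace of dimension $\ge 2$, so any extreme ray of the pointed cone $\face'$ must saturate at least one inequality of $\irr'$; hence no D-ER of $\face'$ lies outside $\regface'$, contradicting the existence of $\rho$, and this case is vacuous. If $\delta=1$, the common zero locus of $\mgs\setminus(\ds'\cup\irr')$ inside $\mathrm{span}(\face')$ equals $\mathbb{V}_{\mgs\setminus\irr'}$ and is $1$-dimensional; the same lineality argument forces $\rho=\cone\cap\mathbb{V}_{\mgs\setminus\irr'}$ and $\van(\rho)=\mgs\setminus\irr'$, which is exactly the ray the algorithm tests for in that branch and (since $\rho\notin\regcone$) adds to $\textgoth{R}$. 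Proving this lemma is routine convex geometry of lineality spaces, plus bookkeeping of the explicit checks in Algorithm~\ref{alg:main_subroutine}. Because a D-ER inside $\face'$ is automatically an extreme ray of $\face'$, it therefore suffices to exhibit a single such triplet for our $\vec{\face}$.

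Next I would split on whether $\ds^{*}$ meets $\orb$, the set of maximal elements of $\mgs\setminus(\ds\cup\irr)$. If $\ds^{*}\cap\orb=\varnothing$, the appended triplet $(\ds,\irr\cup\orb,G_\ds)$ satisfies (i)--(iii) (one checks $\irr\cup\orb$ is an up-set and is $G_\ds$-stable, $\ds\cap(\irr\cup\orb)=\varnothing$, $\dim\face>1$), and $\vec{\face}\subseteq\face$ with $\van(\vec{\face})\cap(\irr\cup\orb)=\varnothing$, so it reaches the final loop and the trichotomy finishes. Otherwise fix a maximal $\E\in\ds^{*}\cap\orb$, let $\mathcal{M}_m$ be its $G_\ds$-orbit with chosen representative $\E_m$, and set $\ds_m:=\fdc(\ds\cup\{\E_m\})$. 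Choosing $\sigma\in G_\ds$ with $\sigma\E_m=\E$ gives $\sigma\ds_m=\fdc(\ds\cup\{\E\})\subseteq\fdc(\ds^{*})=\ds^{*}$, so some $G_\ds$-translate of $\ds_m$ lies in $\ds^{*}$; call such down-sets \emph{good}. Since $\irr$ is $G_\ds$-stable and disjoint from $\ds^{*}$, every good down-set is disjoint from $\irr$ and survives into $\dot{\mathfrak{D}}$; if the face attached to this $\ds_m$ is $1$-dimensional it is added to $\textgoth{R}$ and, as $\sigma\ds_m\subseteq\ds^{*}$ with both sides being $\van$ of a ray, it equals $\sigma^{-1}\vec{\face}$, giving the first bullet. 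If not, I would track $\ds_m$ through Algorithms~\ref{alg:permutations}--\ref{alg:update}. Goodness is $G_\ds$-invariant, so at least one $G_\ds$-orbit of $\ddot{\mathfrak{D}}$ is good; let $\ds_{q^{*}}$ be the representative of the \emph{first} good orbit in the order imposed by \Cref{cri:order}. The sets $\overline{\irr}$ and the accumulated $\mathcal{M}_q$ are $G_\ds$-stable, and $\mathcal{M}_q\cap\ds^{*}\ne\varnothing$ would force $\ds_q$ to be good (again via $\fdc(\ds\cup\{\cdot\})\subseteq\ds^{*}$), contradicting $q<q^{*}$; combined with $\ds^{*}\cap\overline{\irr}=\varnothing$ (same closure argument for the orbits comprising $\overline{\irr}$), this yields a $g\in G_\ds$ with $g^{-1}\vec{\face}\subseteq\face_{q^{*}}$ and $\van(g^{-1}\vec{\face})\cap\irr_{q^{*}}=\varnothing$, where $\irr_{q^{*}}=\overline{\irr}\cup\mathcal{M}_1\cup\cdots\cup\mathcal{M}_{q^{*}-1}$.

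The last step, which I expect to be the genuine obstacle, is to check that Algorithm~\ref{alg:update} does not then exclude every $G$-representative of $\vec{\face}$ from $\face_{q^{*}}$: it enlarges $\irr_{q^{*}}$ by completing its orbit under $G_{\ds_{q^{*}}}$ (which need neither contain $G_\ds$ nor stabilize $\irr_{q^{*}}$) and by adjoining orbits of elements $\E_k$ with $\fdc(\ds_{q^{*}}\cup\{\E_k\})\cap\irr_{q^{*}}\ne\varnothing$. The adjunctions are harmless by the defining property of closure operators: a D-ER $\rho\subseteq\face_{q^{*}}$ saturating such an $\E_k$ has $\van(\rho)\supseteq\fdc(\ds_{q^{*}}\cup\{\E_k\})$, hence already met $\irr_{q^{*}}$ and was already excluded. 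For the orbit completion, if it newly excludes $g^{-1}\vec{\face}$ then some $G_{\ds_{q^{*}}}$-conjugate $\rho'$ of it already met $\irr_{q^{*}}$, i.e.\ saturated an element of $\overline{\irr}$ or of some $\mathcal{M}_q$ with $q<q^{*}$; in the latter case the closure identity $\van(\rho')\supseteq\fdc(\ds\cup\{\E\})$ relocates $\rho'$ into the face $\face_q$ of the earlier triplet, whose excluded set $\irr_q\subsetneq\irr_{q^{*}}$ is strictly smaller, so an induction descending on the triplet index (terminating either at a triplet whose face contains an admissible $G$-representative, dispatched by the trichotomy, or at $\overline{\irr}$, where the closure identity instead shows the representative lay in the input excluded region and so contributes nothing to prove) produces the required triplet in $\ddot{\textgoth{T}}$ or the required ray in $\textgoth{R}$. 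The bulk of the work, and the part most error-prone, is precisely this orbit and excluded-region bookkeeping together with the simultaneous handling of the three groups $G$, $G_\ds$, $G_{\ds_{q^{*}}}$; the rest reduces to elementary convexity and to the formal closure-operator manipulations of \S\ref{subsec:closures}.
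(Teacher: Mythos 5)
Your proposal follows essentially the same route as the paper's proof: both reduce to showing that a single iteration of the main subroutine captures a $G$-representative of $\vec{\face}$, and both track $\van(\vec{\face})$ through the subroutine using the $\fdc$-closure identity (via $\sigma\ds_m = \fdc(\ds\cup\{\E\})\subseteq\ds^*$), the maximal elements $\mathcal{M}$, the $G_\ds$-orbit partitioning, and the progressively enlarged excluded regions $\irr_q$. Your ``final-loop trichotomy'' is a clean repackaging of the $\delta\in\{0,1,\ge2\}$ case analysis that the paper carries out informally in item~7 of its description of Algorithm~\ref{alg:main_subroutine}, and your treatment of ``good'' down-sets through Algorithms~\ref{alg:permutations}--\ref{alg:triplets} matches the paper's selection of the first admissible $q^*$.

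The one place where you genuinely add content is the orbit-completion step of Algorithm~\ref{alg:update}, which the paper dispatches with ``It is clear that this does not affect completeness.'' You correctly identify this as the delicate point, but the descending induction you sketch is not fully closed. Your base case (termination ``at $\overline{\irr}$'') only establishes that a certain $G$-translate $\rho'$ of $\vec{\face}$ --- where the conjugating element involves $G_{\ds_{q^*}}$, which need not be a subgroup of $G_\ds$ and hence need not stabilize $\irr$ --- satisfies $\van(\rho')\cap\irr\neq\varnothing$. Saying this ``contributes nothing to prove'' is not an escape: the hypothesis of the theorem is only that $\vec{\face}\notin\regcone$, not that no $G$-conjugate of $\vec{\face}$ lands in $\regface$ along the way, so you would still have to exhibit a captured representative. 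To close the gap one needs to argue that the $G_{\ds_{q^*}}$-completion cannot simultaneously exclude \emph{every} admissible translate inside $\face_{q^*}$, which neither you nor the paper spells out. So the verdict is: same approach and same basic decomposition, argued with somewhat more care, but still inheriting the paper's informality at precisely the step you flag as hardest.
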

\begin{proof}
    As we mentioned before, the basic idea of the algorithm is to essentially scan over the entire space of all possible down-sets of the MI-poset and check which down-sets are D-ERs. To make this scan more efficient, we will use various constraints to ignore most down-sets, and we will prove below that these constraints do not miss any relevant ones. Since the full algorithm (see Algorithm~\ref{alg:structure}) is simply an iteration of the main subroutine in Algorithm~\ref{alg:main_subroutine}, we only need to show the statement of the theorem for a single iteration, i.e., we only need to show that given a triplet $\mathfrak{T}$ corresponding to a face $\face$ and an arbitrary D-ER $\vec{\face} \in \face$ not in the excluded region $\regface$ of $\face$, the output of the main subroutine includes at least one element in the $G_{\ds}$-orbit of $\vec{\face}$.

    We begin with setting $\ds = \van(\face)$, and notice that for any D-ER $\vec{\face} \in \face$, we have $\van(\vec{\face}) \supseteq \ds$. This means our algorithm should scan over all possible $\ds'$ such that $\ds' \supseteq \ds$. However, since we are not interested in any ER in $\regface$, we can impose the constraint $\irr \cap \ds' = \varnothing$, so that it suffices to consider all possible deformations of $\ds$ obtained by adding to it elements from an arbitrary subset of $\mgs \setminus (\ds \cup \irr)$. We will do so element by element, taking advantage of closure operators, the symmetries of the problem, and the poset structure. 
    
    Importantly, we do not consider all these possibilities in a single iteration of the main subroutine. Instead, to take advantage of the poset structure, in a single iteration we only consider the down-sets obtained by adding to $\ds$ an element of $\mathcal{M}$, which is the set of maximal elements of $\mgs\setminus(\ds\cup\irr)$. More precisely, we organize all possible down-sets $\ds'$ into two families: Those that include at least one element of $\mathcal{M}$, and those that do not. The down-sets in the second family are collected in a new triplet $(\ds,\irr\cup\mathcal{M},G_{\ds})$ (see line {\scriptsize {\bf 5}} in Algorithm~\ref{alg:main_subroutine}) and will be processed at a later iteration. Therefore, we only need to show that the procedure in Algorithm~\ref{alg:main_subroutine} guarantees a complete scan over the first family.

    Consider an element $\E \in \mathcal{M}$ and the set $\ds \cup \{\E\}$. If there is another element $\E'$ which is mapped to $\E$ by an element of the stabilizer group $G_{\ds}$ of $\ds$, then clearly $\ds \cup \{\E\}$ and $\ds \cup \{\E'\}$ are related by the same group operation.\footnote{\,Recall that the action of $G$ on subsets of $\mgs$ is defined element-wise, starting from that on $\mgs$.} Therefore, since we are only interested in finding D-ERs up to equivalence under the action of $G_{\ds}$, once we include in the scan the down-sets obtained from $\ds$ by adding $\E$, we can ignore any down-set obtained from $\ds$ by adding $\E'$. We then partition $\mathcal{M}$ into $G_{\ds}$-orbits and focus only on deformations of $\ds$ generated by representatives of these orbits. This partition of $\mathcal{M}$ is denoted by $\mathfrak{M}$.

    The set $\ds\cup\{\E\}$ does not necessarily correspond to a new D-face of the cone, and to obtain such a face we compute $\fdc(\ds\cup\{\E\})$. As explained in \S\ref{subsec:closures}, we can then ignore any down-set $\ds''$ such that $\ds\cup\{\E\} \subset \ds'' \subset \fdc(\ds\cup\{\E\})$ (cf.\ \eqref{eq:cl_implication}). Computing the closure for an arbitrary representative from every component in $\mathfrak{M}$, we obtain a set $\mathfrak{D}$ of down-sets, each of which corresponding to a D-face. Up to this point, it should be clear that the procedure guarantees completeness, in the sense that any D-ER of $\face$ not in $\regface$ is either (i) in one of these faces corresponding to a down-set in $\mathfrak{D}$; (ii) in another D-face that is related to a down-set in $\mathfrak{D}$ by the action of $\stab{}_{\ds}$; or (iii) in a D-face whose down-set does not include any element of $\mathcal{M}$ (and will therefore be processed at a later iteration). In particular, notice that (ii) follows from \eqref{eq:commutation}, which as explained in Appendix~\ref{appendix}, is implied by the definition of $G$ described in \S\ref{subsec:set-up}. The rest of the proof then focuses on demonstrating that the additional steps in the main subroutine used to increase the efficiency of our algorithm, namely ignoring certain elements of $\mathfrak{D}$ and producing the triplets (including new excluded regions) for the remaining ones, do not miss any relevant D-ERs.
 
    Starting from $\irr$, we now construct a new set $\overline{\irr}$ of elements that we will use to specify the excluded region of the new faces. First, notice that a new face corresponding to an element in $\mathfrak{D}$ might already be entirely contained in the excluded region $\regface$ of the initial face $\face$, i.e., if $\fdc(\ds\cup\{\E\}) \cap \irr \neq \varnothing$, in which case they can be ignored. If this is true, we add all elements in the orbit of $\E$ to $\overline{\irr}$, since we do not want to add $\E$ (or any equivalent element) to the down-set of any lower-dimensional face that we derive from $\face$. Furthermore, some elements of $\mathfrak{D}$ might already correspond to D-ERs, in which case we add them to $\textgoth{R}$ and again add all the elements in the orbit of $\E$ to $\overline{\irr}$, since adding any further element beyond $\E$ would only yield the trivial dimensional space. After these steps, we are left with a sub-collection of D-faces $\dot{\mathfrak{D}}$ and orbits $\dot{\mathfrak{M}}$ of elements of $\mathcal{M}$. Notice that all the faces corresponding to the elements of $\dot{\mathfrak{D}}$ now satisfy (i) and (ii) in \Cref{cond:triplet_conditions}.
    
    We remark that since the closure generally adds more elements to $\ds \cup \{\E\}$, it can also happen that even though another element $\E'$ is in a different orbit of $\mathcal{M}$, $\fdc(\ds \cup \{\E'\})$ is related to $\fdc(\ds \cup \{\E\})$ by an element of $G_{\ds}$, in which case the two corresponding faces are considered equivalent. This equivalence is checked by Algorithm~\ref{alg:permutations}, allowing us to extract an even smaller collection of representative faces $\ddot{\mathfrak{D}}$, and to partition the elements of $\mathcal{M}$ that have not been added to $\overline{\irr}$ into even larger sets (the elements of $\ddot{\mathfrak{M}}$), each set being the union of elements of $\dot{\mathfrak{M}}$. 

    To summarize, we have thus far used the closure operator $\fdc$ and the symmetries of the problem to identify a representative for each possible orbit of D-faces on the boundary of $\face$ that contain at least one element of $\mathcal{M}$ in their respective down-sets. The set of such representatives is $\ddot{\mathfrak{D}}$. We have further organized the elements of $\mathcal{M}$ into two sets: those generating new faces which we can ignore are in $\overline{\irr}$, and those in $\ddot{\mathfrak{M}}$ where each component of this partition is associated with one element of $\ddot{\mathfrak{D}}$. We will now describe how to use this data to generate new triplets (see Algorithm~\ref{alg:triplets}) by specifying an excluded region for each face corresponding to a down-set in $\ddot{\mathfrak{D}}$. 
    
    First, we reorder the elements of $\ddot{\mathfrak{D}}$ according to the function $g$ in \Cref{cri:order}. It is clear that if we specify the excluded region for each face in $\ddot{\mathfrak{D}}$ to be just $\overline{\irr}$, we are guaranteed to not miss any $\vec{\face} \in \face$ which is not in $\regface$ because such D-ERs do not include any element of $\overline{\irr}$ in their respective down-sets; however, this would be very inefficient. Instead, only the first triplet will have the excluded region be $\overline{\irr}$, and this triplet will be the input into the next iteration of the main subroutine. As for the second triplet, we can now include in its excluded region not only $\overline{\irr}$, but also all the elements in the first component of $\ddot{\mathfrak{M}}$ since any D-ER associated to a down-set containing such an element will be obtained when we input the first triplet into the main subroutine during a later iteration. In a similar manner, we can for the third triplet include in its excluded region the union of $\overline{\irr}$ and the first two components of $\ddot{\mathfrak{M}}$, and so on. Thus, we are able to enlarge the excluded region beyond $\overline{\irr}$ for all except the first triplet, thereby increasing the efficiency of the algorithm without missing any D-ER $\vec\face \in \face$ not belonging in $\regface$. We stress that the specific choice of function $g$ only affects the efficiency of the algorithm, and not its completeness. In other words, our proof that the algorithm is complete is independent of this function.

    We then use Algorithm~\ref{alg:update} to enforce (iii) in \Cref{cond:triplet_conditions} and to enlarge the excluded region of each new triplet even more by simply adding any element whose inclusion would, under the closure operator, cause the further addition of an element in the excluded region. Again by \eqref{eq:commutation}, it is clear that this does not affect completeness, and we can efficiently implement this enlargement by again organizing these elements into orbits under the action of the stabilizer group of the triplet, such that (iii) in \Cref{cond:triplet_conditions} remains satisfied. 
    
    The final step in the main subroutine is to check (iv) in \Cref{cond:triplet_conditions}. By following the procedure given in the description of line {\scriptsize {\bf 7}} in Algorithm~\ref{alg:main_subroutine}, we ensure that we do not miss any D-ERs associated to triplets that fail this condition. This completes the proof.

\end{proof}

Of course, in the case where the function $f$ in \Cref{cri:stop} is chosen such that at the end of the computation $\ddot{\textgoth{T}}$ is non-empty, one would then employ other, possibly more efficient at that stage, methods to extract the D-ERs. Furthermore, notice that for some D-ERs, the output might contain multiple representatives of the same $G$-orbit. We will comment on these issues in the next paragraph.

\paragraph{Post-processing:} 

When the set-up is sufficiently simple such that the algorithm is stopped only when there are no more triplets to be processed, all the desired D-ERs will be contained in the set $\textgoth{R}$. However, it should be noticed that in general $\textgoth{R}$ might contain more than one representative for each $G$-orbit, and to get exactly one representative per $G$-orbit one should still check equivalence of the various D-ERs under the action of $G$. 

For harder problems, it might instead be convenient to stop the computation earlier, when some of the triplets in $\ddot{\textgoth{T}}$ correspond to higher-dimensional faces. Typically, this is the case when for each triplet that remains to be processed, the sub-poset induced by $\mgs\setminus(\ds\cup\irr)$ is ``approximately an antichain'', or more precisely, when the down-set of each maximal element of this sub-poset is a small subset of the sub-poset. In this situation, it is typically more efficient to use a standard conversion algorithm to extract all extreme rays of a face, and then check which extreme rays are D-ERs. 

Nevertheless, we stress that even when using standard conversion algorithms, it might still be possible to speed up this computation significantly using the set $\irr$ that specifies the excluded region.\footnote{\,This trick was indeed crucial for the derivation of the results presented in \S\ref{sec:results}.} To see this, consider a triplet $(\ds,\irr,\stab{}_\ds)$ for a face $\face$, and suppose that $\irr$ is a ``large subset'' of $\mgs$, i.e., that $|\irr|/|\mgs| \sim 1 $. On the subspace spanned by $\face$, which is determined by $\ds$, $\face$ is a pointed cone carved out by the inequalities corresponding to the elements of $\mgs\setminus\ds$. On this subspace, consider the larger cone $\Check{\face}$ carved out by the inequalities given by $\mgs\setminus(\ds\cup\irr)$. Notice that since the triplet we are considering is in the output of the algorithm, it satisfies in particular (iv) in Condition~\ref{cond:triplet_conditions}, implying that this cone is pointed. While the ERs of $\Check{\face}$ are in general different from those of $\face$, notice that any ER of $\face$ which does not saturate any inequality from $\irr$ (i.e., that is not in the excluded region $\regface$) is also an ER of $\Check{\face}$. Since $\Check{\face}$ is specified by a much smaller set of inequalities compared to $\face$, finding its ERs using standard conversion algorithm can be much simpler.

Suppose now that this is the case, that we have performed this computation and have found all the ERs of $\face$ outside of $\regface$. We now need to extract the subset of D-ERs. Typically, one finds the ERs in a basis for the subspace of $\face$ rather than that for the full space. Since the poset $\mgs$ is defined in the full space where $\cone$ lives, in order to check if an ER is a D-ER, one should first perform a change of basis to recover the expression of the ER in the full space. However, if the number of both ERs and dimensions is large, this is not the most convenient strategy. Instead, one can pre-compute a set of necessary conditions that must be satisfied by the set of inequalities which are saturated by an ER for it to be a D-ER, and then check these conditions ``locally'' on the lower-dimensional subspace spanned by $\face$. These conditions essentially stem from the fact that if an inequality corresponding to an element $\E_i\in\mgs$ is saturated, all inequalities corresponding to the elements $\E_j\in\mgs$ with $\E_j\prec\E_i$ also have to be saturated. In the full space, this last requirement is obviously also sufficient (since it is the definition of a down-set), but it may not be verifiable in the lower-dimensional subspace. The reason is essentially the fact that in general several inequalities which are distinct in the full space are indistinguishable in the lower-dimensional space. Nonetheless, one can derive local necessary conditions by restricting this requirement to the set of dimensionally reduced inequalities. If $\mgs$ is far from being an antichain, which as discussed above is the regime of utility of this algorithm, one can expect that in most situations the vast majority of the ERs of $\face$ will already fail these simpler necessary conditions and can therefore be more efficiently discarded as candidate D-ERs. It then suffices to recover the full expression of a much smaller collection of ERs in the space of $\cone$ to finally verify which ERs are D-ERs using the complete poset structure.

\subsection{A few possible variations}
\label{subsec:variations}

We now describe a few possible variations of the algorithm. We will use some of them to derive the results presented in \S\ref{sec:five} and \S\ref{sec:results}.

\paragraph{Checking symmetries:}
Notice that in Algorithm~\ref{alg:structure}, the main subroutine is iterated on \textit{all} outputs of previous iterations until the criterion to stop the computation is met. Some triplets that are generated during the computation however might belong to the same $G$-orbit. One possible variation of the algorithm would check this equivalence and select a representative for each orbit, deleting other triplets so that they will not be processed again. It is obvious that this simplification still guaranties the completeness of the result, as at least one representative for each orbit of D-ER which is not in the excluded region will be included in at least one of the faces of the final output. The convenience of implementing this variation depends on the specific problem, and in particular on the ``size'' of the orbits and on how efficiently one can check the equivalence of various faces.

\paragraph{Simplicial cones:}
Another simple variations considers the possibility that some of the faces generated during the computation might be \textit{simplicial} cones. This is trivial to verify, since it only requires to check if the number of inequalities equals the dimension of the face. If a simplicial face is found, then instead of iterating the main subroutine for this face it is much more efficient to simply extract the D-ERs directly from the set of inequalities that specify the face.

\paragraph{Focusing on a lower-dimensional face:} In some situations one might be interested in finding only the D-ERs that belong to a lower-dimensional face $\face$ of the cone. For example, this is the case for 
the SAC, where as a result of \Cref{thm:bell-pair}, we can focus on the face $\face_*$ that we mentioned in the Introduction. In these situations, it would seem natural to first simplify the problem via a dimensional reduction to the subspace spanned by $\face$, and then apply the algorithm to the resulting set-up. However, as we discussed above when we commented on the post-processing of the ERs, the full structure of the poset is not ``completely visible'' in a lower-dimensional subspace. One practical alternative is to instead initialize the algorithm setting $\ds^{(0)} = \van(\face)$ instead of $\ds^{(0)} =\varnothing$.\footnote{\,Here we are assuming that $\face$ is a D-face. If it is not, then one should simply set $\ds^{(0)}=\fdc(\van(\face))$.}

\paragraph{Using different closure operators:}
In the main subroutine described in Algorithm~\ref{alg:main_subroutine}, we have used the closure operator $\fdc$. While this closure can be computed efficiently using a linear program \cite{Fukuda15}, for a large number of inequalities in a space with a large number of dimensions, this can still be a heavy computation. One alternative is to use $\ldc$ instead (see \S\ref{subsec:closures}), which can be computed much more easily. This is in fact what we used in the computations presented in \S\ref{sec:five} and \S\ref{sec:results}. Since this is a closure operator, the proof of completeness given in \Cref{thm:completeness} remains essentially unchanged. Each triplet now corresponds not necessarily to a face, but to a linear subspace such that $\ds$ is a down-set. We will call such a linear subpace a D-subspace. When there is a face $\face$ that spans this subspace, then $\face$ is a D-face and corresponds to one of the usual triplets. More generally, it can instead happen that the subspace only contains a proper lower-dimensional subspace spanned by a face, which is then not guaranteed to be a D-face. In practical terms however, the only difference in the algorithm is in the main subroutine, since other subroutines do not use any closure operator. Further, in the main subroutine, the only difference is that when we obtain a triplet corresponding to a 1-dimensional subspace, this subspace is now not guaranteed to be spanned by an ER, and we need to verify that it contains a ray that satisfies all the inequalities. Finally, when the algorithm is used not to find all D-ERs directly, but rather only to extract a simpler set of faces that includes all D-ERs, if we replace $\fdc$ with $\ldc$, we are now extracting a set of D-subspaces, although as we mentioned above they are not necessarily spanned by corresponding D-faces. The final result, however, is not affected by this difference, since the dimensional reduction of the remaining inequalities to this subspace will still specify a face of the cone, and even if this might not be a D-face, we will check during the post-processing which ERs we found are D-ERs.

\section{A simple example: $\N=5$}
\label{sec:five}

Having explained our algorithm in \S\ref{sec:algorithm} in full generality, we now return to the set-up described in \S\ref{sec:intro} and \S\ref{sec:mip} to exemplify the application of the algorithm to the SAC. Specifically, we will use it to compute the subset of $\erkc^5$ of ``genuine $5$-party'' ERs, i.e., those that are not lifts of element of $\erkc^{\N}$ for some $\N<5$. These ERs are well-known,\footnote{\,It was shown in \cite{He:2022bmi} that $\erkc^5=\erssa^5$, and $\erssa^5$ was computed in \cite{Hernandez-Cuenca:2019jpv}.} and our goal here is  simply to demonstrate how our algorithm works in a particular example. In the next section, we will use the algorithm to compute the much more involved $\erkc^6$ (with the main goal of finding $\erssa^6$), and there
 we will only present the result. 

To initialize the algorithm, we will use \Cref{thm:bell-pair} of \S\ref{subsec:mi-poset-and-kc} to implement a variation of the algorithm described above and focus on the lower-dimensional KC-face $\face_*$, which contains the KC-ERs not realized by Bell pairs. This means that to specify the initial triplet $\mathfrak{T}^{(0)}$, we can set 
\begin{equation}
    \ds^{(0)} =\van(\face_*)=\left\{\mi(\ell:\ell'),\;\text{for all $\ell,\ell' \in \psys{5}$}\right\} .
\end{equation}
Furthermore, since we are only focusing on genuine $5$-party KC-ERs, we can use the constraints described in \S\ref{subsec:constraints} and specify the excluded region $\regcone$ of the cone to be 
\begin{equation}
    \irr^{(0)} =\left\{\mi(\uJ: \uK),\; 5\leq |\uJ|+|\uK|\leq 6\right\} .
\end{equation}
The cardinalities of these sets are $\abs{\ds^{(0)}}=15$ and $\abs{\irr^{(0)}}=121$.
Lastly, the symmetry group $G$ is a subgroup of $\text{Sym}(270)$ (since $|\mgs|=301$, but the $31$ top elements of the MI-poset correspond to redundant inequalities) isomorphic to $\text{Sym}(6)$, which is the group of all permutations for the six parties in $\psys{5}$. Hence, we set $\stab{(0)}=G$.

Moreover, we will use two of the variations mentioned in \S\ref{subsec:variations} when implementing the basic algorithm presented in \S\ref{subsec:structure}.
The first variation is that we will use the closure operator $\ldc$ instead of $\fdc$ in the main subroutine. Each triplet will then correspond to a KC-subspace\footnote{\,As the reader might guess, a KC-subspace is precisely the D-subspace defined in \S\ref{sec:algorithm} when the poset is the MI-poset. Similarly, KC-faces and KC-ERs are respectively the D-faces and D-ERs defined above when the poset is the MI-poset.} rather than a KC-face. The reason for this choice is that computing $\ldc$ is typically much faster than computing $\fdc$, and empirical observation suggests that most instances of
KC-subspaces are in fact spanned by KC-faces. Thus, using $\fdc$ instead of $\ldc$ would not give a significant advantage, while dramatically slowing down the computation. We will continue to use the same strategy in the next section.

The second variation, which we only use here for the purpose of illustration, consists of checking if any of the faces generated during the computation is a simplicial cone. If so, we can simply extract the KC-ERs directly without having to continue iterating through the main subroutine of our algorithm. A schematic description of the various steps of the computation is given in \Cref{fig:N5-computation}, and we now describe a few of these steps in detail.

With these two additional variations in mind, and having specified the initial triplet $\mathfrak{T}^{(0)}$, we need to choose the criteria to stop the computation and to generate the triplets with the main subroutine. For $\N=5$, the computation is sufficiently simple that we stop only after processing all the triplets, therefore finding all the desired ERs without having to rely on other algorithms. This corresponds to setting the function $f$ in \Cref{cri:stop} to return TRUE only if the dimension of a KC-subspace is at most one. Meanwhile, a convenient choice for \Cref{cri:order} is the dimension of the KC-subspaces corresponding to the newly generated down-sets. In this way, the dimension of these subspaces will be increasing along the sequence, and subspaces with greater dimension will be associated to larger excluded regions.

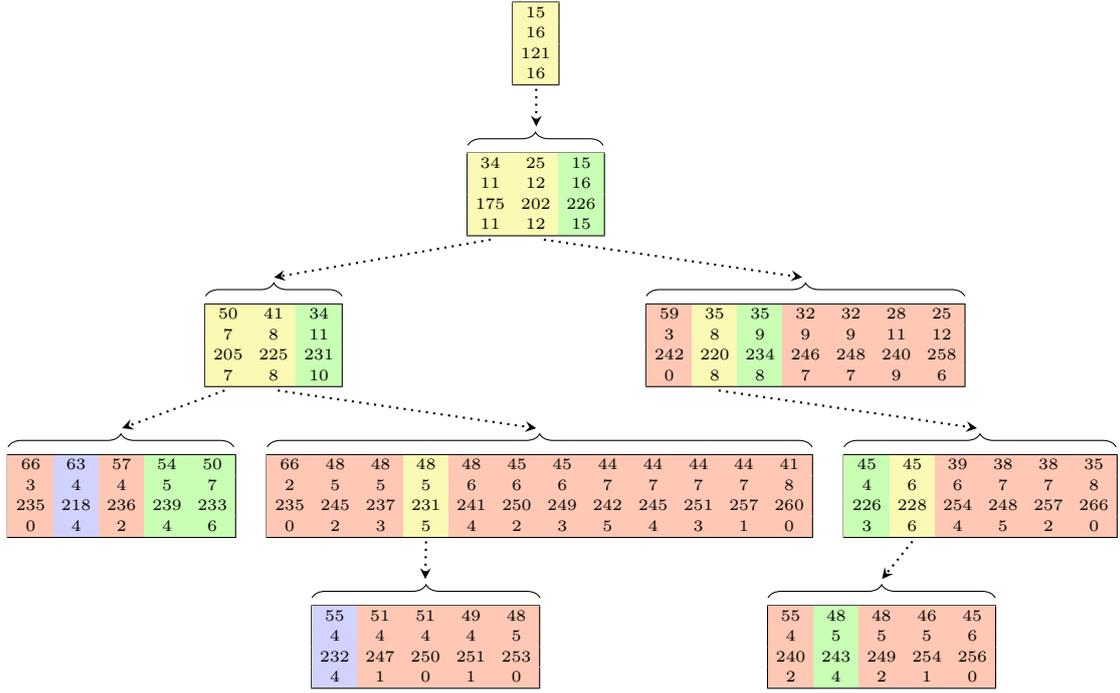
\begin{figure}[tb]
    \centering
    \begin{tikzpicture}

    \node[] at (-3.55,2) 
    {\tiny 
    \setlength{\tabcolsep}{3pt}
    \begin{tabular}{|y|}
    \hline
        15 \\
        16 \\
        121 \\
        16 \\
    \hline
    \end{tabular}
    };

    \draw[thick,dotted,-stealth] (-3.5,1.4) -- (-3.5,0.9);
    \draw [decorate,decoration={brace,amplitude=5pt}] (-4.4,0.65) -- (-2.62,0.65);

    \node[] at (-3.55,0) 
    {\tiny 
    \setlength{\tabcolsep}{3pt}
    \begin{tabular}{|yyg|}
    \hline
        34 & 25 & 15\\
        11 & 12 & 16\\
        175 & 202 & 226\\
        11 & 12 & 15\\
    \hline
    \end{tabular}
    };

    \draw[thick,dotted,-stealth] (-4.1,-0.6) -- (-6.95,-1.1);
    \draw[thick,dotted,-stealth] (-3.4,-0.6) -- (0,-1.1);
    \draw [decorate,decoration={brace,amplitude=5pt}] (-7.85,-1.35) -- (-6.055,-1.35);
    \draw [decorate,decoration={brace,amplitude=5pt}] (-2.05,-1.35) -- (2.14,-1.35);

    \node[] at (-7,-2) 
    {\tiny 
    \setlength{\tabcolsep}{3pt}
    \begin{tabular}{|yyg|}
    \hline
        50 & 41 & 34\\
        7 & 8 & 11\\
        205 & 225 & 231\\
        7 & 8 & 10\\
    \hline
    \end{tabular}
    };

    \node[] at (0,-2) 
    {\tiny 
    \setlength{\tabcolsep}{3pt}
    \begin{tabular}{|rygrrrr|}
    \hline
        59 & 35 & 35 & 32 & 32 & 28 & 25\\
        3 & 8 & 9 & 9 & 9 & 11 & 12\\
        242 & 220 & 234 & 246 & 248 & 240 & 258\\
        0 & 8 & 8 & 7 & 7 & 9 & 6\\
    \hline
    \end{tabular}
    };

    \draw[thick,dotted,-stealth] (-7.6,-2.6) -- (-8.95,-3.1);
    \draw[thick,dotted,-stealth] (-6.9,-2.6) -- (-3.5,-3.1);
    \draw[thick,dotted,-stealth] (-1.1,-2.6) -- (2.3,-3.1);
    \draw [decorate,decoration={brace,amplitude=5pt}] (-10.45,-3.35) -- (-7.48,-3.35);
    \draw [decorate,decoration={brace,amplitude=5pt}] (-7.04,-3.35) -- (0.13,-3.35);
    \draw [decorate,decoration={brace,amplitude=5pt}] (0.55,-3.35) -- (4.13,-3.35);

    \node[] at (-9,-4) 
    {\tiny
    \setlength{\tabcolsep}{3pt}
    \begin{tabular}{|rbrgg|}
    \hline
        66 & 63 & 57 & 54 & 50\\
        3 & 4 & 4 & 5 & 7\\
        235 & 218 & 236 & 239 & 233\\
        0 & 4 & 2 & 4 & 6\\
    \hline
    \end{tabular}
    };

    \node[] at (-3.5,-4) 
    {\tiny
    \setlength{\tabcolsep}{3pt}
    \begin{tabular}{|rrryrrrrrrrr|}
    \hline
        66 & 48 & 48 & 48 & 48 & 45 & 45 & 44 & 44 & 44 & 44 & 41\\
        2 & 5 & 5 & 5 & 6 & 6 & 6 & 7 & 7 & 7 & 7 & 8\\
        235 & 245 & 237 & 231 & 241 & 250 & 249 & 242 & 245 & 251 & 257 & 260\\
        0 & 2 & 3 & 5 & 4 & 2 & 3 & 5 & 4 & 3 & 1 & 0\\
    \hline
    \end{tabular}
    };

    \node[] at (2.3,-4) 
    {\tiny
    \setlength{\tabcolsep}{3pt}
    \begin{tabular}{|gyrrrr|}
    \hline
        45 & 45 & 39 & 38 & 38 & 35\\
        4 & 6 & 6 & 7 & 7 & 8\\
        226 & 228 & 254 & 248 & 257 & 266\\
        3 & 6 & 4 & 5 & 2 & 0\\
    \hline
    \end{tabular}
    };

    \draw[thick,dotted,-stealth] (-4.95,-4.6) -- (-4.96,-5.1);
    \draw[thick,dotted,-stealth] (1.45,-4.6) -- (1.05,-5.1);
    \draw [decorate,decoration={brace,amplitude=5pt}] (-6.45,-5.35) -- (-3.45,-5.35);
    \draw [decorate,decoration={brace,amplitude=5pt}] (-0.45,-5.35) -- (2.54,-5.35);

    \node[] at (-5,-6) 
    {\tiny
    \setlength{\tabcolsep}{3pt}
    \begin{tabular}{|brrrr|}
    \hline
        55 & 51 & 51 & 49 & 48\\
        4 & 4 & 4 & 4 & 5\\
        232 & 247 & 250 & 251 & 253\\
        4 & 1 & 0 & 1 & 0\\
    \hline
    \end{tabular}
    };

    \node[] at (1,-6) 
    {\tiny
    \setlength{\tabcolsep}{3pt}
    \begin{tabular}{|rgrrr|} 
    \hline
        55 & 48 & 48 & 46 & 45\\
        4 & 5 & 5 & 5 & 6\\
        240 & 243 & 249 & 254 & 256\\
        2 & 4 & 2 & 1 & 0\\
    \hline
    \end{tabular}
    };
    \end{tikzpicture}
    \caption{A schematic illustration of the various steps of the computation of $\erkc^5$, where each box corresponds to one iteration of the main subroutine. In each box, each column corresponds to one of the triplets in $\dot{\textgoth{T}}$ after the completion of line {\bf {\scriptsize 5}} in Algorithm~\ref{alg:main_subroutine}. For each triplet $\mathfrak{T}$, the four rows from top to bottom are as follows: the cardinality of $\ds$; the dimension of the KC-subspace specified by $\ds$; the cardinality of $\irr$; and the rank of   $\left.\mgs\setminus\!(\ds\cup\irr)\right|_{\ds}$, where the subscript $\ds$ indicates we are considering  the dimensional reduction of the inequalities to the KC-subspace specified by $\ds$, since we are working with KC-subspaces rather than D-faces. The various triplets are further distinguished by color depending on the following: whether they satisfy \Cref{cond:triplet_conditions} but the corresponding faces are not simplicial (yellow); whether they satisfy \Cref{cond:triplet_conditions} and the corresponding faces are simplicial (blue); whether \eqref{eq:1d-flat-condition} holds and they correspond to a D-ER (green); or whether \eqref{eq:non-pointed} is satisfied and they can be discarded (red). For each yellow triplet, the arrow indicates which box presents the data of the corresponding new iteration of the main subroutine. The box at the top shows the data for the initial triplet $\mathfrak{T}^{(0)}$, and the box immediately below it shows the data for the triplets $\mathfrak{T}^{(0,1)}$, $\mathfrak{T}^{(0,2)}$ and $\mathfrak{T}^{(0,3)}$, which are described in detail in the main text.
    }
    \label{fig:N5-computation}
\end{figure}

The first iteration of the main subroutine on the triplet $(\ds^{(0)},\irr^{(0)},\stab{(0)})$ begins by computing the maximal elements $\mathcal{M}^{(0)}$ of $\mgs\setminus(\ds^{(0)}\cup\irr^{(0)})$ and organizing them into $\stab{(0)}$-orbits. We obtain two orbits, which have the form 
\begin{align}
    & \mathcal{M}_1 = \left\{\mi(\uJ:\uK)\in\mgs,\; |\uJ|=2\;\; |\uK|=2 \right\}\nonumber\\
    & \mathcal{M}_2 = \left\{\mi(\uJ:\uK)\in\mgs,\; |\uJ|=3\;\; |\uK|=1 \right\} .
\end{align}
Choosing the representative $\mi(12:34)$ from the first orbit we get
\begin{align}
\begin{split}
    \ds_1 &=\ldc \left[\ds^{(0)}\cup \left\{\mi(12:34)\right\}\right] \\
    &= \left\{\! \text{ {\small$\mi(12:34),\mi(13:24),\mi(14:23),\mi(123:4),\mi(124:3),\mi(134:2),\mi(234:1)$}}\,\right\}_{\downarrow}\cup \ds^{(0)} ,
\end{split}
\end{align}
where $\{\cdots\}_{\downarrow}$ denotes the down-set generated by the elements in the argument. Hence $\abs{\ds_1}=34$.
Choosing instead the representative $\mi(123:4)$ from the second orbit gives
\begin{align}
\begin{split}
    \ds_2 &=\ldc \left[\ds^{(0)}\cup \left\{\mi(123:4)\right\}\right] \\
    &= \left\{\! \text{ {\small $\mi(123:4),\mi(14:2),\mi(14:3),\mi(24:1),\mi(24:3),\mi(34:1),\mi(34:2)$}}\,\right\}_{\downarrow} \cup\ds^{(0)}  ,
\end{split}
\end{align}
so $\abs{\ds_2}=25$. Since neither $\ds_1$ nor $\ds_2$ contain any element of $\irr^{(0)}$, they are not in the excluded region, and we have $\overline{\irr} =\irr^{(0)}$ (see line {\scriptsize {\bf 1}}  in Algorithm~\ref{alg:main_subroutine}). Furthermore, it is immediate to verify that these down-sets correspond to faces of dimension greater than one, and since they are clearly not related by a permutation of the parties, they are also the elements of $\ddot{\mathfrak{D}}$, the output of Algorithm~\ref{alg:permutations}. We can then apply to them (and the sets $\mathcal{M}_1$ and $\mathcal{M}_2$ in $\ddot{\mathfrak{M}}$) Algorithm~\ref{alg:triplets} to generate two new inequivalent triplets $(\ds^{(0,1)},\irr^{(0,1)},\stab{(0,1)})$ and $(\ds^{(0,2)},\irr^{(0,2)},\stab{(0,2)})$, with $\ds^{(0,1)} = \ds_1$ and $\ds^{(0,2)} = \ds_2$, and $\irr^{(0,1)}=\irr^{(0)}$ and $\irr^{(0,2)}=\irr^{(0)}\cup\mathcal{M}_1$. Furthermore, the groups $\stab{(0,1)}$ and $\stab{(0,2)}$ are the following subgroups of Sym(270):
\begin{align}
\begin{split}
    & \stab{(0,1)} \, \simeq \; \text{Sym}(4)_{_{1234}}\times\text{Sym}(2)_{_{50}} \\
    & \stab{(0,2)} \, \simeq \; \text{Sym}(3)_{_{123}}\times\text{Sym}(2)_{_{50}} ,
\end{split}
\end{align}
where the lower indices indicate the specific parties which are permuted by the various factors. Notice that as mentioned above, the ordering is chosen such that the dimension of the corresponding KC-subspace is increasing from $\ds_1$ to $\ds_2$. 

These two triplets are the input of Algorithm~\ref{alg:update}, which we can now use to ``extend'' the excluded region. There are 146 elements in $\mgs\setminus (\ds^{(0,1)}\cup\irr^{(0,1)})$, which can be organized into 11 different $\stab{(0,1)}$-orbits with the following representatives:
\begin{align}
\begin{split}
\label{eq:G1-orbits}
    & \text{\small $\mi(1:50)$, $\mi(15:0)$, $\mi(1:250)$,  $\mi(15:20)$, $\mi(12:50)$, $\mi(125:0)$} , \\
    & \text{\small $\mi(1:25)$, $\mi(12:5)$, $\mi(12:35)$, $\mi(123:5)$, $\mi(1:235)$} .
\end{split}
\end{align}
Similarly, in $\mgs\setminus (\ds^{(0,2)}\cup\irr^{(0,2)})$ there are 110 elements, and they can be organized into 21 different $\stab{(0,2)}$-orbits with the following representatives:
\begin{align}
\begin{split}
\label{eq:G2-orbits}
    & \text{\small $\mi(4:50)$, $\mi(45:0)$, $\mi(150:4)$,  $\mi(145:0)$, $\mi(1:250)$}, \\
    & \text{\small $\mi(1:234)$, $\mi(1:450)$, $\mi(1:45)$, $\mi(15:4)$, $\mi(1:50)$, $\mi(15:0)$, $\mi(14:5)$, $\mi(1:245)$}, \\
    & \text{\small $\mi(1:25)$, $\mi(125:4)$, $\mi(12:5)$, $\mi(125:0)$, $\mi(124:5)$, $\mi(1:235)$, $\mi(1:23)$, $\mi(123:5)$} .
\end{split}
\end{align}
One can verify that for any MI instance $\mi$ in the first row of \eqref{eq:G1-orbits}, or any MI instance $\mi'$ in the first row of \eqref{eq:G2-orbits}, we have
\begin{align}
    \begin{split}
    & \ldc\left[\ds^{(0,1)}\cup\,\left\{\mi\right\}\right]\cap\; \irr^{(0,1)}\neq\varnothing \\
    & \ldc\left[\ds^{(0,2)}\cup\,\left\{\mi'\right\}\right]\cap\; \irr^{(0,2)} \neq\varnothing.
    \end{split}
\end{align}
Therefore, Algorithm~\ref{alg:update} respectively adds to $\irr^{(0,1)}$ and $\irr^{(0,2)}$ all the elements belonging to the orbits of the MI instances in the first rows of \eqref{eq:G1-orbits} and \eqref{eq:G2-orbits}. The resulting triplets (with $\ds^{(0,1)}$ and $\ds^{(0,2)}$, and therefore $\stab{(0,1)}$ and $\stab{(0,2)}$, unchanged) automatically satisfy (iii) in \Cref{cond:triplet_conditions} by construction.

The next step of the main subroutine takes into account the possibility of relevant down-sets that include $\ds^{(0)}$ but no element of $\mathcal{M}^{(0)}$ (see line {\scriptsize {\bf 5}}  in Algorithm~\ref{alg:main_subroutine}), thereby generating a third triplet $(\ds^{(0,3)},\irr^{(0,3)},\stab{(0,3)})$, where 
\begin{align}
    \ds^{(0,3)} & =\ds^{(0)} \nonumber\\
    \irr^{(0,3)} & =\irr^{(0)} \cup\mathcal{M}^{(0)} \nonumber\\
    \stab{(0,3)} & =\stab{(0)}.
\end{align}

Having constructed these triplets, we now need to verify that they satisfy (iv) in \Cref{cond:triplet_conditions}. For the first two triplets, we leave it as an exercise for the reader to check that this is indeed the case. The last triplet, on the other hand, fails this condition, and we can immediately derive from it a ``candidate'' D-ER from the simultaneous saturation of all inequalities corresponding to the elements of $\mgs\setminus\irr^{(0,3)}$. This is a 1-dimensional KC-subspace generated by the entropy vector
\begin{equation}
    \vec{\ent}=\{1, 1, 1, 1, 1; 2, 2, 2, 2, 2, 2, 2, 2, 2, 2; 3, 3, 3, 3, 3, 3, 3, 3, 3, 3; 2, 2, 2, 2, 2; 1\} .
\end{equation}
Since $\vec{\ent}$ satisfies all instances of SA, it follows that it is an element of $\erkc^5$. A schematic summary of the further iterations of the main subroutine for the rest of the computation of $\erkc^5$ is shown in \Cref{fig:N5-computation}.

\section{Results for $\N=6$}
\label{sec:results}

Having exemplified the workings of our algorithm for the $\N=5$ case, we now present the results of the algorithm applied to finding the KC-ERs of SAC$_6$, primarily focusing on the ones that satisfy SSA.\footnote{\,Further details are presented in \cite{rays-notes}, in particular their adjacency relation and their characterization in terms of the $\beta$-sets of \cite{beta-sets}. Our heuristic procedure to obtain graph model realizations for most of them is described in \cite{graph-constructions}.} The algorithm produced 220 orbits of genuine $6$-party ERs in $\erkc^6$, and upon checking SSA, we find that 12 of these in fact violate at least one instance of SSA (cf.\ \S\ref{subsec:NonSSAERs}).\footnote{\,Unlike the $\N \leq 5$ computations, we find that for $\N=6$ there in fact even exist 1-dimensional KC-subspaces which are not extreme rays of SAC$_6$ because they violate at least one instance of SA. An example of such a subspace is that generated by {\tiny $\vec{\ent}=$\{4, 2, 3, 2, 2, 3; 6, 7, 6, 6, 7, 5, 4, 4, 5, 5, 5, 6, 4, 5, 5; 7, 8,
8, 9, 9, 9, 8, 8, 9, 9, 7, 7, 8, 6, 7, 5, 7, 8, 8, 7; 9, 9, 8, 10, 7,
9, 9, 8, 6, 9, 9, 10, 8, 11, 10; 7, 6, 6, 7, 6, 8; 4\}}.}

A natural first step towards classifying the orbits of $\erssa^6$ is to determine which ERs violate some \textit{holographic entropy inequality} (HEI). However, since the complete set of $\N=6$ HEIs is unknown, in principle this only specifies a subset of $\erssa^6$ which includes, but not necessarily coincides with, $\erh^6$. Using all the hitherto found HEIs (of which there are currently $1,\!877$, as 
detailed in \cite{Hernandez-Cuenca:2023iqh}), we find that there are 52 orbits which  violate at least one instance of these inequalities; these are discussed in \S\ref{subsec:HEIviolERs}.

The remaining 156 orbits are candidates for $\erh$, and in order to classify them as such, it suffices to find a holographic graph model which realizes the given entropy vector. In \S\ref{subsec:graphERs}, we summarize the result of this endeavor, presenting graphs for 150 ERs. This leaves only 6 unclassified orbits, which we dub ``mystery'' ERs, and we further comment on them in \S\ref{subsec:mysteryERs}.

\begin{table}[htbp] 
\begin{center}
\includegraphics[width=\textwidth]{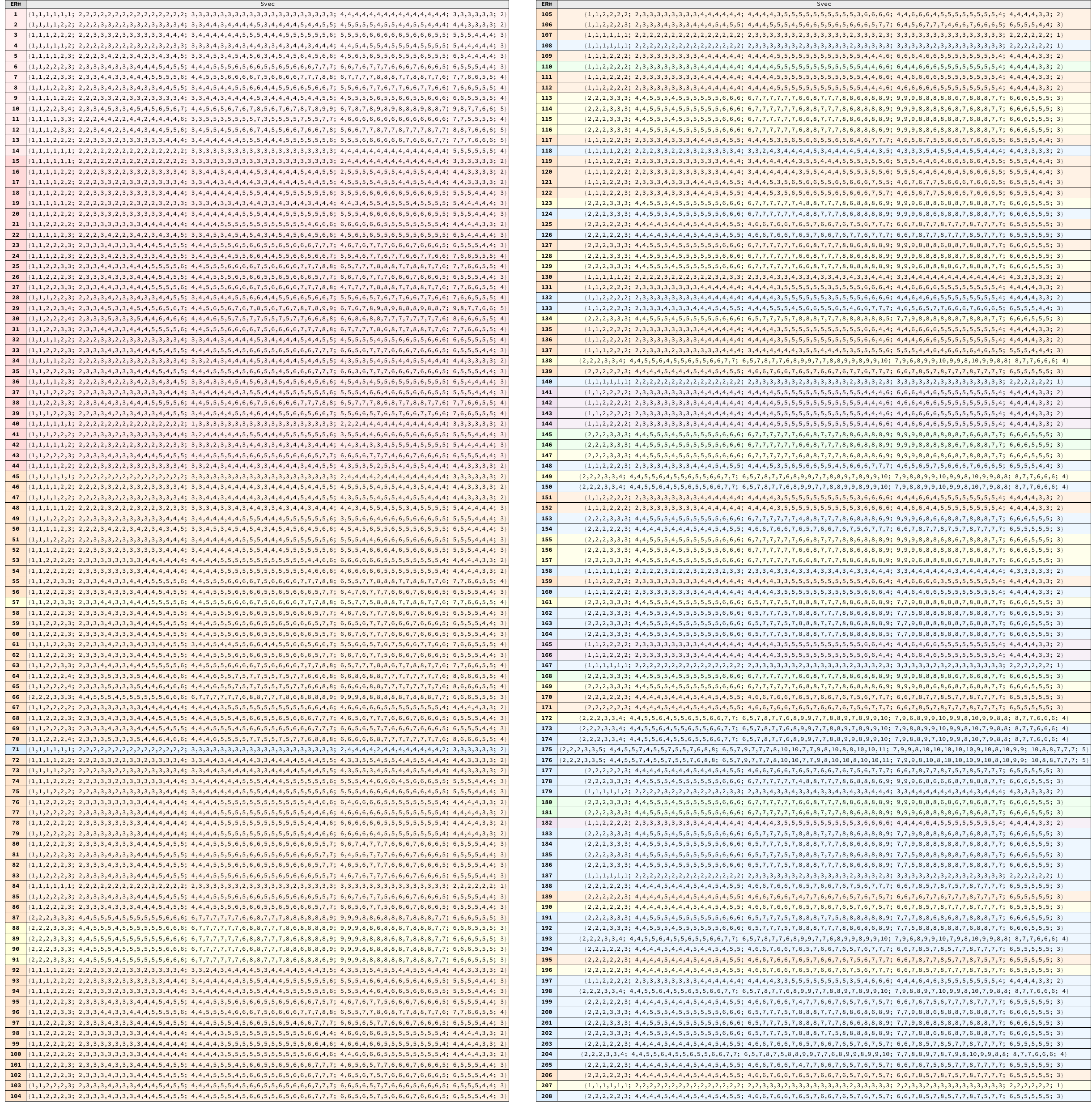}
\caption{
    The entropy vector of the representative instance $\vec{\ent}_s$, for all orbits of ERs in $\erssa^6$, split into two columns.
    The {\bf rows} list the ERs, labeled by the index $s$, and color-coded by type: Warm colors correspond to ERs realized by holographic graph models, further subdivided into star graphs (pink), other simple tree graphs (red), other ERs realizable by graphs which were previously known (orange), and new ERs which are realizable by graphs (yellow). Cold colors correspond to HEI-violating ERs, further subdivided into those ERs that violate MMI (blue) and those ERs that only violate HEI for more than three parties (purple). Finally, the color green corresponds to mystery ERs. The components of these vectors are ordered as in previous works, first by cardinality of $\J$, and then, within each group, by lexicographical order. For instance, the entropies of individual parties comprise the first six entries, as well as the last one (corresponding to the purifier).
}
\label{tab:Svectors}
\end{center}
\end{table} 

In \Cref{tab:Svectors}, we present the entropy vectors of a representative instance from every orbit in $\erssa^6$ (this data can also be found in the supplementary materials), where we did not include the lifts of the elements of $\erssa^6$ for $\N< 6$.
Following the usual convention, the components of each vector are arranged by the cardinality of the subsystems $\J$ (separated by semicolon for ease of identification), and within each group they are arranged in lexicographic order.
For convenience, we also color-code each ER by its type: Warm colors (pink, red, orange, yellow) represent holographic ERs (those definitely in $\erh$), cold colors (blue, violet) represent ERs that violate at least one HEI (those definitely in $\erssa \setminus \erh$), and green represents the remaining mystery ERs.  This color scheme is further subdivided as follows.  Pink ERs are realized by star graphs, red ones by other simple\footnote{\,By \emph{simple} tree graphs, we mean graphs which have no cycles and have one boundary vertex per party.
}
tree graphs, and orange and yellow ones by more complicated graphs. The orange ones were known previously \cite{hecdata}, while the yellow ones are new, illustrating the power of our construction. Notice that since all  ERs of the SAC which can be realized by graph models are simultaneously ERs of the HEC, the yellow ERs yield previously unknown ERs of the $\N=6$ HEC. Finally, among the ERs that violate some HEI, we distinguish the ones which violate MMI by blue and the ones which only violate some higher $\N=5,6$ HEI by purple.

Labeling each orbit in $\erssa^6$ by an index $s\in[208]$,
the coarse breakdown of holographic versus non-holographic ER orbits is given by the following three categories.
\begin{itemize} 
\item 
The {\bf 52}  ``non-holographic'' ERs that violate some HEI, and are therefore in $\erssa \setminus\erh$, are generated by the entropy vectors $\vec{\ent}_s$, with $s$ in
\{{\footnotesize 71, 108, 118, 124, 126, 132, 133, 140, 141, 142, 143, 144, 148, 150, 
153, 154, 158, 160, 162, 163, 164, 165, 166, 167, 173, 174, 175, 176, 
177, 178, 179, 182, 183, 184, 185, 186, 187, 188, 191, 192, 193, 194, 
197, 198, 199, 200, 201, 202, 203, 204, 205, 208}\}.
We discuss these further in \S\ref{subsec:HEIviolERs}.
\item 
The {\bf 150} ERs that can be realized by holographic graph models, and are therefore in $\erh$, are generated by the entropy vectors $\vec{\ent}_s$, with $s$ in 
\{{\footnotesize 1, 2, 3, 4, 5, 6, 7, 8, 9, 10, 11, 12, 13, 14, 15, 16, 17, 18, 19, 
20, 21, 22, 23, 24, 25, 26, 27, 28, 29, 30, 31, 32, 33, 34, 35, 36, 
37, 38, 39, 40, 41, 42, 43, 44, 45, 46, 47, 48, 49, 50, 51, 52, 53, 
54, 55, 56, 57, 58, 59, 60, 61, 62, 63, 64, 65, 66, 67, 68, 69, 70, 
72, 73, 74, 75, 76, 77, 78, 79, 80, 81, 82, 83, 84, 85, 86, 87, 88, 
89, 90, 91, 92, 93, 94, 95, 96, 97, 98, 99, 100, 101, 102, 103, 104, 
105, 106, 107, 109, 111, 112, 113, 114, 115, 116, 117, 119, 120, 121, 
122, 123, 125, 127, 128, 129, 130, 131, 134, 135, 136, 137, 138, 139, 
147, 149, 151, 152, 155, 156, 157, 159, 161, 169, 170, 171, 172, 189, 
190, 195, 196, 206, 207}\}.
We discuss these further in \S\ref{subsec:graphERs}.
\item 
Finally, the remaining {\bf 6}  ``mystery'' ERs that do not violate any known HEI, but thus far do not have a holographic graph model construction, are generated by the entropy vectors $\vec{\ent}_s$, with $s$ in
\{{\footnotesize 110, 145, 146, 168, 180, 181}\}.
We remark on these in \S\ref{subsec:mysteryERs}.
\end{itemize} 

\begin{table}[htbp] 
\begin{center}
\includegraphics[width=\textwidth]{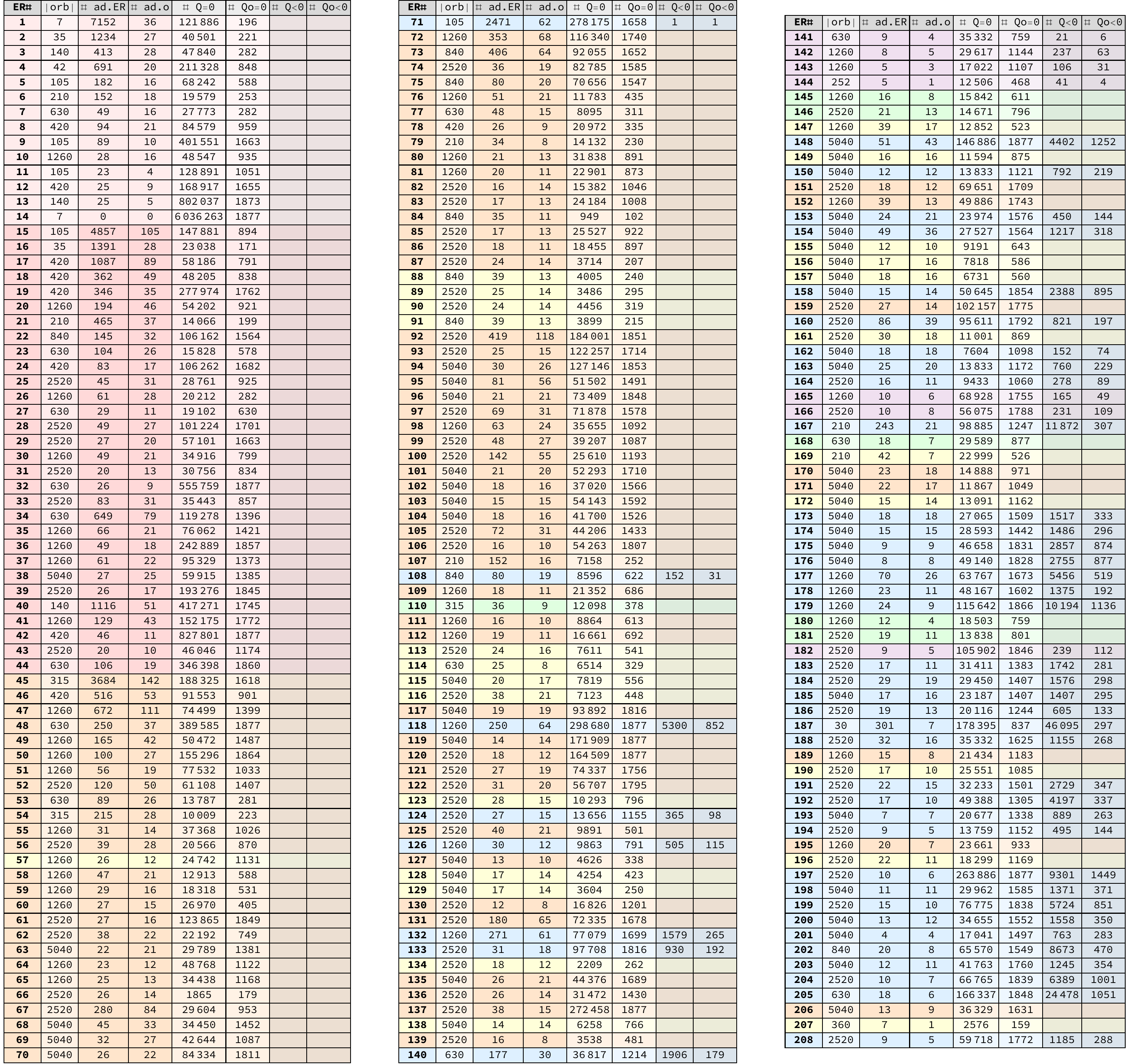}
\caption{
    Fundamental data about all orbits of ERs in $\erssa^6$, split into three columns,
    with the same color-coding as in \Cref{tab:Svectors}.
    The {\bf columns} specify several attributes, visually separated by shading: the size of the orbit (lighter), the number of saturating HEIs and HEI orbits (normal), and the number of violating HEIs and HEI orbits (darker).
}
\label{tab:sumtab}
\end{center}
\end{table} 

Before focusing on these categories separately in the subsections below, in \Cref{tab:sumtab} we present several further attributes of the ERs in $\erssa^6$.  Maintaining the same color-coding as in \Cref{tab:Svectors}, we list for each orbit: 
\begin{itemize} 
\item The number of ERs in the orbit (``$|\text{orb}|$''), which ranges from 7 (for two orbits realizable by star graphs) to $7!=5,\!040$ (for 45 orbits).
\item The total number of individual HEIs (from the $1,\!877$ orbits listed in \cite{Hernandez-Cuenca:2023iqh}, including the non-redundant instances of SA) which are saturated by the given ER (``\#Q=0''), and the number of orbits of HEIs containing a saturating instance  (``\#Qo=0'').  
The former are different for almost all orbits of ERs, and range over three orders of magnitude.
For the latter, the smallest number of HEI orbits containing a saturated instance is 102, while for 10 ERs each of the $1,\!877$ HEI orbits contains a saturated instance.
\item The total number of individual HEIs which are violated by the given ER (``\#Q$<0$''), and the number of HEI orbits containing a violated instance  (``\#Qo$<0$'').  These are of course non-zero only for the non-holographic ERs, so for the holographic and mystery ones we leave the entry blank to avoid clutter.  The largest number of violated instances is $46,\!095$, whereas the largest number of HEI orbits containing violated instances is $1,\!449$.  In \S\ref{subsec:HEIviolERs}, we will provide further details by distinguishing the types of HEIs.
\end{itemize} 

\subsection{The approximation of SSA by KC is not exact}
\label{subsec:NonSSAERs}

\begin{table}[htbp] 
\begin{center}
\includegraphics[width=\textwidth]{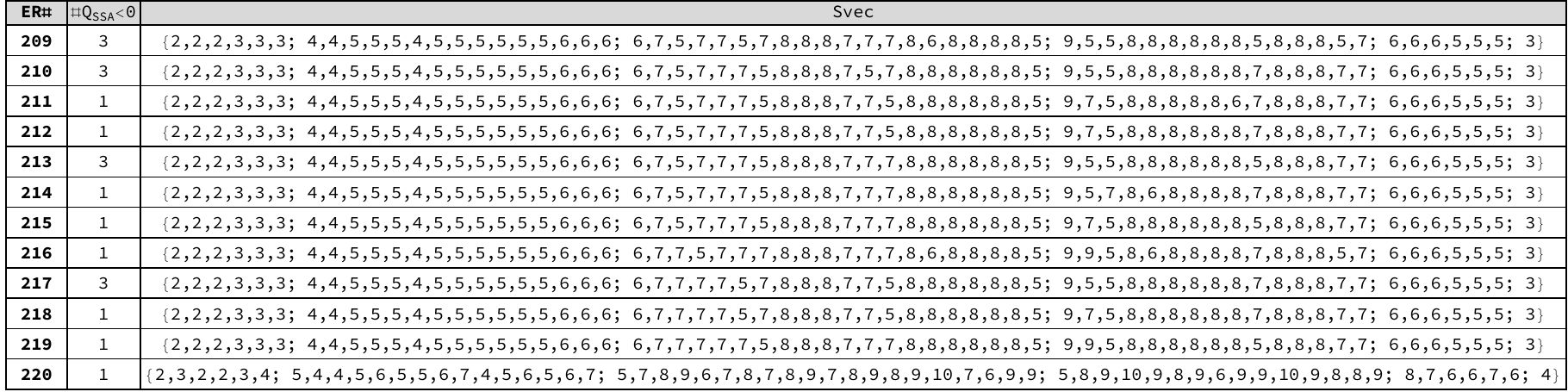}
\caption{
    The entropy vectors for a choice of representatives for all orbits of ERs in $\erkc^6 \setminus \erssa^6$.
    The second column indicates how many instances of SSA are violated by each given ER.
}
\label{tab:SvectorsSSAviol}
\end{center}
\end{table} 

As anticipated above, some ERs in $\erkc^6$ violate SSA, which means that in contrast to the situation for smaller values of $\N$ studied in \cite{He:2022bmi}, for $\N \ge 6$ there is in fact a gap between $\erkc$ and $\erssa$. This means that the approximation of SSA by KC is not exact even for extreme rays. In particular, there are 12 orbits of ERs in $\erkc^6 \setminus \erssa^6$, and in \Cref{tab:SvectorsSSAviol}, we list the entropy vector of a representative for each orbit (using the same conventions as in \Cref{tab:Svectors}). The table also shows the number of SSA instances which are violated by the given ER. Notice that there are very few instances that are violated: In four cases there are three, and in the remaining cases there is only a single one. While these ERs are not particularly interesting for any application, a more careful exploration of the combinatorial structure of their corresponding down-sets in the MI-poset might be useful to refine KC to a stronger condition that better approximates SSA and possibly improves the efficiency of the algorithm.

\begin{table}
\begin{center}
\includegraphics[width=4in]{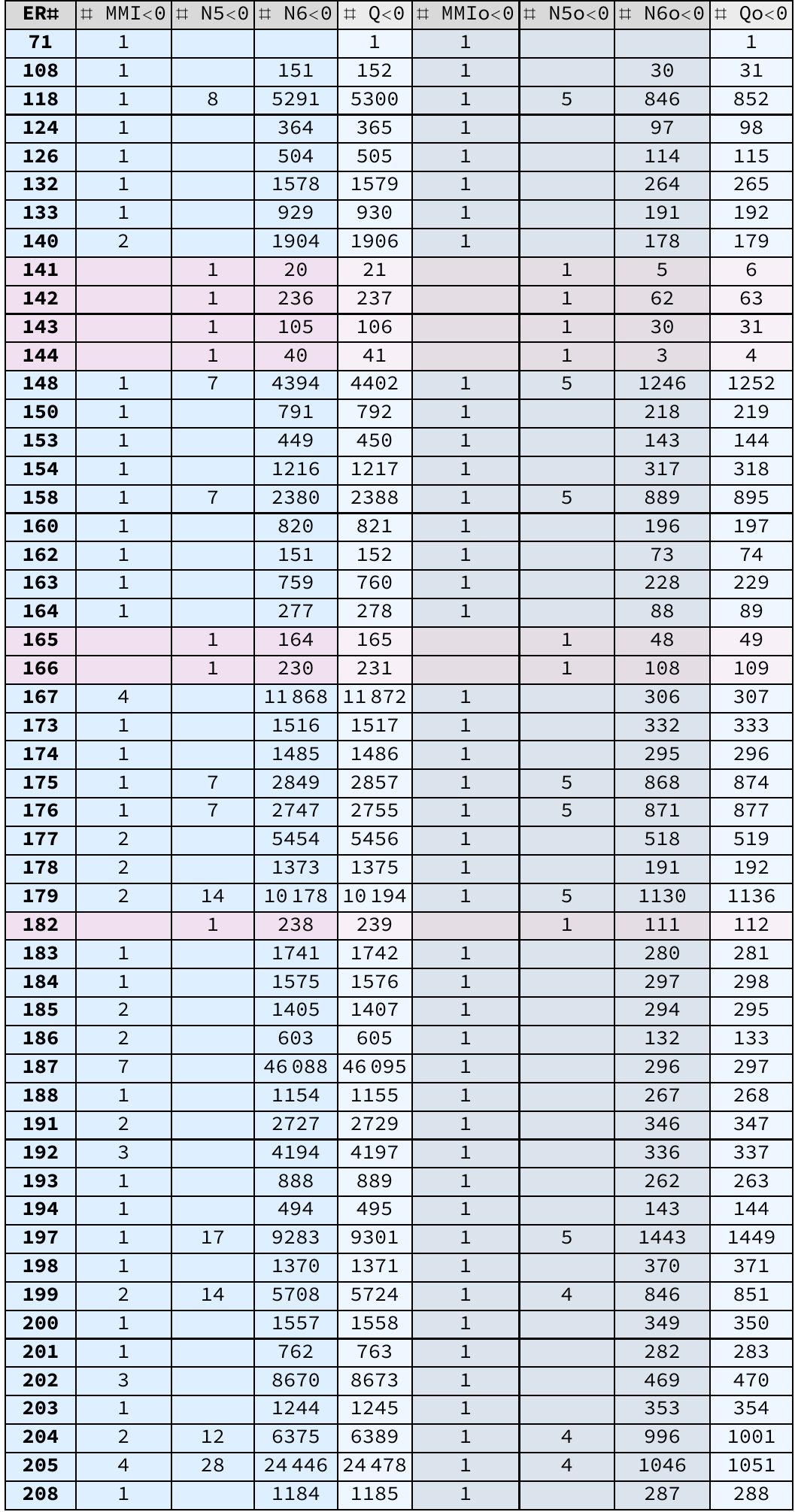}
\caption{
    A summary of the orbits of ERs in $\erssa^6$ that violate at least one known HEI. The {\bf rows} list the ER label $s$, with the same color-coding as in \Cref{tab:Svectors}.
    The {\bf columns} specify the total number of HEIs which are violated by the given ER, broken down into MMI, $\N=5$ HEIs, $\N=6$ HEIs (normal color) and the total number of HEI instances (lighter color), as well as the corresponding number of orbits admitting a violated instance (darker color) and the total (lighter).
}
\label{tab:nonhol}
\end{center}
\end{table} 

\subsection{Extreme rays that violate holographic entropy inequalities}
\label{subsec:HEIviolERs}

As we mentioned above, 52 of the 208 orbits of ERs in $\erssa^6$ violate some known HEI, and it is natural to further categorize these inequalities into three families: \textit{monogamy of mutual information} (MMI) \cite{Hayden:2011ag} (including its lifts), $\N=5$ HEIs \cite{Bao:2015bfa,Cuenca:2019uzx}, and $\N=6$ HEIs \cite{Hernandez-Cuenca:2023iqh} (the latter list is incomplete, so for this case the number of violations is merely a lower bound).  Regarding the instances of MMI, up to permutations and purifications there are only three distinct possibilities for the cardinalities of the arguments of the \textit{tripartite information}\footnote{\,The tripartite information is defined as $\mi_3(\uI:\uJ:\uK)=\ent_{\uI}+\ent_{\uJ}+\ent_{\uK}-\ent_{\uI\uJ}-\ent_{\uI\uK}-\ent_{\uJ\uK}+\ent_{\uI\uJ\uK}$, and MMI is the statement that $\mi_3(\uI:\uJ:\uK)\le0$
.}: (1:1:1), (1:1:2), and (1:2:2). As it turns out, only the last type can be violated.

In \Cref{tab:nonhol}, we list the number of violations by family (MMI / $\N=5$ / $\N=6$), both for individual instances and for their orbits. For convenience, we also present the total number of violations (which has already appeared in \Cref{tab:sumtab}). There are 45 MMI-violating ER orbits, out of which only \#71 does not violate any HEI for larger $\N$. All the remaining 51 orbits instead violate some $\N=6$ HEI, and 17 of them also violate some $\N=5$ HEI. On the other hand, not a single orbit violates only HEI for $\N=6$, and only seven orbits do not violate any instance of MMI. This seems to suggest that at a given $\N$, it is sufficient to know the HEI for some values of $\N'\leq\N$ to determine which elements of $\erssa$ are in $\erh$, and it would be interesting to explore this behavior more deeply.

Although we do not have good control over $\erq$ beyond the nesting indicated by \eqref{eq:N6_R_nesting}, we have previously constructed hypergraph realizations for some of these ERs. In particular, \#118 is given by (a permutation of) the hypergraph in \cite[Fig.1]{He:2023cco} and \#71 is given by the hypergraph in \cite[Fig.1]{He:2023aif}. We stress however that we have not attempted here to realize other ERs by hypergraphs, and it is entirely possible that they are in fact all realizable. We have checked the \textit{hypergraph inequality} proven in \cite{Bao:2020mqq}, as well all instances of Ingleton inequality \cite{Ingleton} (the main inequality characterizing the entropies of stabilizer states for four parties \cite{Linden:2013kal}), and found no violation.

\subsection{Extreme rays realizable by holographic graph models}
\label{subsec:graphERs}

\begin{table}[htbp] 
\begin{center}
\includegraphics[width=\textwidth]{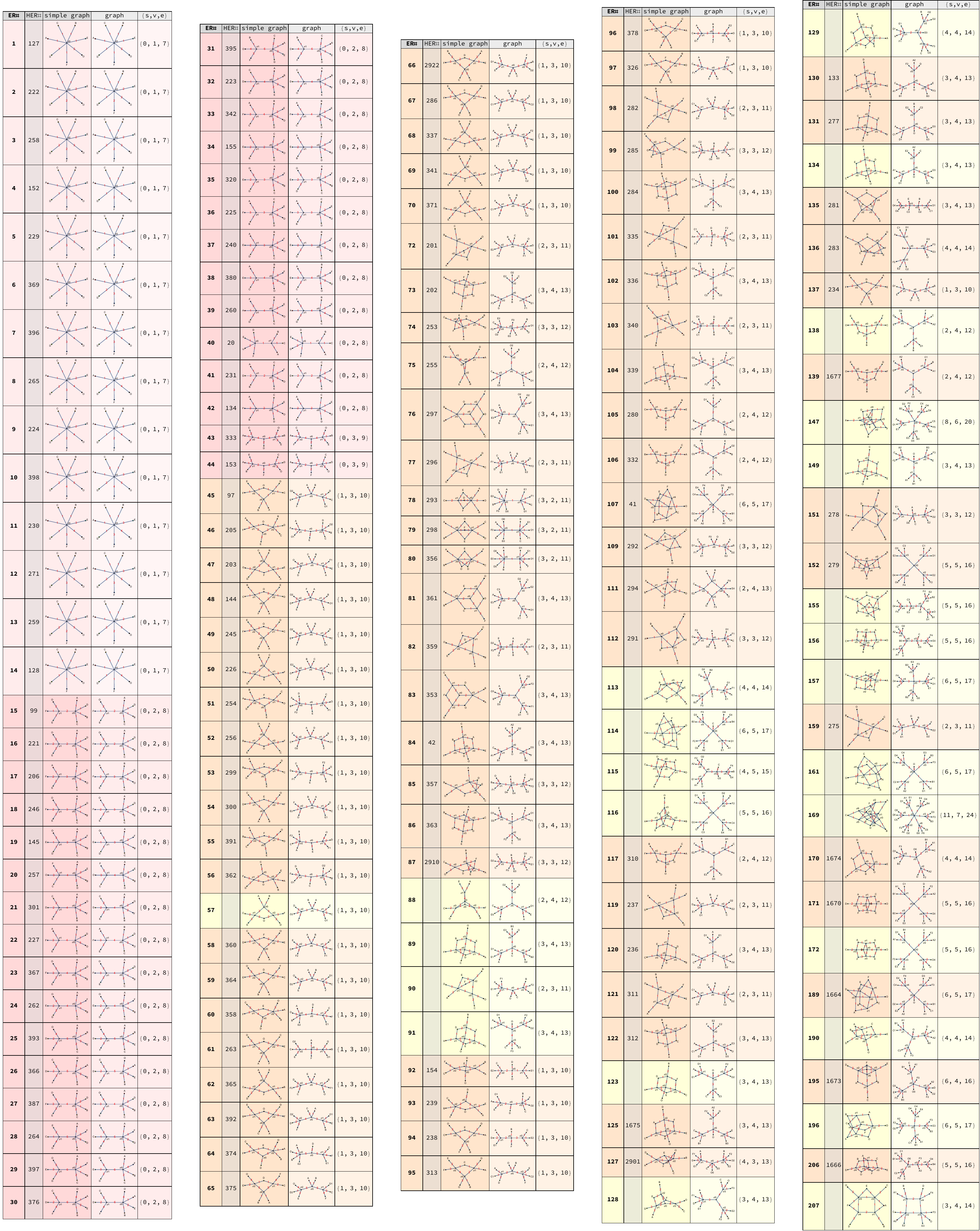}
\caption{
    The orbits of ERs in $\erssa^6$ for which we have constructed a holographic graph model. The {\bf rows} list the ER label $s$, with color-coding as in \Cref{tab:Svectors}. 
    The {\bf columns} specify: the corresponding label from \cite{hecdata}  for ERs which were previously known (darker),
    two equivalent graphs (the first with seven boundary vertices, the second with the boundary vertices split so as to avoid cycles containing boundary vertices), and the corresponding graph attributes \{s,v,e\} (specifying the \#  of splits, \# of bulk vertices, and \# of edges). To avoid confusion with the values of the edge weights, the parties $0,1,\ldots,6$ have been relabeled in the various graphs as $O,A,\ldots,F$, respectively.
}
\label{tab:graphERs}
\end{center}
\end{table} 

We now turn to the largest category, the 150 orbits of ERs which we explicitly realize by holographic graph models and therefore lie 
in $\erh$.
Out of these orbits, 125 were already known to be holographic (although we found a simpler graph realization for 81 of them), while the remaining 25 are new. We present these results in \Cref{tab:graphERs}, categorized by the same color-coding scheme as in the previous tables.  For the former set, we indicate the corresponding label from \cite{hecdata}.  We present two equivalent graphs for each ER: a simple version with one boundary vertex per party (these graphs therefore contain cycles for all but the 44 graphs corresponding to simple trees), and a ``tree-ified'' version where the boundary vertices are ``fine-grained''  so as to remove all cycles which include boundary vertices.  In agreement with the strongest form of the conjecture in \cite{Hernandez-Cuenca:2022pst}, most\footnote{\,For two ERs (\#111 and \#207), their corresponding graphs still each contains a ``bulk cycle'', and it would be interesting to see if we can alternatively realize them by tree graphs. We leave this for future exploration.
}
of these graphs are tree graphs, though they may not be simple tree graphs. However, this means that they would represent simple tree graphs in a system with $\N'>\N$ parties (for some appropriate $\N'$). In fact, for such  $\N'$, one can immediately verify that upon fine-graining, all these graphs realize ERs of SAC$_{\N'}$.
To quantify the necessary amount of fine-graining (or equivalently, the number of individual ``splits'' of boundary vertices), we indicate the graph complexity in the last column of  \Cref{tab:graphERs}.  In particular, we list three attributes, denoted as \{s,v,e\}, which respectively specify the number  of splits, the number of bulk vertices, and the total number of edges in the graph. Notice that many of the tree-ified graph models share the same underlying topology. 
Furthermore, note that the edge weights have the intriguing property that the sum of the weights of all edges incident on any given vertex is even.

Finally, we remark that some of the ERs with more complicated graphs could potentially have an even simpler graph realization. However, \cite{beta-sets}  presents a necessary condition for the realizability of an arbitrary entropy vector by a simple tree graph model.
We checked this condition for all 220 orbits in $\erkc^6$ (including the ones that violate SSA), and the only orbits that satisfy this condition are precisely \#1 to 
\#44, which are the ones we realized by simple trees. This is perhaps an indication that the condition found in \cite{beta-sets} might in fact
be sufficient as well.

\subsection{Mystery extreme rays}
\label{subsec:mysteryERs}

Finally, the six remaining ER orbits, for which we have found neither graph realizations nor any instances of violated HEIs, remain as yet unclassified. It is possible that the complexity of graphs realizing them may be much higher, but it is also possible that they may violate an $\N=6$ HEI not found in \cite{Hernandez-Cuenca:2023iqh}. Apart from the absence of both graph realizations and HEI violations, the attributes of these mystery ER orbits do not manifest any obvious differences from either the holographic ERs or the HEI-violating ERs. As we mentioned above, all the orbits in $\erssa^6$ which we were able to classify as non-holographic ER orbits violate at least one HEI for $\N\leq 5$. These six mystery ER orbits might be an exception, but based on this observation it is tempting to speculate that they are indeed holographic.

\section{Discussion}
\label{sec:discussion}

One of the main goals of this work was to present an algorithm for computing the extreme rays of a polyhedral cone when the defining inequalities have a partial order and we are only interested in the ERs corresponding to down-sets. Although we have presented this algorithm in general terms, our primary motivation was to compute the set $\erssa^6$, which is the set of ERs of the SAC$_6$ satisfying SSA, in order to test certain observations and conjectures from previous works (particularly 
\cite{Hernandez-Cuenca:2019jpv,Hernandez-Cuenca:2022pst,He:2022bmi}). Indeed, the set $\erssa^6$ has turned out to be much richer than the analogous sets for smaller $\N$. Specifically, $\erssa^6$ contains 208 orbits of genuine 6-party rays, whereas $\erssa^5$ contains only six orbits of genuine 5-party ones. We have already utilized part of the data from this computation to derive new results that have appeared in earlier publications \cite{He:2023aif,He:2023cco}. Here, we highlight some directions for future work, which we organize into two main topics: (1) considerations regarding the algorithm and potential new applications, and (2) comments about open questions specific to the elements of $\erssa^6$.

\paragraph{Generalizations and improvements of the algorithm:}

The algorithm we have presented is tailored to search for KC-ERs, but there are also reasons to compute higher-dimensional KC-faces. For example, as mentioned in the Introduction, the set of all SSA-compatible faces of the SAC$_\N$ forms a lattice where the meet operation is simply set intersection. To advance towards solving the quantum marginal independence problem, we can focus on the ``building blocks'' of this lattice, namely its meet-irreducible elements. It was shown in \cite{He:2022bmi} that even for $\N=4$, some of these elements are higher-dimensional faces. Furthermore, according to the construction in \cite{He:2023aif}, at least for $\N=4$, these higher-dimensional faces contain projections of SSA-compatible KC-ERs for some $\N' > 4$, corresponding to coarse-grainings of the parties. Computing higher-dimensional SSA-compatible faces and ERs for larger $\N$ may reveal whether this phenomenon at $\N=4$ is a coincidence, or whether it is indicative of a more general structural property of the SSA-compatible (and perhaps even realizable) faces of the SAC. While ERs are certainly meet-irreducible, higher-dimensional meet-irreducible faces might be harder to find, necessitating a generalization of our algorithm to search for higher-dimensional faces, possibly with further requirements.

Another interesting direction to explore with such a generalization of our algorithm is to search for KC-ERs that have additional intriguing properties. For example, all SSA-compatible KC-ERs for $\N \leq 6$ satisfy Ingleton's inequality \cite{Ingleton} (which is known to be satisfied by all stabilizer states \cite{Linden:2013kal}). This might lead one to speculate that they all lie inside the stabilizer entropy cone, which is unknown for $\N \geq 5$. One could then adapt our algorithm to search, at increasing values of $\N$, for SSA-compatible ERs that, for example, violate Ingleton's inequality.

Finally, since these questions are specific to faces of the SAC, it would be worthwhile to explore whether the algorithm could be reformulated using the $\beta$-sets technology introduced in \cite{beta-sets}. This formulation takes advantage of additional structural properties of SSA-compatible faces and, if incorporated into the algorithm, might help restrict the search space and improve efficiency.

\paragraph{Open questions about the SSA-compatible extreme rays of the 6-party SAC:}

The weak form of the conjecture proposed in \cite{Hernandez-Cuenca:2022pst} states that for any given $\N$, all extreme rays of the $\N$-party holographic entropy cone can be obtained from subsystem coarse-grainings of extreme rays in $\erh^{\N'}$ for some $\N' \geq \N$. A stronger form of this conjecture asserts that every extreme ray of the HEC can be realized by a graph model with tree topology (although not necessarily \textit{simple}; that is, not all boundary vertices are required to be labeled by distinct parties).\footnote{\,We refer the reader to \cite{Hernandez-Cuenca:2022pst} for the proof that the strong form implies the weak one.} One of the motivations of this work was to generate data that could be used to corroborate these conjectures.

Out of the 208 orbits of extreme rays (ERs) in $\erssa^6$, 52 violate at least one HEI and can therefore be disregarded for this purpose. For the remaining 156 orbits, we were able to construct tree realizations for 148 of them, supporting the strong form of the conjecture. Future attempts to find a counterexample should focus on the remaining eight orbits. Of these, six remain unresolved (see \S\ref{subsec:mysteryERs}), as we were unable to determine whether they can be realized by any graph model. It might be worthwhile to explore whether they can be used to propose new HEIs, or if a more focused search could provide a graph realization. The other two orbits (\#111 and \#207) are definitively holographic, as we did find a graph realization for them, but we were unable to find a tree graph realization. If the strong form of the conjecture is indeed false, perhaps one of these ERs is a counterexample. Proving this, however, would require showing that such a tree does not exist for any possible fine-graining to $\N' \geq \N$ parties, which might necessitate the development of new techniques.

Finally, as previously mentioned, we have not attempted to construct hypergraph models to realize the ERs in \S\ref{subsec:HEIviolERs} and \S\ref{subsec:mysteryERs}. However, this might be a useful approach to explore whether all these ERs can be realized by stabilizer states and if any of the inclusions in $\erhg \subseteq \ersta \subseteq \erq$ are strict. Of course, obtaining a definitive answer to this question might require more powerful techniques to explore the realizability of the elements of $\erssa$ beyond stabilizer states, and to determine whether even the inclusion $\erq \subseteq \erssa$ is strict. This problem, however, is far beyond the scope of this work.

\acknowledgments

We would like to especially thank David Simmons-Duffin for generously providing us supercomputing time in the early iterations of the algorithm. We would also like to thank Sergio Hern\'andez-Cuenca and Sridip Pal for useful discussions.  

T.H. is supported by the Heising-Simons Foundation ``Observational Signatures of Quantum Gravity'' collaboration grant 2021 -2817, the U.S. Department of Energy, Office of Science, Office of High Energy Physics, under Award No. DE-SC0011632, and the Walter Burke Institute for Theoretical Physics.  V.H. has been supported in part by the U.S. Department of Energy grant DE-SC0009999 and by funds from the University of California.   M.R. acknowledges support from UK Research and Innovation (UKRI) under the UK government’s Horizon Europe guarantee (EP/Y00468X/1), and by funds from the University of California during the early stages of this work.

V.H. acknowledges the hospitality of the Kavli Institute for Theoretical Physics (KITP), the Aspen Center for Physics, and the Tsinghua Southeast Asia Center, where part of this work was done. V.H. and M.R. would like to thank the Yukawa Institute for Theoretical Physics for hospitality during the program ``Quantum Information, Quantum Matter and Quantum Gravity'' during the early stages of this work.

All data is available within the paper, as an ancillary file in the arXiv, and on GitHub \cite{github}. The code used to generate and analyze this data is a combination of a Wolfram Mathematica implementation of the algorithm outlined in this work, and the freely available software Normaliz \cite{Normaliz:301}.

For the purpose of open access, the authors have applied a Creative Commons Attribution (CC BY) licence to any Author Accepted Manuscript version arising from this submission.

\appendix

\section{Comments on the relation between closure operators and symmetries}
\label{appendix}

Given a set $\mgs$, consider the symmetric group on $\mgs$, denoted by $\text{Sym}(|\mgs|)$. This is the group of all possible permutations of $\mgs$. The action of $\text{Sym}(|\mgs|)$ on $\mgs$ can naturally be lifted to the power-set $2^{\mgs}$ of $\mgs$ by defining the element-wise action, i.e., given an arbitrary subset $X=\{\E_1,\E_2,\ldots\}\subseteq\mgs$, we define
\begin{equation}
    g(X)=\{g(\E_1),g(\E_2),\ldots\}.
\end{equation}

Consider now an arbitrary closure operator \textit{on} $\mgs$ denoted $\cl:2^{\mgs}\rightarrow 2^{\mgs}$, which is a map from the power-set of $\mgs$ to itself with the defining properties outlined in \S\ref{subsec:closures}. The power-set $2^{\mgs}$ admits a lattice structure where the meet is given by the intersection and the join by the union. Furthermore, the subset of the \textit{set} $2^{\mgs}$ whose elements are closed with respect to $\cl$ (the closed subsets of $\mgs$) also admits a lattice structure, which we denote as $\lat{\cl}$. However, this is \textit{not a sublattice} of the power-set lattice $2^{\mgs}$, since while the meet is still given by the intersection, the join is in general not given by the union. Specifically, we have in general
\begin{align}
    & \cl(X \cap Y) = \cl(X) \cap \cl(Y), \nonumber\\
    & \cl(X \cup Y) = \cl(\cl(X) \cup \cl(Y)) \neq \cl(X) \cup \cl(Y).
\end{align}
Notice that the second expression clarifies why it is typically harder to understand the join (see for example \cite{He:2022bmi}), since given two closed sets $X$ and $Y$, computing their join still involves computing the closure. In other words, if we view these lattices as algebraic structures, $\cl$ is \textit{not} in general a lattice homomorphism between $2^{\mgs}$ and $\lat{\cl}$, because it does not respect the join.

Finally, notice that since $\mgs$ is a finite set, $2^{\mgs}$ and (the set) $\lat{\cl}$ are also finite, implying that the two corresponding lattices are \textit{complete} \cite{Davey_Priestley_2002}. This means that not only does the meet and join exist for any \textit{two} elements (which is the defining property of a lattice), but that they also exist for \textit{arbitrary collections} of elements.

\paragraph{When do the group actions and closure commute?} Having specified a group action on $2^{\mgs}$ and introduced a closure operator on $\mgs$, we now want to explore the relation between these two maps, and in particular, we want to know under what conditions they commute. Specifically, having fixed the choice of a closure operator $\cl$, we want to know which $g\in\text{Sym}(|\mgs|)$ satisfies
\begin{equation}
\label{eq:commute}
    \cl(g(X)) = g(\cl(X)), \qquad \forall\, X\in 2^{\mgs} .
\end{equation}
Notice that in the expression above, the left-hand side is a closed set, while in general the right-hand side need not be if we do not impose any restriction on the choice of $\cl$ and $g$.\footnote{\,For example, in \Cref{fig:3d_cone} if $g=(23)$ (in cycles notation) and $X=\{1,2\}$, then
\begin{equation*}
    \fc(g(\{1,2\}))=\fc(\{1,3\})=\{1,2,3,4\}\neq g(\fc(\{1,2\})) = g(\{1,2\}) = \{1,3\}.
\end{equation*}
}
It follows that a basic necessary condition for \eqref{eq:commute} to hold is that $g$ must be chosen such that the \textit{set} $\lat{\cl}$ is closed with respect to $g$, i.e., $g$ maps closed sets to closed sets. We can then consider the largest subgroup $G$ of $\text{Sym}(|\mgs|)$ such that each $g\in G$ has this property.

To proceed further, it is useful to consider the following general property of a closure operator. For any $X\subseteq\mgs$, the closure of $X$ is equal to the intersection of all closed sets that contain $X$, namely
\begin{equation}
\label{eq:closure_from_int}
    \cl(X)=\bigcap\; \{Y,\, Y\in \lat{\cl}\; \text{and}\; X\subseteq Y\}.
\end{equation}
Furthermore, the action of $G$ on $2^{\mgs}$ is a lattice homomorphism, i.e., for all $g$ and $X,Y\in 2^{\mgs}$, we have
\begin{align}
    & g(X\cap Y) = g(X) \cap g(Y)    \nonumber\\
    & g(X\cup Y) = g(X) \cup g(Y),
\end{align}
and since $2^{\mgs}$ is a complete lattice, analogous relations hold for arbitrary collections of elements of $2^{\mgs}$. From this property and \eqref{eq:closure_from_int}, it follows that
\begin{equation}
    g(\cl(X))=\bigcap\; \{g(Y),\, Y\in \lat{\cl}\; \text{and}\; X\subseteq Y\}.
\end{equation}
It is convenient to rewrite the family of sets which appears on the right-hand side above as
\begin{equation}
     \{g(Y),\, Y\in \lat{\cl}\; \text{and}\; X\subseteq Y\}=g(X^\uparrow \cap  \lat{\cl} ),
\end{equation}
where we have implicitly lifted the action of $g$ to $2^{2^{\mgs}}$ (the set of all possible collections of subsets of $\mgs$), and $X^\uparrow$ is the up-set of $X$ in $2^{\mgs}$ (which consists of all subsets of $\mgs$ containing $X$). Similar to before, this action is an isomorphism of the lattice $2^{2^{\mgs}}$ and commutes with intersection, implying
\begin{equation}
    g(X^\uparrow \cap \lat{\cl} )= g(X^\uparrow) \,\cap\, g(\lat{\cl}).
\end{equation}
Since we have assumed that $\lat{\cl}$ is closed under $g$, it follows that $g(\lat{\cl})\subseteq \lat{\cl}$. Further using the fact that $g$ is an isomorphism of $2^{2^{\mgs}}$, this inclusion is saturated, and we have
\begin{equation}
   g(X^\uparrow)\,\cap\, g(\lat{\cl}) = g(X^\uparrow)\,\cap\,\lat{\cl}.
\end{equation}
Finally, since any lattice isomorphism is also an order isomorphism, it follows
\begin{align}
     g(X^\uparrow)\,\cap\,\lat{\cl} & =  g(X)^\uparrow\,\cap\,\lat{\cl} = \{Y,\, Y\in \lat{\cl}\; \text{and}\; g(X)\subseteq Y\} ,
\end{align}
from which we obtain the desired result
\begin{align}
    g(\cl(X)) & =\bigcap\; \{Y,\, Y\in \lat{\cl}\; \text{and}\; g(X)\subseteq Y\} = \cl(g(X)).
\end{align}

\paragraph{Is the action of $G$ on $\lat{\cl}$ a lattice homomorphism?} We have just proved that with only the assumption that $\lat{\cl}$ is closed under $g$, it already follows that $\cl$ and $g$ commute. However, one of our assumptions in \S\ref{subsec:set-up} was that $g$ also preserves the lattice structure of $\lat{\cl}$, i.e., that $g$ is a lattice \textit{automorphism}. In general, this requirement is stronger than just demanding the closure of the set $\lat{\cl}$ with respect to the group action. Nevertheless, it is natural to ask for the special case where the lattice is $\lat{\cl}$, whether the closure requirement of $\lat{\cl}$ under $g$ automatically implies that $g$ is a lattice automorphism.

Consider an arbitrary collection of elements $\{X_i\}_{i\in I}$ in $\lat{\cl}$. Since the meet in $\lat{\cl}$ is given by intersection, and $g$ is an automorphism of $2^{\mgs}$,
\begin{equation}\label{eq:app-meet}
    g\bigg(\bigwedge_{i\in I}X_i\bigg) = g\bigg(\bigcap_{i\in I}X_i\bigg) = \bigcap_{i\in I}g(X_i).
\end{equation}
Under the assumption that $\lat{\cl}$ is closed under $g$, each $g(X_i)$ is also an element of $\lat{\cl}$, implying that the right-hand side of the above equation can be rewritten in terms of the meet of $\lat{\cl}$. Thus, we have
\begin{equation}
    g\bigg(\bigwedge_{i\in I}X_i\bigg) = \bigwedge_{i\in I}g(X_i),
\end{equation}
so that the meet is preserved. Indeed, this result is expected since the meet in the two lattices is the same operation. We now turn to checking whether the join is also preserved by $g$.

Since $\lat{\cl}$ is a complete lattice, we can write the join in terms of the meet to be
\begin{equation}
    \bigvee_{i\in I}X_i=\bigwedge\{Y,\, X_i\preceq Y\;\; \forall i \in I\}.
\end{equation}
Notice that this expression is structurally very similar to \eqref{eq:closure_from_int}, and by almost identical reasoning it follows that the join is preserved. In conclusion, since the lattice that we are considering inside $2^{\mgs}$ is not an arbitrary lattice, but rather the lattice of closed sets of a closure operator, even if this lattice is not a sublattice of $2^{\mgs}$, it is sufficient to assume that it is closed under the group action to obtain a lattice automorphism.

\bibliography{entropy-cone-bib}{}
\bibliographystyle{utphys}

\end{document}